\documentclass[10pt]{article}

\usepackage[left=1in,top=1in,right=1in,bottom=1in,letterpaper]{geometry}

\usepackage{amsmath,amssymb,amsfonts}
\usepackage{amsthm}
\usepackage{url}

\usepackage[usenames]{color}
\usepackage[color,notcite,final]{showkeys}
\definecolor{refkey}{gray}{0.8}
\definecolor{labelkey}{gray}{0.8}

    \pdfinfo{/Title      Augmented L1 and Nuclear-Norm Models
                /Author     Ming-Jun Lai and Wotao Yin
               }
       \usepackage[colorlinks=true, pdfstartview=FitV, linkcolor=blue, citecolor=blue, urlcolor=blue]{hyperref}
       \usepackage{graphicx}

\usepackage{marvosym}

\usepackage[normalem]{ulem} 



\newcommand{\va}{{\mathbf{a}}}
\newcommand{\vb}{{\mathbf{b}}}
\newcommand{\vc}{{\mathbf{c}}}
\newcommand{\vd}{{\mathbf{d}}}

\newcommand{\vh}{{\mathbf{h}}}

\newcommand{\vn}{{\mathbf{n}}}

\newcommand{\vp}{{\mathbf{p}}}

\newcommand{\vs}{{\mathbf{s}}}
\newcommand{\vt}{{\mathbf{t}}}
\newcommand{\vu}{{\mathbf{u}}}
\newcommand{\vv}{{\mathbf{v}}}

\newcommand{\vx}{{\mathbf{x}}}
\newcommand{\vy}{{\mathbf{y}}}
\newcommand{\vz}{{\mathbf{z}}}

\newcommand{\vA}{{\mathbf{A}}}
\newcommand{\vB}{{\mathbf{B}}}
\newcommand{\vC}{{\mathbf{C}}}
\newcommand{\vD}{{\mathbf{D}}}

\newcommand{\vH}{{\mathbf{H}}}

\newcommand{\vL}{{\mathbf{L}}}
\newcommand{\vM}{{\mathbf{M}}}

\newcommand{\vS}{{\mathbf{S}}}

\newcommand{\vU}{{\mathbf{U}}}
\newcommand{\vV}{{\mathbf{V}}}

\newcommand{\vX}{{\mathbf{X}}}
\newcommand{\vY}{{\mathbf{Y}}}

\newcommand{\cA}{{\mathcal{A}}}

\newcommand{\cM}{{\mathcal{M}}}

\newcommand{\cS}{{\mathcal{S}}}

\newcommand{\cV}{{\mathcal{V}}}

\newcommand{\cY}{{\mathcal{Y}}}
\newcommand{\cZ}{{\mathcal{Z}}}



\newcommand{\RR}{\mathbb{R}}

\newcommand{\sign}{\mathrm{sign}}
\newcommand{\vzero}{\mathbf{0}}
\newcommand{\vone}{{\mathbf{1}}}

\newcommand{\supp}{{\mathrm{supp}}} 
\newcommand{\diag}{{\mathrm{diag}}} 
\newcommand{\dom}{{\mathrm{dom}}} 
\newcommand{\grad}{{\nabla}}    
\newcommand{\Proj}{{\mathrm{Proj}}} 
\newcommand{\Range}{{\mathrm{Range}}} 
\DeclareMathOperator{\shrink}{shrink} 
\DeclareMathOperator*{\argmin}{arg\,min}

\DeclareMathOperator{\rank}{rank}
\DeclareMathOperator{\trace}{trace}
\DeclareMathOperator{\Null}{Null}
\DeclareMathOperator{\dist}{dist}

\newcommand{\bc}{\begin{center}}
\newcommand{\ec}{\end{center}}

\newcommand{\bdm}{\begin{displaymath}}
\newcommand{\edm}{\end{displaymath}}

\newcommand{\beq}{\begin{equation}}
\newcommand{\eeq}{\end{equation}}

\newcommand{\bfl}{\begin{flushleft}}
\newcommand{\efl}{\end{flushleft}}

\newcommand{\bt}{\begin{tabbing}}
\newcommand{\et}{\end{tabbing}}

\newcommand{\beqn}{\begin{eqnarray}}
\newcommand{\eeqn}{\end{eqnarray}}

\newcommand{\beqs}{\begin{eqnarray*}} 
\newcommand{\eeqs}{\end{eqnarray*}}  


\newtheorem{theorem}{Theorem}

\newtheorem{definition}{Definition}

\newtheorem{remark}{Remark}

\newtheorem{lemma}{Lemma}

\usepackage{subfigure}

\begin{document}

\title{Augmented $\ell_1$ and Nuclear-Norm Models  with a Globally Linearly Convergent Algorithm}
\author{Ming-Jun Lai\thanks{Department of Mathematics,
The University of Georgia, Athens, GA 30602. Email: \url{mjlai@math.uga.edu}.}
\and
Wotao Yin\thanks{Department of Applied and Computational Mathematics, Rice University, Houston, TX. This author is partly supported by NSF grants
DMS-0748839 and ECCS-1028790, ONR grant N00014-08-1-1101, and  ARL and ARO grant W911NF-09-1-0383. Email: \url{wotao.yin@rice.edu}.}
}
\date{}
\maketitle

\begin{abstract}
This paper studies the long-existing idea of adding a nice smooth function to ``smooth'' a non-differentiable objective function in the context of sparse optimization, in particular, the minimization of $\|\vx\|_1+\frac{1}{2\alpha}\|\vx\|_2^2$, where $\vx$ is a vector, as well as  the minimization of
$\|\vX\|_*+\frac{1}{2\alpha}\|\vX\|_F^2$, where $\vX$ is a matrix and $\|\vX\|_*$ and $\|\vX\|_F$ are the nuclear and Frobenius norms of $\vX$,
respectively. We show that they let sparse vectors and low-rank matrices be efficiently recovered. In particular, they enjoy exact and stable recovery
guarantees
 similar to those known for the minimization of $\|\vx\|_1$ and $\|\vX\|_*$  under  the conditions on the sensing operator such as its  null-space property,
restricted isometry property, spherical section property, or  ``RIPless'' property. To recover a (nearly) sparse vector  $\vx^0$, minimizing
$\|\vx\|_1+\frac{1}{2\alpha}\|\vx\|^2_2$ returns (nearly) the same solution as minimizing $\|\vx\|_1$  whenever $\alpha\ge 10\|\vx^0\|_\infty$.
The same  relation also holds between minimizing $\|\vX\|_*+\frac{1}{2\alpha}\|\vX\|_F^2$ and minimizing $\|\vX\|_*$  for recovering a (nearly)
low-rank matrix  $\vX^0$ if  $\alpha\ge 10\|\vX^0\|_2$. Furthermore, we show that the linearized Bregman algorithm, as well as its two fast variants, for minimizing
$\|\vx\|_1+\frac{1}{2\alpha}\|\vx\|_2^2$ subject to $\vA\vx=\vb$ enjoys \emph{global} linear convergence as long as a nonzero solution exists,  and we
give an explicit rate of convergence. The convergence property does not require a sparse solution or any properties on  $\vA$. To our knowledge,
this is the best known global convergence result for  first-order sparse optimization algorithms.
\end{abstract}

\section{Introduction}
Sparse vector recovery and low-rank matrix recovery problems have drawn lots of attention from researchers in different
fields in the past several years.
They have wide applications in compressive
sensing, signal/image processing, machine learning, etc.
The fundamental problem of sparse vector recovery is to find the
 vector with (nearly) fewest nonzero entries   from an {underdetermined} linear system $\vA\vx = \vb$, and that of low-rank matrix recovery is to find a
matrix of (nearly) lowest rank from an {underdetermined}
$\cA(\vX)=\vb$, where $\cA$ is a linear operator.

To recover a sparse vector $\vx^0$, a well-known model is the basis pursuit problem \cite{Chen-Donoho-Saunders-98}:
\begin{equation}\label{P1}
\min_\vx\{\|\vx\|_1: \vA \vx = \vb\}.
\end{equation}
For vector $\vb$ with noise or  generated by an approximately sparse vector, a variant of \eqref{P1} is
\begin{equation}\label{P1n}
\min_\vx\{\|\vx\|_1: \|\vA \vx - \vb\|_2\le\sigma\}.
\end{equation}
To recover a low-rank matrix  $\vX^0\in\mathbb{R}^{n_1\times n_2}$ from linear measurements $\vb=\cA(\vX^0)$, which stand for $b_i =\trace(\vA_i^\top
\vX^0)$ for a  given matrix $\vA_i\in\mathbb{R}^{n_1\times n_2}$, $i=1,2,\ldots, m$, a popular approach  is  the convex model
(cf. \cite{Fazel-thesis-02,CR08,Recht-Fazel-Parrilo-07})
\begin{equation}
\label{N1}
\min_{\vX}\left\{\|\vX\|_*: \cA(\vX) = \vb\right\},
\end{equation}
where  $\|\vX\|_*$ equals the summation of the singular values of $\vX$. Similar to \eqref{P1n}, a useful variant of \eqref{N1} is
\beq\label{N1n}
\min_{\vX}\left\{\|\vX\|_*: \|\cA(\vX) - \vb\|_2\le \sigma\right\},
\eeq

The nonsmooth objective functions in problems \eqref{P1}--\eqref{N1n} pose numerical challenges. We augment or ``smooth'' them by adding  $\frac{1}{2\alpha}\|\vx\|_2^2$ or $\frac{1}{2\alpha}\|\vX\|_F^2$, where $\alpha$ is a positive scalar. We argue that minimizing the \emph{augmented}  objective $\|\vx\|_1+\frac{1}{2\alpha}\|\vx\|_2^2$, as well as $\|\vX\|_*+\frac{1}{2\alpha}\|\vX\|_F^2$, leads to fast numerical algorithms  because not only accurate solutions can be obtained by using a sufficiently large, yet not excessive large, value of $\alpha$, but the Lagrange dual problems are also continuously differentiable and subject to  gradient-based acceleration techniques such as line search.

Next, we briefly review the related works and summarize the contributions of this paper.
The augmented model for \eqref{P1} is
\beq
\label{P2}
\min_\vx \left\{\|\vx\|_1 + \frac{1}{2\alpha} \|\vx\|_2^2: \vA\vx = \vb\right\},
\eeq
which can be solved by the linearized Bregman algorithm (LBreg) \cite{YOGD08}, which is  analyzed in \cite{Cai-Osher-Shen-lbreg2-09,Yin-LBreg-09}.
(Note that LBreg is different from  the Bregman algorithm \cite{Osher-Burger-Goldfarb-Xu-Yin-05,YOGD08}, which solves problem \eqref{P1} instead of  \eqref{P2}.) 


The  \emph{exact regularization property} of \eqref{P2} is proved in \cite{Yin-LBreg-09}: the solution to \eqref{P2} is also a solution to \eqref{P1} as long as $\alpha$
is sufficiently large. The property can also be obtained from \cite{Friedlander-Tseng-07}. However, neither paper tells how to select $\alpha$, whereas the size of $\alpha$ affects the numerical
performance. It has been observed by several groups of researchers that a larger $\alpha$ tends to cause slower convergence. Hence, one would like to
choose a moderate $\alpha$ that is just  large enough for \eqref{P2} to return a solution to \eqref{P1}. For recovering a sparse vector $\vx^0$ and a  low-rank matrix $\vX^0$, this paper gives the simple formulae
$$\alpha\ge 10\|\vx^0\|_\infty\quad\text{and}\quad \alpha\ge 10 \|\vX^0\|_2,$$ respectively, where the operator norm $\|\vX^0\|_2$ equals the maximum singular value of $\vX^0$. Although $\vx^0$ and $\vX^0$ are not known when $\alpha$ must be set, $\|\vx^0\|_\infty$ and $\|\vX^0\|_2$ are often easy to estimate. For example, in
compressive sensing,  $\|\vx^0\|_\infty$ is  the maximum intensity of the underlying signal or the maximum sensor reading. When the total energy $\|\vx^0\|_2$ is roughly known, one can apply the more conservative formula: $\alpha\ge 10\|\vx^0\|_2$ since  $\|\vx^0\|_2\ge \|\vx^0\|_\infty$.  Similarly, a more conservative formula is $\alpha\ge 10\|\vX^0\|_F$ for the matrix case.

This paper also shows that the Lagrange dual problem of \eqref{P2} is unconstrained and differentiable, and its objective is uniformly strongly convex when restricted to certain pairs of points. Consequently, algorithm LBreg, as well as two faster variants, enjoys \emph{global} linear convergence; specifically, both the objective error and solution error are bounded by $O(\mu^{k})$, where $k$ is the iteration number and  $\mu$ is a constant strictly less than  $1$. The value of $\mu$ depends on $\alpha$, the dynamic range of
the solution's nonzero entries, as well as some properties of $\vA$. Although several first-order algorithms for \eqref{P1} have been shown to have
asymptotic linear convergence, this is the first global linear convergence result that comes with an explicit rate.

We shall discuss strong convexity.    Many of the algorithms for recovering sparse solutions from under-determined systems of equations are observed to have a linearly converging behavior, at least on  problems that are not severely ``ill-conditioned''; however, their underlying objective functions do not have strong convexity -- a property commonly used to ensure global linear convergence -- when the linear operator $\vA$ has  fewer rows than columns. Specifically, the loss function in the form of $g(\vA\vx-\vb)$, even for strongly convex function  $g$, is ``flat'' along many directions. Flatness or near flatness along a direction means  a small directional gradient, which can generally cause slow decrease in the objective value. However, in problems with certain types of matrix $\vA$, moving along these directions will significantly change  the regularization function. In the recent paper \cite{ANW12}, the definition of strong convexity is extended to include a relaxation term involving the regularizer function. The paper argues that, with high probability for problems with $\vA$ that is random or satisfies restricted eigenvalue  or other suitable properties, their ``restricted strong convexity'' definition is satisfied by the sum of the regularization and loss functions, and as a result, the prox-linear or gradient projection iteration applied to minimizing the sum   has a (nearly-)linear convergence behavior, specifically,
$$\|\vx^{(k+1)}-\vx^*\|^2\le c^k\|\vx^{(0)}-\vx^*\|^2 + o(\|\vx^*-\vx^0\|^2),$$
where $c<1$, $\vx^*$ and $\vx^0$ are the  minimizer and underlying true signal, respectively, and $\vx^{(k)}$ stands for the $k$th iterate.
This paper presents a different approach. Due to smoothing, unmodified linear convergence to the exact solution is achievable without a probabilistic argument. The Lagrange dual of \eqref{P2}  is  strongly convex, not in the global sense, but restricted between the current point and its projection to the solution set. This property turns out to be sufficient for global linear convergence without a  modification.

Numerically, LBreg  without acceleration  is not very efficient because  it is equivalent to  the  dual gradient ascent with a \emph{fixed step size}, as shown in \cite{Yin-LBreg-09}. Nonetheless, the step size can be relaxed. Since the augmentation term $\frac{1}{2\alpha} \|\vx\|_2^2$ makes the dual problem   unconstrained and differentiable, the dual is subject to advanced gradient-descent techniques such as Barzilai-Borwein (BB) step sizes \cite{Barzilai-Borwein-88}, non-monotone line search, Nesterov's technique \cite{Nesterov-83}, as well as semi-smooth Newton methods. Indeed, LBreg has been improved in several recent  works: \cite{Osher-Mao-Dong-Yin-10} applies a kicking trick; \cite{Yin-LBreg-09} considers applying BB step sizes and non-monotone line search, as well as the limited memory BFGS method \cite{Liu-Nocedal-89};
\cite{Yang-Moller-Osher-11} applies the alternating direction method to the Lagrange dual of \eqref{P2}; \cite{Huang-Ma-Goldfarb-11} applies Nesterov's technique \cite{Nesterov-83} and obtains the convergence rate $O(1/k^2)$. Based on the restricted strong convexity of the dual objective and some existing proofs, we theoretically show and numerically demonstrate that LBreg with BB  step sizes with non-monotone line search also enjoys global linear convergence.

LBreg has also been extended to recovering simply structured matrices.  The algorithms SVT \cite{Cai-Candes-Shen-08} for matrix completion and
IT \cite{Wright-Ganesh-Rao-Ma-09} for robust principal components are of the LBreg type, namely, they are gradient iterations that solve
\begin{align}\label{mcdl}
\min_{\vX} \{\|\vX\|_*+\frac{1}{2\alpha}\|\vX\|_F^2:\vX_{ij}=\vM_{ij}, ~\forall\,(i,j)\in\Omega\},\\
\label{rpca}
\min_{\vL,\vS} \{\|\vL\|_*+\lambda\|\vS\|_1+\frac{1}{2\alpha}\left(\|\vL\|_F^2+\|\vS\|_F^2\right):\vL+\vS=\vD\},
\end{align}
respectively, where $\Omega$ is the set of the observed matrix entries and $\|\vS\|_1=\sum_{i,j}|S_{i,j}|$.
\cite{Zhang-Cheng-Zhu-11} shows that the exact regularization property for the vector case
also holds for \eqref{mcdl} and \eqref{rpca}. Although this paper does not analyze \eqref{mcdl} and \eqref{rpca} specifically, it gives recovery
guarantees for 
models
\beq\label{N2}
\min_\vx \left\{\|\vX\|_* + \frac{1}{2\alpha} \|\vX\|_F^2: \cA(\vX) = \vb\right\}
\eeq
and
\beq\label{N3}
\min_\vx \left\{\|\vX\|_* + \frac{1}{2\alpha} \|\vX\|_F^2: \|\cA(\vX) - \vb\|_2\le \sigma\right\}
\eeq
assuming $\alpha \ge 10 \|\vX^0\|_2$. 

\subsection{Organization}
The rest of this paper is organized as follows. Section \ref{sc:models} presents several models with augmented $\ell_1$ or augmented nuclear-norm objectives and
derives their Lagrange dual problems. The exact and stable recovery conditions for  these models are given  in Section \ref{rcg}. Section \ref{sc:glc}  proves a
restricted strongly convex property and  establishes global linear convergence for LBreg and its two faster variants. The materials of Sections 3 and 4 are technically
independent of each other, yet they are two important sides of model \eqref{P2}.

The matlab codes and demos of LBreg, including the original, line search, and Nesterov acceleration versions, can be found from the second author's homepage.



\section{Augmented $\ell_1$ and nuclear-norm models}\label{sc:models}


This section presents the primal and dual problems of a few augmented $\ell_1$ and augmented nuclear-norm models.

\textbf{Equality constrained augmented $\ell_1$ model}: Since $\|\vx\|_1= \max \{\vx^\top \vz: \vz\in \mathbb{R}^n, \|\vz\|_\infty\le 1\}$,
the dual problem of \eqref{P2} can be obtained as follows
\begin{align*}
\min_\vx\{\|\vx\|_1+ \frac{1}{2\alpha} \|\vx\|_2^2: \vA \vx = \vb\}=& \min_{\vx}\max_{\vy} \|\vx\|_1 +
\frac{1}{2\alpha} \|\vx\|_2^2 - \vy^\top (\vA\vx -\vb)\cr
=& \min_{\vx } \max_{\vy,\vz} \{\vx^\top \vz +\frac{1}{2\alpha}\|\vx\|_2^2 - \vy^\top \vA \vx + \vy^\top \vb:\|\vz\|_\infty\le 1\} \cr
=&\max_{\vy,\vz} \{\min_{\vx }     \vx^\top \vz + \frac{1}{2\alpha}\|\vx\|_2^2- \vy^\top \vA \vx + \vb^\top \vy:\|\vz\|_\infty\le 1\} \cr
=& -\min_{\vy,\vz} \{ -\vb^\top \vy + \frac{\alpha}{2}\|\vA^\top \vy - \vz\|_2^2 : \|\vz\|_\infty\le 1 \}, \quad \hbox{since  } \vx^* =
\alpha(\vA^\top\vy - \vz).
\end{align*}
Eliminating $\vz$ from the last equation gives the following dual problem.
\begin{equation}\label{D2u}
\min_{\vy}\, -\vb^\top \vy + \frac{\alpha}{2} \|\vA^\top \vy - \Proj_{[-1,1]^n}(\vA^\top \vy)\|_2^2.
\end{equation}
For any real vector $\vz$, we have $\vz-\Proj_{[-\mu,\mu]^n}(\vz)=\shrink_\mu(\vz)$, where $\shrink_\mu$ is the well-known shrinkage or
soft-thresholding operator with parameter $\mu>0$. We omit $\mu$ when $\mu=1$. Hence, the second
term in \eqref{D2u} equals $(\alpha/2)\|\shrink(\vA^\top\vy)\|_2^2$.

It is interesting to compare \eqref{D2u} with the Lagrange dual of \eqref{P1}:
\beq\label{P1d}
\min_{\vy}\{ -\vb^\top \vy: \|\vA^\top \vy\|_\infty\le 1\}.
\eeq
Instead of confining each component of $\vA^\top\vy$ to $[-1,1]$, \eqref{D2u} applies quadratic penalty to the violation. This leads to its advantage
of being unconstrained and differentiable (despite the presence of projection).

The gradient of the last term in \eqref{D2u} is $\alpha\vA\shrink(\vA^\top\vy)$. Furthermore, given a solution $\vy^*$ to \eqref{D2u}, one can
recover the
solution $\vx^*=\alpha\shrink(\vA^\top \vy^*)$ to \eqref{P2} (since  \eqref{D2u} has a vanishing gradient  $\vA\vx^*-\vb=\vzero$, and $\vx^*$ and
$\vy^*$ lead to  0-gap primal and dual objectives, respectively).
Therefore, solving \eqref{D2u} solves \eqref{P2}, and it is easier than solving \eqref{P1}. In particular, \eqref{D2u} enjoys a rich set of classical techniques such as line search, Barzilai-Borwein steps \cite{Barzilai-Borwein-88}, semi-smooth Newton methods, Nesterov's acceleration \cite{Nesterov-83}, which do not directly apply to problems \eqref{P1} or \eqref{P1d}.

\textbf{Norm-constrained augmented $\ell_1$}: 
For model \eqref{P1n}, the primal
and dual  augmented models are
\begin{align}
\label{P3}
& \min_\vx \left\{\|\vx\|_1 + \frac{1}{2\alpha} \|\vx\|_2^2: \|\vA\vx - \vb\|_2\le \sigma\right\},\\
\label{D3}
& \min_{\vy} \left\{-\vb^\top \vy +\sigma\|\vy\|_2 + \frac{\alpha}{2} \|\vA^\top \vy - \Proj_{[-1,1]^n}(\vA^\top \vy)\|_2^2\right\}.
\end{align}
The objective of \eqref{D3}
is differentiable except at  $\vy=\vzero$. However, this is not an issue since $\vy=\vzero$ is a solution to \eqref{D3} only if $\vx=\vzero$ is the solution to \eqref{P3}. 
In other words, \eqref{D3} is practically differentiable and thus also amenable to classical gradient-based acceleration techniques.

\textbf{Equality-constrained augmented $\|\cdot\|_*$}: 
The primal and dual  of the augmented model of \eqref{N1} are \eqref{N2} and
\beq
\label{M2}
\min_{\vy}\left\{ -\vb^\top \vy + \frac{\alpha}{2} \|\cA^* \vy - \Proj_{\{\vX:\|\vX\|_2\le 1\}}(\cA^* \vy)\|_F^2\right\},
\eeq
respectively, where $\cA^* \vy := \sum_{i=1}^m y_i \vA_i$ and $\{\vX:\|\vX\|_2\le 1\}$ is the set of $n_1$-by-$n_2$ matrices with spectral norms no more than 1. In \eqref{M2}, inside the Frobenius norm is the singular value soft-thresholding \cite{Cai-Candes-Shen-08} of $\cA^*\vy$.

The primal and dual of the  augmented model \eqref{N1n} are \eqref{N3} and
\beq
\label{M3}
\min_{\vy} \left\{-\vb^\top \vy +\sigma\|\vy\|_2 + \frac{\alpha}{2} \|\cA^* \vy - \Proj_{\{\vX:\|\vX\|_2\le 1\}}(\cA^* \vy)\|_F^2\right\},
\eeq
respectively. Like the augmented models for vectors, problems \eqref{M2} and \eqref{M3} are practically differentiable
and thus
also amenable to advanced optimization techniques for unconstrained differentiable problems.

As one can see, it is a routine task to augment an $\ell_1$-like minimization problem and obtain a problem with a strongly convex objective, as well as  its  Lagrange dual with a differentiable objective and no constraints. One can augment models with a transform-$\ell_1$ objective, total variation, $\ell_{1,2}$ or $\ell_{1,\infty}$ norms (for joint or group sparse signal recovery), robust-PCA objective, etc. Since the dual problems are convex and differentiable,  they enjoy a rich set of gradient-based optimization  techniques.

\section{Recovery Guarantees}\label{rcg}
This section establishes recovery guarantees for augmented $\ell_1$ models \eqref{P2} and \eqref{P3} and extend these results to matrix recovery models \eqref{N2} and \eqref{N3}.  The results for \eqref{P2} and \eqref{P3} are given based on  a variety of properties of $\vA$ including 
the null-space property (NSP) in Theorem \ref{thm:nsp}, the restricted isometry property (RIP) \cite{Candes-Tao-05} in Theorems \ref{main1} and \ref{main2}, the spherical section property (SSP) \cite{Zhang-08} in Theorems \ref{thm:ssp} and \ref{thm:ssp1}, and an ``RIPless'' condition \cite{Candes-Plan-10} in Theorem \ref{thm:ripless} below. 
We choose to study all these different properties since  they give different types of recovery guarantees and  apply to different type of matrices. Other than that NSP is used in our proofs for RIP and SSP, the other three properties --- RIP, SSP, and RIPless --- do not dominate one another in terms of  usefulness.  They together assert that a large number of matrices such as those sampled from subgaussian distributions,  Fourier and Wash-Hadamard ensembles, and random Toeplitz and circulant ensembles are suitable for sparse vector recovery by models \eqref{P2} and \eqref{P3}. 

First, we present some numerical simulations to motivate the subsequent analysis.

\subsection{Motivating examples}\label{motv}
We are interested in  comparing model \eqref{P2} to model \eqref{P1}, whose  the  performance on recovering sparse solutions have been widely studied. To this end, we conducted three sets of simulations. 
Without loss of generality, we fixed $\|\vx^0\|_\infty=1$ and solved \eqref{P1} and then \eqref{P2} with $\alpha = 1, 10$, and $25$ to reconstruct signals
of $n=400$ dimensions. We set the signal sparsity   $k=1,2,\ldots, 80$ and the number of measurements  $m = 40, 41,\ldots, 200$. The entries of $\vA$
were sampled from the standard Gaussian distribution.

It turns out that the recovery performance of
\eqref{P2} depends on the decay speed of the  nonzero entries of the signal $\vx^0$. So, we tested three  decay speeds: (i) flat magnitude --- no
decay, (ii) independent Gaussian --- moderate decay, and (iii) power law --- fast decay. In the power-law decay, the $i$th largest entry had magnitude $
i^{-2}$ and a random sign. 

For each $(m,k)$,  100 independent tests were run, and the average of \beq\label{relerr}
\text{recovery relative error}\quad \|\vx^* - \vx^0\|_2/\|\vx^0\|_2
\eeq
was recorded, where $\vx^*$ stands for a solution of either \eqref{P1} or \eqref{P2}. The slightly smoothed cut-off curves at two different levels of relative errors are depicted in Figure \ref{fig1}. \emph{Above} each curve is the region where a model \emph{fails} to recover the signals to the specified average
relative error. Hence, a \emph{higher}  curve means  fewer  fails and thus \emph{better} recovery performance.
\begin{figure}[h]
\begin{center}
\subfigure[\textbf{Flat} and $10^{-3}$]{
\includegraphics[width=0.31\textwidth]{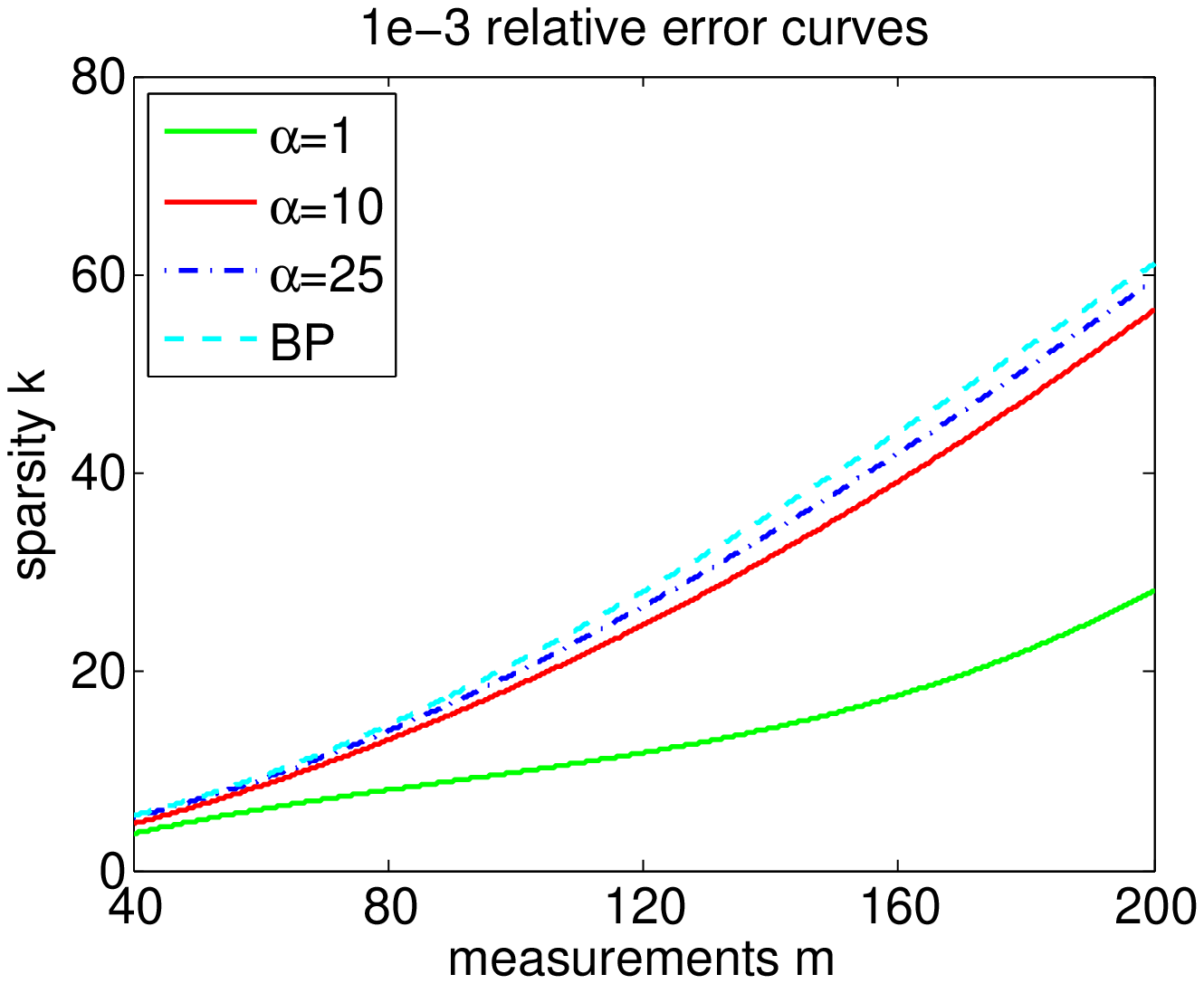}
}
\subfigure[\textbf{Gaussian}  and $10^{-3}$]{
\includegraphics[width=0.31\textwidth]{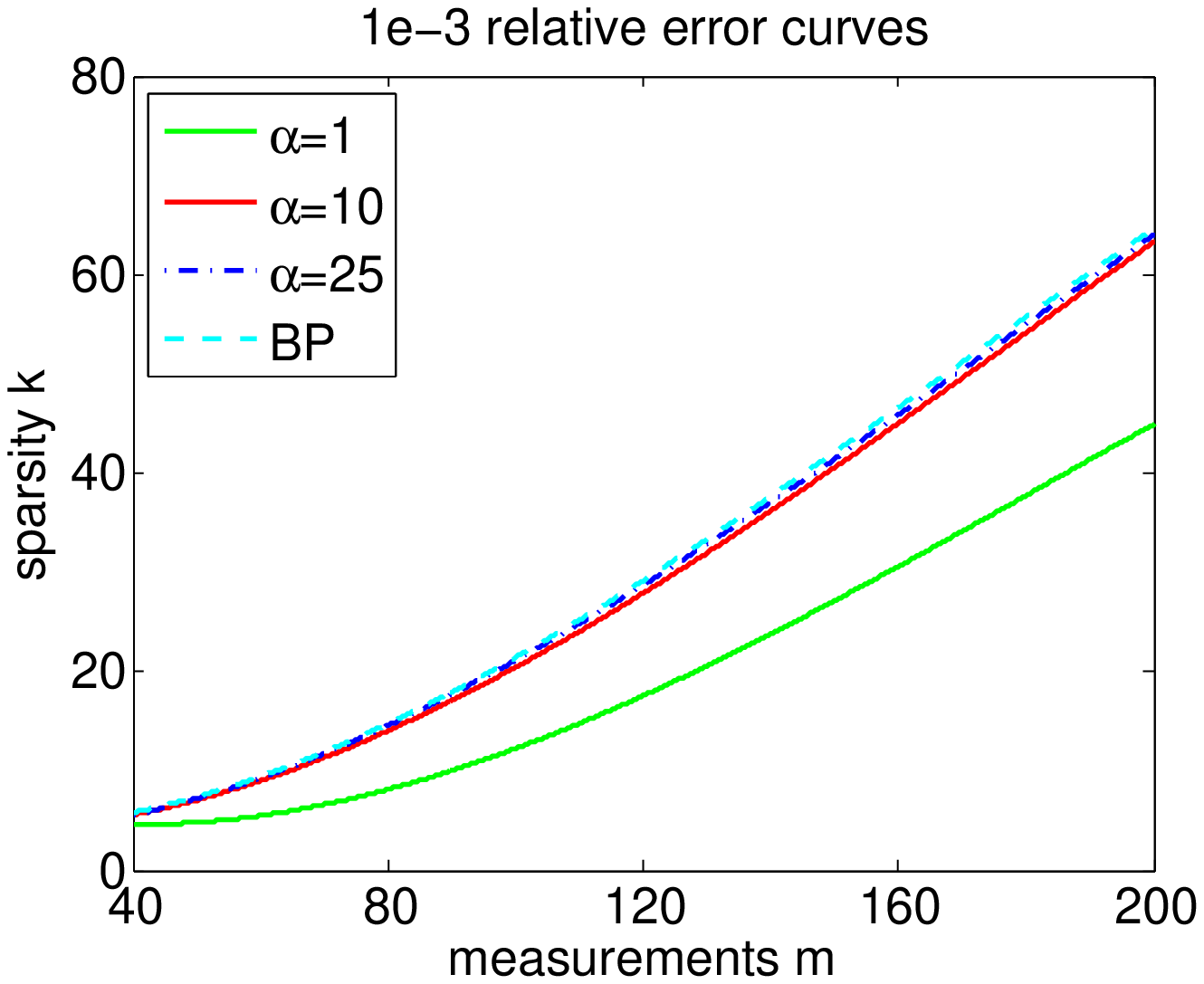}
}
\subfigure[\textbf{Power-law} and $10^{-3}$]{
\includegraphics[width=0.31\textwidth]{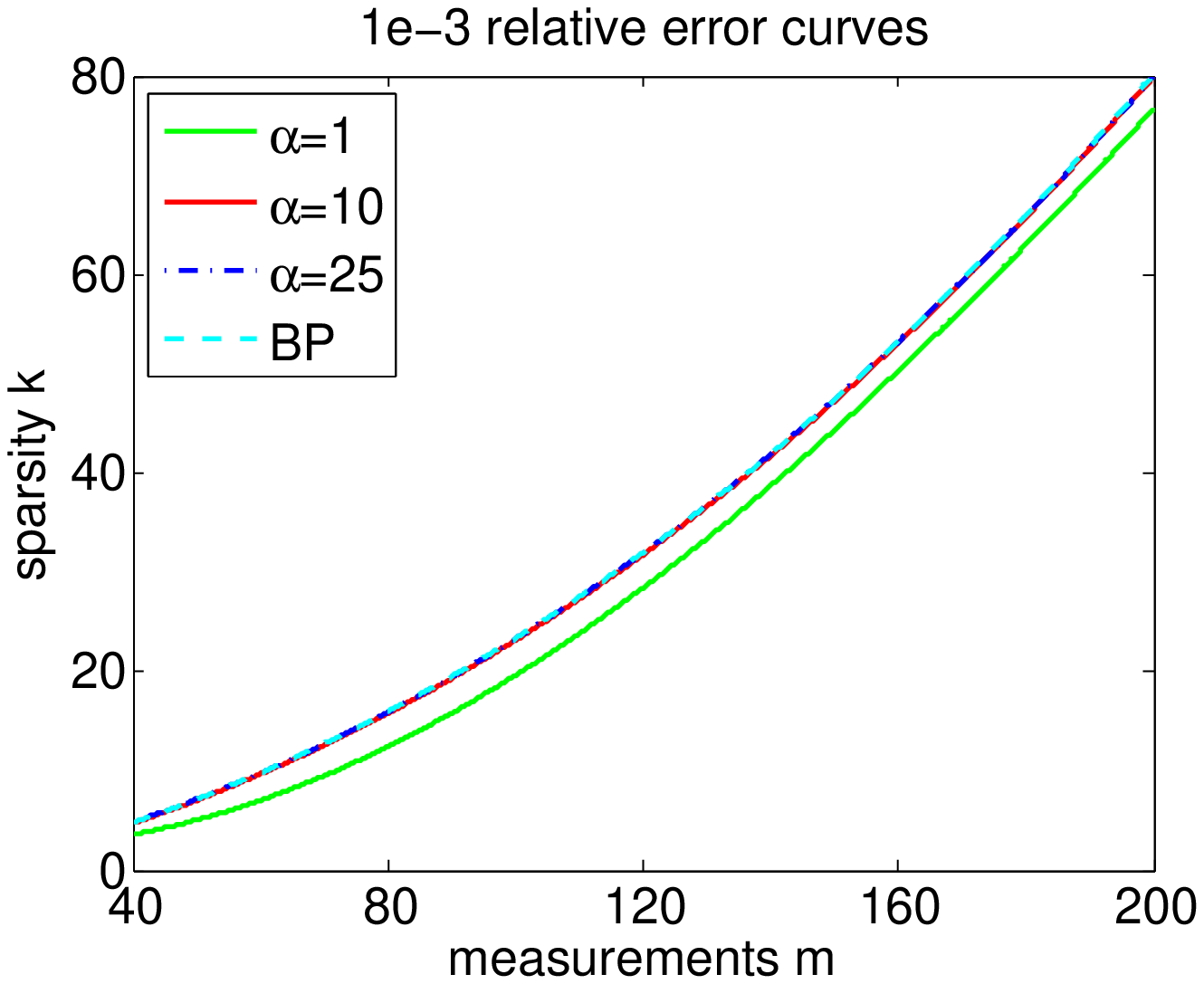}
}\\
\subfigure[\textbf{Flat} and $10^{-5}$]{
\includegraphics[width=0.31\textwidth]{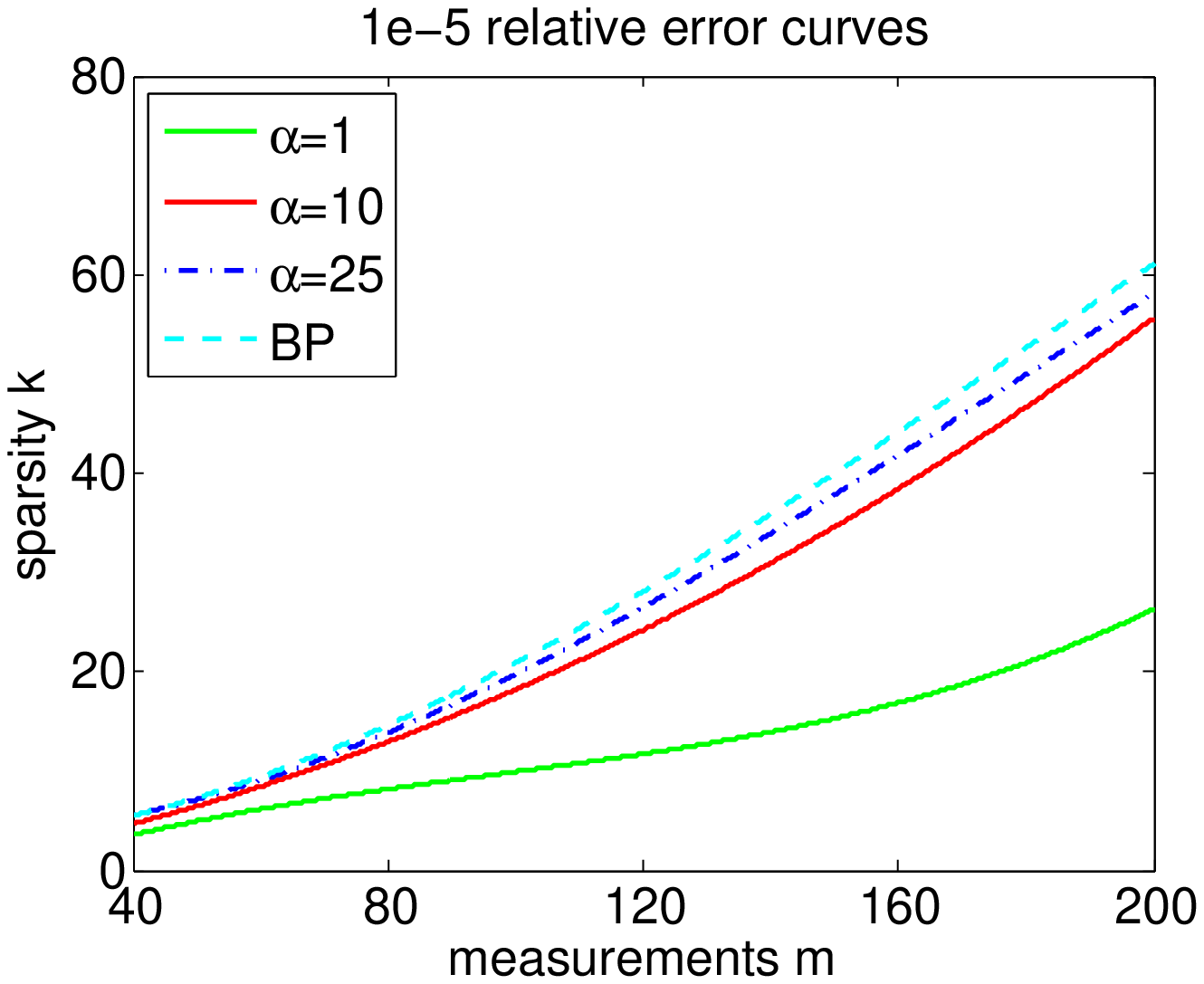}
}
\subfigure[\textbf{Gaussian} and $10^{-5}$]{
\includegraphics[width=0.31\textwidth]{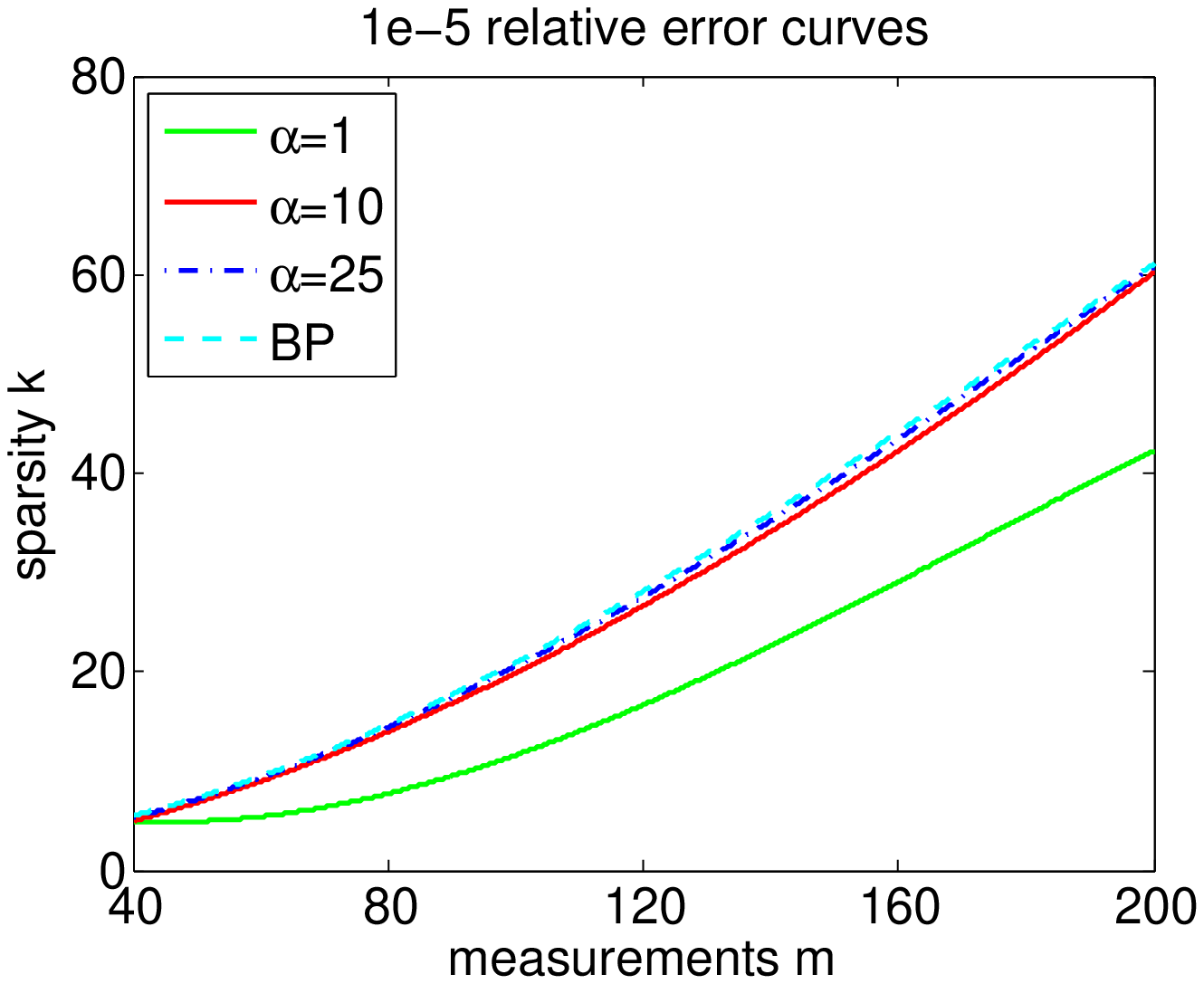}
}
\subfigure[\textbf{Power-law} and $10^{-5}$]{
\includegraphics[width=0.31\textwidth]{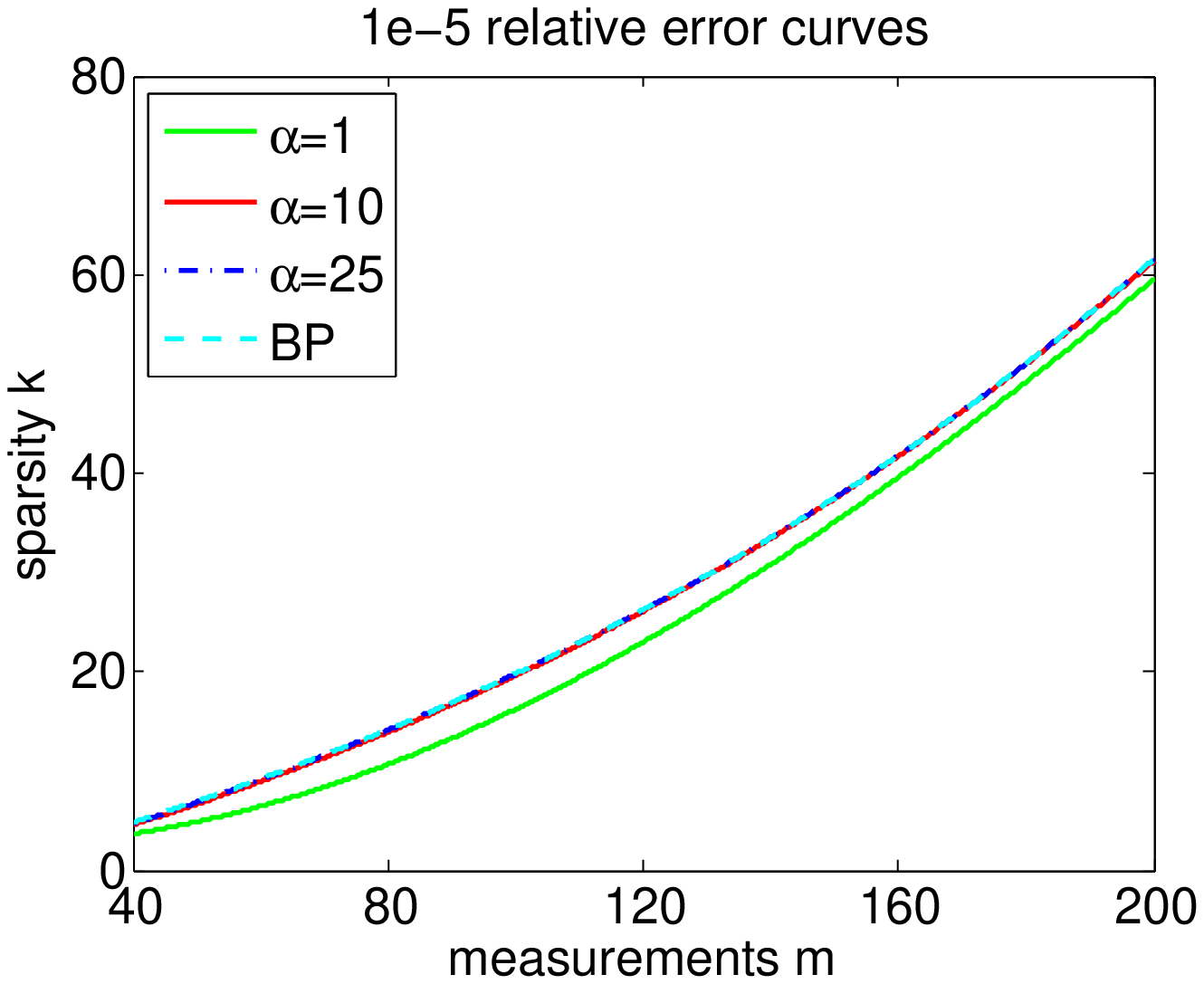}
}
\end{center}
\caption{\textbf{Curves of  specified recovery relative errors} of model \eqref{P1} (BP) and model \eqref{P2} with $\alpha=1,10,25$. Above each curve is the region where a model \emph{fails} to achieve the specified average
relative error. A higher curve means better recovery performance.\label{fig1}}
\end{figure}

We can make following observations.

\begin{itemize}
\item In all tests, the best curve is from BP or model \eqref{P1}. Closely following it are those of $\alpha=25$ and  $\alpha=10$ of model \eqref{P2}. As long as $\alpha \ge 10$, model \eqref{P2} is as good as model \eqref{P1} up to a negligible difference.

\item The curve of $\alpha=1$ is noticeably lower than others when the signal is flat or decays slowly. For this reason, we do not recommend using $\alpha = \|\vx^0\|_\infty$ for model \eqref{P2} unless when the underlying signals decay very fast.

\item The differences of  the fours curves are very similar across the two levels $10^{-3}$ and $10^{-5}$ of relative errors. We tested other levels and found the same. Therefore, the performance differences are independent of the error level chosen to plot the curves.
\end{itemize}

Some expert readers may know that in theory, given matrix $\vA$, whether or not model \eqref{P1} can exactly recover $\vx^0$ solely depends on $\sign(\vx^0)$, independent of its decay speed. So, one may wonder why the BP curves are not the same across different plots. That is because, when  \eqref{P1} fails to recover  $\vx^0$, the relative error depends on the decay speed; a faster decaying signal, when not exactly recovered, tends to have a smaller error.
This is why at the error level  $10^{-3}$, the BP curve is  obviously higher (better) on the  faster-decaying signals.

%



\subsection{Null space property}
Matrix  $\vA$ satisfies the NSP if
\begin{align}
\label{nsp}
\|\vh_{\cS}\|_1 <\,\|\vh_{\cS^c}\|_1,
\end{align}
holds for all $\vh\in\Null(\vA)$ and coordinate sets $\cS\subset\{1, 2, \cdots, n\}$ of cardinality $|\cS|\le k$. If so,  problem \eqref{P1}
recovers \emph{all} $k$-sparse vectors $\vx^0$ from measurements $\vb=\vA\vx^0$.
The NSP is also necessary for exact recovery of all  $k$-sparse vectors uniformly. The wide use of NSP can be found in, e.g., \cite{Donoho-Huo-01,Gribonval-Nielsen-03,Zhang-TR05-09}. Note that it holds regardless the value of $\|\vx^0\|_\infty$. We now give a necessary and sufficient condition
for problem \eqref{P2}.

\begin{theorem}[NSP condition]
\label{thm:nsp} 
Assume $\|\vx^0\|_\infty$ is fixed.  Problem  \eqref{P2}  uniquely recovers all $k$-sparse vectors $\vx^0$ with the fixed $\|\vx^0\|_\infty$ from measurements
$\vb=\vA\vx^0$ if and only if
\begin{align}
\label{nsp1}
\left(1 + \frac{\|\vx^0\|_\infty}{\alpha}\right)\|\vh_{\cS}\|_1 \le&\, \|\vh_{\cS^c}\|_1,
\end{align}
holds for all vectors $\vh\in\Null(\vA)$ and coordinate sets $\cS$ of cardinality $|\cS|\le k$.
\end{theorem}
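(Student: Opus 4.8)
The plan is to characterize exact recovery of a $k$-sparse $\vx^0$ (with the prescribed $\ell_\infty$-norm) by model \eqref{P2} through the standard ``perturbation'' argument, but adapted to the augmented objective $f(\vx):=\|\vx\|_1+\frac{1}{2\alpha}\|\vx\|_2^2$. The starting observation is that $\vx^0$ is \emph{the} solution of \eqref{P2} if and only if $f(\vx^0) < f(\vx^0+\vh)$ for every nonzero $\vh\in\Null(\vA)$ (uniqueness follows because $f$ is strictly convex), and recovering \emph{all} $k$-sparse vectors with $\|\vx^0\|_\infty$ fixed means this must hold for every such $\vx^0$ and every $\vh\in\Null(\vA)\setminus\{\vzero\}$. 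Writing $\cS=\supp(\vx^0)$ (WLOG $|\cS|=k$), I would split $f(\vx^0+\vh)-f(\vx^0)$ into its $\ell_1$ part and its quadratic part. On $\cS^c$, $\vx^0$ vanishes, so both parts only add: $\|\vh_{\cS^c}\|_1+\frac{1}{2\alpha}\|\vh_{\cS^c}\|_2^2$. On $\cS$, the $\ell_1$ part contributes at least $-\|\vh_\cS\|_1$ (triangle inequality), and the quadratic part contributes $\frac{1}{\alpha}\langle\vx^0_\cS,\vh_\cS\rangle+\frac{1}{2\alpha}\|\vh_\cS\|_2^2$.

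For the \textbf{sufficiency} direction I would lower-bound the $\cS$-side cross term by $-\frac{1}{\alpha}\|\vx^0\|_\infty\|\vh_\cS\|_1$ (Hölder), so that
\[
f(\vx^0+\vh)-f(\vx^0)\ \ge\ \|\vh_{\cS^c}\|_1-\Bigl(1+\tfrac{\|\vx^0\|_\infty}{\alpha}\Bigr)\|\vh_\cS\|_1+\tfrac{1}{2\alpha}\|\vh\|_2^2,
\]
which is strictly positive for all nonzero $\vh\in\Null(\vA)$ once \eqref{nsp1} holds: either the first two terms are strictly positive, or they are zero and then $\vh=\vzero$ follows from $\|\vh\|_2^2>0$ being impossible — i.e.\ the quadratic term handles the boundary case of equality in \eqref{nsp1}. (This is exactly why \eqref{nsp1} is stated with ``$\le$'' whereas the plain NSP \eqref{nsp} needs ``$<$''.) For the \textbf{necessity} direction, I would show that if \eqref{nsp1} fails for some $\vh$ and $\cS$, then there is a $k$-sparse $\vx^0$ supported on $\cS$, with $\|\vx^0\|_\infty$ equal to the prescribed value, that is not recovered: choose the signs of $\vx^0_\cS$ to align adversarially with $\vh_\cS$ so that $\langle\vx^0_\cS,\vh_\cS\rangle=\|\vx^0\|_\infty\|\vh_\cS\|_1$ and $\|\vx^0_\cS+t\vh_\cS\|_1=\|\vx^0_\cS\|_1-t\|\vh_\cS\|_1$ for small $t>0$, then compute $\frac{d}{dt}\big|_{t=0^+}\big(f(\vx^0+t\vh)\big)=-\|\vh_\cS\|_1+\|\vh_{\cS^c}\|_1+\frac{1}{\alpha}\|\vx^0\|_\infty\|\vh_\cS\|_1\le 0$ when \eqref{nsp1} is violated (with $\le$, not $<$), so $\vx^0+t\vh$ does no worse than $\vx^0$; a little more care distinguishes strict failure (gives a strictly better point, so no recovery) from the equality case (gives, together with strict convexity and a second-order check, a tie that still breaks uniqueness or recovery). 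I would make the direction of the inequality match the ``$\le$'' in \eqref{nsp1} carefully here.

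The \textbf{main obstacle} is the boundary/equality analysis in both directions: because $f$ is strictly convex, the correct condition is a non-strict inequality \eqref{nsp1}, and the proof must verify that equality in \eqref{nsp1} is compatible with unique recovery (sufficiency) while any \emph{violation} — however slight — breaks it (necessity). This requires being careful that the sign pattern of $\vx^0$ chosen in the necessity construction genuinely realizes the Hölder bound and the local linearity of the $\ell_1$ term, and that one keeps track of the quadratic remainder $\frac{1}{2\alpha}\|t\vh\|_2^2$ along the segment, which is second order in $t$ and thus cannot rescue a first-order strict decrease but does matter at the flat (equality) case. One also has to handle $|\cS|<k$ by padding, and note that scaling $\vh$ is free so only the \emph{ratio} $\|\vh_\cS\|_1/\|\vh_{\cS^c}\|_1$ over $\Null(\vA)$ matters, which is what makes the stated ``for all $\vh$'' condition both necessary and sufficient.
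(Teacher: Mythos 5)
Your sufficiency argument is essentially the paper's proof verbatim: the same split of the objective over $\cS$ and $\cS^c$, the same triangle-inequality and H\"older bounds, and the same observation that the strictly positive remainder $\frac{1}{2\alpha}\|\vh\|_2^2$ is what allows the non-strict inequality \eqref{nsp1} to still give \emph{unique} recovery. Your necessity argument is also the paper's: the paper constructs $x^0_i=-\sign(h_i)\|\vh\|_\infty$ on $\cS$ (then rescales to the prescribed $\|\vx^0\|_\infty$) and lets the perturbation $\tau\vh$ have $\tau\to 0^+$, which is exactly your one-sided derivative at $t=0^+$.

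Two corrections in the necessity as you wrote it. First, a sign slip: with the adversarial choice that makes $\|\vx^0_\cS+t\vh_\cS\|_1=\|\vx^0_\cS\|_1-t\|\vh_\cS\|_1$, the signs of $x^0_i$ must \emph{oppose} those of $h_i$, so $\langle\vx^0_\cS,\vh_\cS\rangle=-\|\vx^0\|_\infty\|\vh_\cS\|_1$ (not $+$), and the one-sided derivative is $\|\vh_{\cS^c}\|_1-\left(1+\frac{\|\vx^0\|_\infty}{\alpha}\right)\|\vh_\cS\|_1$. With the $+$ sign you wrote, the derivative would be $\|\vh_{\cS^c}\|_1-\left(1-\frac{\|\vx^0\|_\infty}{\alpha}\right)\|\vh_\cS\|_1$, which can be positive even when \eqref{nsp1} fails, so that step would not close; you also need $|x^0_i|=\|\vx^0\|_\infty$ for every $i\in\cS$ with $h_i\neq 0$ so the H\"older bound is attained, as the paper's construction ensures. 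Second, the ``equality case'' you worry about in necessity does not occur: since \eqref{nsp1} is non-strict, its violation is a strict inequality, hence the directional derivative is strictly negative and $f(\vx^0+t\vh)<f(\vx^0)$ for all small $t>0$, defeating recovery outright; the only boundary analysis needed is on the sufficiency side, where your quadratic-term argument already disposes of equality in \eqref{nsp1}.
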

\begin{proof}
\textbf{Sufficiency}: Pick any $k$-sparse vector $\vx^0$. Let $\cS :=\mathrm{supp}(\vx^0)$ and $\cZ = \cS^c$.
For any \emph{nonzero} $\vh\in\mathrm{Null}(\vA)$, we have $\vA(\vx^0+\vh)=\vA\vx^0=\vb$ and
\begin{eqnarray}\nonumber
\|\vx^0+\vh\|_1 + \frac{1}{2\alpha}\|\vx^0+\vh\|_2^2 & = & \|\vx^0_\cS + \vh_\cS\|_1 +
\frac{1}{2\alpha}\|\vx^0_\cS +\vh_\cS\|_2^2 + \|\vh_\cZ\|_1 + \frac{1}{2\alpha}\|\vh_\cZ\|_2^2\\
\nonumber
& \ge & \|\vx^0_\cS\|_1 - \|\vh_\cS\|_1 + \frac{1}{2\alpha}\|\vx^0_\cS\|_2^2 + \frac{1}{\alpha}\langle\vx^0_\cS, \vh_\cS \rangle
+ \frac{1}{2\alpha}\|\vh_\cS\|_2^2 + \|\vh_\cZ\|_1 + \frac{1}{2\alpha}\|\vh_\cZ\|_2^2\\
\nonumber
& \ge & \left[\|\vx^0_\cS\|_1+\frac{1}{2\alpha}\|\vx^0_\cS\|_2^2\right] + \left[\|\vh_\cZ\|_1 - \|\vh_\cS\|_1
- \frac{\|\vx^0_\cS\|_\infty}{\alpha}\|\vh_\cS\|_1\right] + \frac{1}{2\alpha}\|\vh\|_2^2\\
\label{xh12}
& = & \left[\|\vx^0\|_1+\frac{1}{2\alpha}\|\vx^0\|_2^2\right] + \left[\|\vh_\cZ\|_1 - \left(1
+ \frac{\|\vx^0\|_\infty}{\alpha}\right)\|\vh_\cS\|_1\right] + \frac{1}{2\alpha}\|\vh\|_2^2,
\end{eqnarray}
where the first inequality follows from the triangle inequality,  and the second  follows from $\|\vh_\cS\|_2^2+\|\vh_\cZ\|_2^2=\|\vh\|_2^2$ and
$\langle \vx^0_\cS,\vh_\cS \rangle\ge -\|\vx^0_\cS\|_\infty\|\vh_\cS\|_1= -\|\vx^0\|_\infty\|\vh_\cS\|_1$.

Since $\|\vh\|_2^2>0$,  $\|\vx^0+\vh\|_1 + \frac{1}{2\alpha}\|\vx^0+\vh\|_2$ is strictly larger than $\|\vx^0\|_1+\frac{1}{2\alpha}\|\vx^0\|_2$
provided that the second block of \eqref{xh12} is nonnegative. Hence, condition \eqref{nsp1} is sufficient for $\vx^0$ to be
the unique minimizer of \eqref{P2} .

\textbf{Necessity:} It is sufficient to show that  for any given nonzero $\vh\in\Null(\vA)$ and $\cS$ satisfying $|\cS|\le k$, we can to identify a $k$-sparse $\vx^0$ such that
\eqref{nsp1} is \emph{necessary} for its exact recovery. To this end, we define $\vx^0$ as $x^0_i=-\sign(h_i)\|\vh\|_\infty$ for $i\in\cS$ and
$x^0_j=0$ for $j\in \cS^c$, and scale $\vx^0$ to have the specified $\|\vx^0\|_\infty$.  Under this construction, we have the following properties: $\|\vx^0\|_0\le k$, $\|\vx^0_\cS + \tau\vh_\cS\|_1=\|\vx^0_\cS\|_1
- \|\tau\vh_\cS\|_1$, and $\langle \vx^0_\cS,\tau\vh_\cS \rangle= -\|\vx^0_\cS\|_\infty\|\tau\vh_\cS\|_1$,  for any $0<\tau\le 1$. Now, we let  $\tau\vh$ replace $\vh$ in the
 equation array \eqref{xh12} and observe that both of the two inequalities of \eqref{xh12} now hold with equality. Therefore, since the exact recovery of $\vx^0$ requires $\|\vx^0+\tau\vh\|_1 + \frac{1}{2\alpha}\|\vx^0+\tau\vh\|_2^2> \|\vx^0\|_1+\frac{1}{2\alpha}\|\vx^0\|_2^2$, it also requires
\beq
\left[\|\tau\vh_\cZ\|_1 - \left(1
+ \frac{\|\vx^0\|_\infty}{\alpha}\right)\|\tau\vh_\cS\|_1\right] + \frac{1}{2\alpha}\|\tau\vh\|_2^2>0
\eeq
for all $0<\tau\le 1$, which in turn requires \eqref{nsp1} to hold.
\end{proof}
\begin{remark} For any finite $\alpha>0$, \eqref{nsp1} is stronger than \eqref{nsp} due to the extra term  $\frac{\|\vx^0_\cS\|_\infty}{\alpha}$.
Since various uniform  recovery results establish conditions that guarantee
\eqref{nsp}, one can tighten these conditions so that
they  guarantee \eqref{nsp1} and thus the uniform recovery by problem \eqref{P2}. How much  tighter these conditions have to be depends on
the value $\frac{\|\vx^0_\cS\|_\infty}{\alpha}$. 
\end{remark}

\subsection{Restricted isometry property}\label{sc23}
In this subsection, we first review  the RIP-based sparse recovery guarantees and then show that given certain RIP conditions, any $\alpha\ge 10\|\vx^0\|_2$ guarantees exact and stable recovery by \eqref{P2} and \eqref{P3}, respectively.

\begin{definition}\cite{Candes-Tao-05}
The RIP constant $\delta_{k}$ of matrix $\vA$ is the smallest value such that
\begin{equation}
\label{RIP}
(1- \delta_k) \|{\vx}\|_2^2 \le \|A {\vx}\|_2^2 \le (1+ \delta_k) \|{\vx}\|_2^2
\end{equation}
holds for all $k$-sparse vectors ${\vx}\in \mathbb{R}^n$.
\end{definition}

For (\ref{P1}) to recover any $k$-sparse vector uniformly,   \cite{Candes-rip08} shows the sufficiency of $\delta_{2k}< 0.4142$, which is
later improved to $\delta_{2k}< 0.4531$ \cite{Foucart-Lai-09}, $\delta_{2k}<0.4652$ \cite{Foucart-rip-10}, $\delta_{2k}<0.4721$ \cite{CWX10}, as well as
 $\delta_{2k} < 0.4931$ \cite{Mo-Li-11}. The bound is still being improved. 
Adapting results in \cite{Mo-Li-11}, we give the uniform recovery conditions for  \eqref{P2} below.
\begin{theorem}[RIP condition for exact recovery]
\label{main1}
Assume  that $\vx^0\in\mathbb{R}^n$ is $k$-sparse. If $\vA$ satisfies RIP with $\delta_{2k} \le 0.4404$ and $\alpha\ge 10\|\vx^0\|_\infty$,
then $\vx^0$ is the unique minimizer of  \eqref{P2} given measurements $\vb:=\vA\vx^0$.
\end{theorem}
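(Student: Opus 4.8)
The plan is to combine Theorem~\ref{thm:nsp} with a quantitative version of the sharp ``RIP implies recovery'' argument of \cite{Mo-Li-11}: first use $\alpha\ge 10\|\vx^0\|_\infty$ to turn the $\vx^0$-dependent condition \eqref{nsp1} into a fixed-constant strengthening of the null-space property \eqref{nsp}, and then show that $\delta_{2k}\le 0.4404$ forces that strengthening by re-running the estimates of \cite{Mo-Li-11} with every constant kept explicit.

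For the first step I would note that $\alpha\ge 10\|\vx^0\|_\infty$ gives $1+\|\vx^0\|_\infty/\alpha\le\tfrac{11}{10}$, so by the sufficiency direction of Theorem~\ref{thm:nsp} it is enough to prove
\[
\tfrac{11}{10}\,\|\vh_\cS\|_1\le\|\vh_{\cS^c}\|_1\qquad\text{for all }\vh\in\Null(\vA)\text{ and all }\cS\text{ with }|\cS|\le k.
\]
Fixing a nonzero $\vh\in\Null(\vA)$, the left side is maximized and the right side minimized when $\cS$ is the set $T_0$ of indices of the $k$ largest-magnitude entries of $\vh$ with $|\cS|=k$; hence it suffices to establish $\|\vh_{T_0}\|_1\le\tfrac{10}{11}\,\|\vh_{T_0^c}\|_1$ for that particular $T_0$, and proving it there gives it for every $\cS$ with $|\cS|\le k$.

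For the second step I would reproduce the $\ell_1$-recovery argument of \cite{Mo-Li-11}, whose backbone is: $\vA\vh=\vzero$; a shelling (or sparse-polytope) decomposition of $\vh_{T_0^c}$ into pieces which, paired with $T_0$, involve only $k$-sparse blocks, so that the bilinear RIP bound $|\langle\vA\vu,\vA\vv\rangle|\le\delta_{2k}\|\vu\|_2\|\vv\|_2$ for disjointly supported $k$-sparse $\vu,\vv$ (a standard consequence of \eqref{RIP}) is applicable; and the sharpened replacements for the lossy step $\|\vh_{T_0}\|_1\le\sqrt k\,\|\vh_{T_0}\|_2$ that are responsible for improving $\sqrt2-1$ to $0.4931$. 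The output is an inequality $\|\vh_{T_0}\|_1\le\Phi(\delta_{2k})\,\|\vh_{T_0^c}\|_1$ with $\Phi$ an explicit, strictly increasing function on $[0,1)$. Where \cite{Mo-Li-11} only needs $\Phi(\delta_{2k})<1$, that is, $\delta_{2k}<0.4931$, here I need $\Phi(\delta_{2k})\le\tfrac{10}{11}$, which is precisely $\delta_{2k}\le 0.4404$; combined with the reduction above this proves the theorem.

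The main obstacle is bookkeeping rather than new ideas. The refinements \cite{Candes-rip08,Foucart-Lai-09,Foucart-rip-10,CWX10,Mo-Li-11} are all written to certify a strict threshold, so I must re-derive the chain of inequalities in \cite{Mo-Li-11} carrying every constant through, check that (under the standing assumption, which may as well be $\|\vh_{T_0^c}\|_1<\tfrac{11}{10}\|\vh_{T_0}\|_1$) every RIP constant that enters is genuinely $\delta_{2k}$ rather than $\delta_{3k}$ or $\delta_{2k+1}$ after the relevant supports are split into $k$-sparse pieces, and finally solve the scalar inequality $\Phi(\delta_{2k})\le\tfrac{10}{11}$ to confirm the value $0.4404$. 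As a sanity check, the crude Candes-type estimate yields only $\Phi(\delta)=\sqrt2\,\delta/(1-\delta)$ and hence the far weaker threshold $10/(10+11\sqrt2)\approx0.391$, which is why the sharper machinery of \cite{Mo-Li-11} is genuinely needed to reach $0.4404$.
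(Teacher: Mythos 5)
Your proposal is correct and follows essentially the same route as the paper: reduce via Theorem~\ref{thm:nsp} to the strengthened null-space inequality $\tfrac{11}{10}\|\vh_\cS\|_1\le\|\vh_{\cS^c}\|_1$ and then bound $\|\vh_\cS\|_1/\|\vh_{\cS^c}\|_1$ by the explicit Mo--Li quantity $\theta_{2k}$ of \eqref{theta}, checking $\theta_{2k}\le\tfrac{10}{11}$ at $\delta_{2k}=0.4404$. The only difference is that the paper cites Theorem~3.1 of \cite{Mo-Li-11} directly for the quantitative bound $\|\vh_\cS\|_1\le\theta_{2k}\|\vh_{\cS^c}\|_1$, so the constant-tracking re-derivation you plan is already available off the shelf.
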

\begin{proof} Let $\cS :=\mathrm{supp}(\vx^0)$ and $\cZ := \cS^c$. Theorem 3.1 in \cite{Mo-Li-11}
shows that any $\vh\in\mathrm{Null}(\vA)$ satisfies
$$
\|\vh_\cS\|_1 \le \theta_{2k} \|\vh_\cZ\|_1,
$$
where
\begin{equation}\label{theta}
\theta_{2k}:=\sqrt{\frac{4(1+5 \delta_{2k} - 4\delta^2_{2k})}{(1-\delta_{2k})(32 - 25 \delta_{2k})}}
\end{equation}
Hence, \eqref{nsp1} holds provided that
$$
\left(1 + \frac{\|\vx^0\|_\infty}{\alpha}\right)^{-1}
\ge\theta_{2k}
$$
or, in light of $\theta_{2k}<1$,
\begin{equation}\label{ripc2}
{\alpha} \ge \left(\theta_{2k}^{-1}-1\right)^{-1}\|\vx^0\|_\infty = \frac{\|\vx^0\|_\infty  \cdot \sqrt{4(1+5 \delta_{2k} - 4\delta^2_{2k})}}{\sqrt{(1-
\delta_{2k})(32 - 25 \delta_{2k})}-\sqrt{4(1+5 \delta_{2k} - 4\delta^2_{2k})}} .
\end{equation}
For $\delta_{2k}=0.4404$, we obtain $ \left(\theta_{2k}^{-1}-1\right)^{-1}\|\vx^0\|_\infty\approx 9.9849 \|\vx^0\|_\infty\le \alpha$, which
proves the theorem.
\end{proof}

\begin{remark} Different values of $\delta_{2k}$ are associated with different  conditions on $\alpha$. Following \eqref{ripc2}, if $\delta_{2k}\le 0.4715$,
$\alpha \ge 25\|\vx^0\|_\infty$ guarantees exact recovery. If  $\delta_{2k}\le 0.1273$, $\alpha\ge \|\vx^0\|_\infty$ guarantees exact recovery. In
general, a smaller $\delta_{2k}$ allows a smaller $\alpha$. \end{remark}

Next we study the case where $\vb$ is noisy or $\vx^0$ is not exactly  sparse, or both. For comparison, we present two inequalities next to each other for problems \eqref{P1n} and \eqref{P2} each, where the first one is easy to obtain; see \cite{Candes-rip08} for example.
\begin{lemma}\label{lmop}
Let $\vx^0\in\mathbb{R}^n$ be an \emph{arbitrary}  vector,  $\cS$ be the coordinate set of its $k$ largest components in magnitude,
and $\cZ := \{1, \cdots, n\}\setminus \cS$. Let $\bar{\vx}^*$ and $\vx^*$ be the solutions of \eqref{P1n} and \eqref{P3}, respectively.
The error vectors $\bar{\vh} = \bar{\vx}^*-\vx^0$ and $\vh=\vx^*-\vx^0$  satisfy
\begin{align}
\label{err1}
\|\bar{\vh}_\cZ\|_1 \le &\hspace{15pt} \|\bar{\vh}_\cS\|_1 +\hspace{6.8pt}2\|\vx^0_\cZ\|_1, \\
\label{err2}
\|\vh_\cZ\|_1 \le & ~C_3 \|\vh_\cS\|_1 +C_4\|\vx^0_\cZ\|_1,
\end{align}
where $\|\vx^0_\cZ\|_1$ is the best $k$-term approximation error of $\vx^0$ and
\begin{equation}\label{C34}
C_3  :=  \frac{\alpha + \|\vx^0_\cS\|_\infty}{\alpha -\|\vx^0_\cZ\|_\infty}\quad\hbox{and}\quad C_4  :=  \frac{2\alpha}{\alpha -\|\vx^0_\cZ\|_\infty}.
\end{equation}
\end{lemma}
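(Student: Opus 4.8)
The plan is to treat the two bounds separately, since \eqref{err1} is the standard cone inequality for basis pursuit and only \eqref{err2} requires the augmented argument. For \eqref{err1}, I would start from optimality of $\bar{\vx}^*$ for \eqref{P1n}: since $\vx^0$ is feasible (assuming $\|\vA\vx^0-\vb\|_2\le\sigma$, which is the standing assumption in this noisy setting), we have $\|\vx^0+\bar{\vh}\|_1\le\|\vx^0\|_1$. Splitting each $\ell_1$ norm over $\cS$ and $\cZ$ and applying the triangle inequality as in the sufficiency part of Theorem \ref{thm:nsp} gives $\|\vx^0_\cS\|_1-\|\bar{\vh}_\cS\|_1+\|\bar{\vh}_\cZ\|_1-\|\vx^0_\cZ\|_1\le\|\vx^0_\cS\|_1+\|\vx^0_\cZ\|_1$, which rearranges to \eqref{err1}. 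This is routine.

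For \eqref{err2}, the idea is to run the same comparison but now with the augmented objective $f(\vx):=\|\vx\|_1+\frac{1}{2\alpha}\|\vx\|_2^2$. By optimality of $\vx^*$ and feasibility of $\vx^0$, we have $f(\vx^0+\vh)\le f(\vx^0)$. I would expand $f(\vx^0+\vh)$ exactly as in the chain \eqref{xh12}, splitting over $\cS$ and $\cZ$ and using $\|\vx^0_\cS+\vh_\cS\|_1\ge\|\vx^0_\cS\|_1-\|\vh_\cS\|_1$, the cross-term bound $\langle\vx^0_\cS,\vh_\cS\rangle\ge-\|\vx^0_\cS\|_\infty\|\vh_\cS\|_1$, and on the $\cZ$ block $\|\vx^0_\cZ+\vh_\cZ\|_1\ge\|\vh_\cZ\|_1-\|\vx^0_\cZ\|_1$. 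Dropping the nonnegative quadratic remainder $\frac{1}{2\alpha}\|\vh\|_2^2$ and cancelling $\|\vx^0_\cS\|_1+\frac{1}{2\alpha}\|\vx^0_\cS\|_2^2$ from both sides, the inequality $f(\vx^0+\vh)\le f(\vx^0)$ becomes, after rearranging,
\begin{equation*}
\|\vh_\cZ\|_1-\|\vx^0_\cZ\|_1-\Bigl(1+\tfrac{\|\vx^0_\cS\|_\infty}{\alpha}\Bigr)\|\vh_\cS\|_1+\tfrac{1}{2\alpha}\|\vx^0_\cZ\|_2^2-\tfrac{1}{\alpha}\langle\vx^0_\cZ,\vh_\cZ\rangle\le\|\vx^0_\cZ\|_1+\tfrac{1}{2\alpha}\|\vx^0_\cZ\|_2^2.
\end{equation*}
The extra wrinkle compared with the exactly-sparse case is the cross term $-\frac{1}{\alpha}\langle\vx^0_\cZ,\vh_\cZ\rangle\ge-\frac{1}{\alpha}\|\vx^0_\cZ\|_\infty\|\vh_\cZ\|_1$, which carries a $\|\vh_\cZ\|_1$ with the ``wrong'' sign. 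Moving it to the left absorbs into the $\|\vh_\cZ\|_1$ coefficient, producing $(1-\frac{\|\vx^0_\cZ\|_\infty}{\alpha})\|\vh_\cZ\|_1$ on the left; dividing through by that positive factor (positive because $\alpha\ge10\|\vx^0\|_\infty>\|\vx^0_\cZ\|_\infty$) and discarding the leftover $\frac{1}{2\alpha}\|\vx^0_\cZ\|_2^2$ terms gives exactly $C_3$ and $C_4$ as in \eqref{C34}, after noting $\|\vx^0_\cZ\|_1$ is by definition the best $k$-term approximation error.

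The main obstacle — really the only delicate point — is handling the $\cZ$-block quadratic and cross terms cleanly: one must be careful to keep the remainder $\frac{1}{2\alpha}\|\vh\|_2^2$ and the term $\frac{1}{2\alpha}\|\vx^0_\cZ\|_2^2$ tracked through the cancellation so that what survives is only a $\|\vx^0_\cZ\|_1$-type quantity on the right-hand side (matching the claimed $C_4$), rather than a stray $\|\vx^0_\cZ\|_2^2$. Using $\langle\vx^0_\cZ,\vh_\cZ\rangle\ge-\|\vx^0_\cZ\|_\infty\|\vh_\cZ\|_1$ and $\frac{1}{2\alpha}\|\vx^0_\cZ\|_2^2\ge0$ at the right moments makes everything collapse to the stated form; the rest is bookkeeping parallel to the proof of Theorem \ref{thm:nsp}.
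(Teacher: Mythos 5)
Your argument for \eqref{err2} is essentially the paper's own proof: compare the augmented objective at the minimizer $\vx^*$ and the feasible point $\vx^0$, split over $\cS$ and $\cZ$, bound the cross terms via $|\langle\va,\vb\rangle|\le\|\va\|_\infty\|\vb\|_1$, drop the nonnegative remainder $\frac{1}{2\alpha}\|\vh\|_2^2$, and divide by the positive factor $1-\|\vx^0_\cZ\|_\infty/\alpha$; your proof of \eqref{err1} is the standard cone argument the paper omits, and it is correct. (Minor note: the cross term in your displayed inequality should be $+\frac{1}{\alpha}\langle\vx^0_\cZ,\vh_\cZ\rangle$ rather than $-\frac{1}{\alpha}\langle\vx^0_\cZ,\vh_\cZ\rangle$, but since $\pm\langle\vx^0_\cZ,\vh_\cZ\rangle\ge-\|\vx^0_\cZ\|_\infty\|\vh_\cZ\|_1$ either way, the conclusion is unaffected.)
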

\begin{proof} 
We only show \eqref{err2}. Since $\vx^* = \vx^0+\vh$ is the minimizer of \eqref{P3}, we have
\begin{equation}
\label{xhh1}
\|\vx^0+\vh\|_1 + \frac{1}{2\alpha}\|\vx^0+\vh\|_2^2 \le \|\vx^0\|_1 + \frac{1}{2\alpha}\|\vx^0\|_2^2.
\end{equation}
Also,
\begin{eqnarray}
\nonumber
\|\vx^0+\vh\|_1 + \frac{1}{2\alpha}\|\vx^0+\vh\|_2^2 & = & \|\vx^0_\cS+\vh_\cS\|_1 + \frac{1}{2\alpha}\|\vx^0_\cS+\vh_\cS\|_2^2 + \|\vx^0_\cZ+\vh_\cZ\|_1
+ \frac{1}{2\alpha}\|\vx^0_\cZ+\vh_\cZ\|_2^2\\
\nonumber
& \ge & \|\vx^0_\cS\|_1 - \|\vh_\cS\|_1 + \frac{1}{2\alpha}\|\vx^0_\cS\|_2^2 - \frac{1}{\alpha}|\langle \vx^0_\cS,\vh_\cS \rangle|
+ \frac{1}{2\alpha}\|\vh_\cS\|_2^2\\
\nonumber
& & + \|\vh_\cZ\|_1 - \|\vx^0_\cZ\|_1 + \frac{1}{2\alpha}\|\vx^0_\cZ\|_2^2 - \frac{1}{\alpha}|\langle \vx^0_\cZ,\vh_\cZ \rangle| +
\frac{1}{2\alpha}\|\vh_\cZ\|_2^2\\
\nonumber
& = & (\|\vx^0\|_1 + \frac{1}{2\alpha}\|\vx^0\|_2^2) - 2\|\vx^0_\cZ\|_1 - (\|\vh_\cS\|_1 + \frac{1}{\alpha}|\langle \vx^0_\cS,\vh_\cS \rangle|)\\
\nonumber
& &  + ( \|\vh_\cZ\|_1  - \frac{1}{\alpha}|\langle \vx^0_\cZ, \vh_\cZ \rangle| ) + \frac{1}{2\alpha}\|\vh\|_2^2\\
\nonumber
&\ge & (\|\vx^0\|_1 + \frac{1}{2\alpha}\|\vx^0\|_2^2) - 2\|\vx^0_\cZ\|_1-\left(1+\frac{\|\vx_\cS^0\|_\infty}{\alpha}\right)\|\vh_\cS\|_1\\
\label{xhh2}
& & + \left( 1 - \frac{\|\vx^0_\cZ\|_\infty}{\alpha}\right) \|\vh_\cZ\|_1 + \frac{1}{2\alpha}\|\vh\|_2^2,
\end{eqnarray}
where the first inequality follows from the triangle inequality, and the second from $\langle\va,\vb\rangle \le \|\va\|_\infty\|\vb\|_1$.
Combining \eqref{xhh1} and \eqref{xhh2}, we obtain
$$   \left( 1 - \frac{\|\vx^0_\cZ\|_\infty}{\alpha}\right) \|\vh_\cZ\|_1+ \frac{1}{2\alpha}\|\vh\|_2^2 \le \left(1+\frac{\|\vx_\cS^0\|_\infty}{\alpha}\right)\|\vh_\cS\|_1 + 2\|\vx^0_\cZ\|_1 $$
and thus \eqref{err2} after dropping the nonnegative  term $ \frac{1}{2\alpha}\|\vh\|^2$.
\end{proof}
We now present the stable recovery guarantee.
\begin{theorem}[RIP condition for stable recovery]
\label{main2}
Assume the setting of Lemma \ref{lmop}.
Let $\vb:=A \vx^0 + \vn$, where $\vn$ is an \emph{arbitrary} noisy vector with $\|\vn\|_2\le \sigma$.
If $\vA$ satisfies RIP with $\delta_{2k} \le 0.3814$,
then the solution $\vx^*$ of \eqref{P3} with any $\alpha\ge 10\|\vx^0\|_\infty$ satisfies
\begin{align}
\label{bnd1}
\|\vx^* - \vx^0\|_1  \le & C_1 \cdot \sqrt{k}\|\vn\|_2 + C_2 \cdot\|\vx^0_\cZ\|_1,\\
\label{bnd2}
\|\vx^* - \vx^0\|_2  \le & \bar{C}_1 \cdot \|\vn\|_2 + \bar{C}_2 \cdot\|\vx^0_\cZ\|_1/\sqrt{k},
\end{align}
where $C_1$, $C_2$, $\bar{C}_1$, and $\bar{C}_2$ are given in \eqref{sc1}--\eqref{sbc2} as functions of only $\delta_{2k}$, $C_3$,
and $C_4$ in \eqref{C34}.
\end{theorem}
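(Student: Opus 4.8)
The plan is to combine the cone/tube inequality \eqref{err2} of Lemma~\ref{lmop} with a restricted-isometry estimate adapted from \cite{Mo-Li-11}, then lift a ``head'' bound on the error to the full $\ell_1$ and $\ell_2$ bounds.

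First I would record the two ingredients that do not yet use RIP. Write $\vh:=\vx^*-\vx^0$. Since $\|\vn\|_2\le\sigma$, the true vector $\vx^0$ is feasible for \eqref{P3}, so $\|\vA\vh\|_2\le\|\vA\vx^*-\vb\|_2+\|\vb-\vA\vx^0\|_2\le 2\sigma$ (with the natural choice $\sigma=\|\vn\|_2$ this is $2\|\vn\|_2$, matching the stated bounds). Next, under $\alpha\ge 10\|\vx^0\|_\infty$ the constants in \eqref{C34} are controlled $\alpha$-freely: since $\|\vx^0_\cS\|_\infty\le\|\vx^0\|_\infty$ and $\|\vx^0_\cZ\|_\infty\le\|\vx^0\|_\infty$, and both $C_3,C_4$ are decreasing in $\alpha$, we get $C_3\le\tfrac{11}{9}$ and $C_4\le\tfrac{20}{9}$. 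Thus \eqref{err2} reads $\|\vh_\cZ\|_1\le C_3\|\vh_\cS\|_1+C_4\|\vx^0_\cZ\|_1$ with these explicit bounds, where $\cS$ is the support of the $k$ largest entries of $\vx^0$ and $\cZ=\cS^c$.

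Now the RIP step. Partition $\cZ$ into blocks $\cS_1,\cS_2,\dots$ of size $k$ in order of decreasing $|\vh|$, put $T:=\cS\cup\cS_1$ (so $|T|\le 2k$), and use the standard monotonicity bound $\sum_{j\ge2}\|\vh_{\cS_j}\|_2\le k^{-1/2}\|\vh_\cZ\|_1$. Expanding $\|\vA\vh_T\|_2^2=\langle\vA\vh_T,\vA\vh\rangle-\sum_{j\ge2}\langle\vA\vh_T,\vA\vh_{\cS_j}\rangle$ and invoking the $\delta_{2k}$-only cross-term estimates of \cite{Mo-Li-11} yields, after dividing by $\|\vh_T\|_2$,
\begin{equation}\label{rip-head}
(1-\delta_{2k})\|\vh_T\|_2\ \le\ 2\sqrt{1+\delta_{2k}}\,\sigma\ +\ \gamma(\delta_{2k})\,k^{-1/2}\|\vh_\cZ\|_1,
\end{equation}
where $\gamma(\delta_{2k})$ is an explicit increasing function of $\delta_{2k}$ (a crude parallelogram estimate gives $\gamma=\sqrt2\,\delta_{2k}$, and the sharp Mo--Li constants improve it). Substituting $k^{-1/2}\|\vh_\cZ\|_1\le C_3\|\vh_T\|_2+C_4 k^{-1/2}\|\vx^0_\cZ\|_1$ into \eqref{rip-head} and collecting the $\|\vh_T\|_2$ terms gives
\begin{equation}\label{rip-solve}
\bigl(1-\delta_{2k}-\gamma(\delta_{2k})C_3\bigr)\,\|\vh_T\|_2\ \le\ 2\sqrt{1+\delta_{2k}}\,\sigma\ +\ \gamma(\delta_{2k})\,C_4\,k^{-1/2}\|\vx^0_\cZ\|_1,
\end{equation}
which can be solved for $\|\vh_T\|_2$ exactly when the bracketed coefficient is positive. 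With $C_3=\tfrac{11}{9}$ and the Mo--Li form of $\gamma$, this positivity is equivalent to $\delta_{2k}\le 0.3814$, and it produces $\|\vh_T\|_2\le a\,\sigma+b\,k^{-1/2}\|\vx^0_\cZ\|_1$ with $a,b$ depending only on $\delta_{2k},C_3,C_4$.

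Finally I would pass from the head to the whole vector. For the $\ell_2$ bound, $\|\vh\|_2\le\|\vh_T\|_2+\sum_{j\ge2}\|\vh_{\cS_j}\|_2\le\|\vh_T\|_2+k^{-1/2}\|\vh_\cZ\|_1\le(1+C_3)\|\vh_T\|_2+C_4 k^{-1/2}\|\vx^0_\cZ\|_1$; substituting \eqref{rip-solve} gives \eqref{bnd2}. For the $\ell_1$ bound, $\|\vh\|_1=\|\vh_\cS\|_1+\|\vh_\cZ\|_1\le(1+C_3)\|\vh_\cS\|_1+C_4\|\vx^0_\cZ\|_1\le(1+C_3)\sqrt k\,\|\vh_T\|_2+C_4\|\vx^0_\cZ\|_1$; substituting \eqref{rip-solve} gives \eqref{bnd1}. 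Reading off the coefficients yields the claimed $C_1,C_2,\bar C_1,\bar C_2$ of the form \eqref{sc1}--\eqref{sbc2}. The main obstacle is the constant bookkeeping in the middle step: one must use the \emph{sharp} $\delta_{2k}$-only cross-term inequalities of \cite{Mo-Li-11} (the naive parallelogram bound only yields a threshold around $0.37$), verify that inserting $C_3=\tfrac{11}{9}$ into the positivity condition for \eqref{rip-solve} reproduces exactly $\delta_{2k}\le 0.3814$, and confirm that the closed forms that fall out agree with \eqref{sc1}--\eqref{sbc2}; everything else is routine.
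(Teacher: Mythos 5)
Your overall strategy---feed the cone inequality \eqref{err2} of Lemma \ref{lmop} into an RIP estimate and solve the resulting self-bounding inequality for the error---is the same as the paper's, but your RIP step is the classical Cand\`es-style head bound on $\|\vh_T\|_2$ with $T=\cS\cup\cS_1$, whereas the paper quotes Lemma 4.3 of \cite{Mo-Li-11}, which bounds the head in $\ell_1$, namely $\|\vh_\cS\|_1\le \frac{2\sqrt 2}{\sqrt{1-\delta_{2k}}}\sqrt k\,\|\vn\|_2+\theta_{2k}\|\vh_\cZ\|_1$ with $\theta_{2k}$ from \eqref{theta}, closes the loop through the condition $C_3\theta_{2k}<1$, and then obtains \eqref{bnd2} from the $\ell_2$ inequality on page 7 of \cite{Mo-Li-11}. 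That difference is where your argument has a genuine gap, and it is precisely the step you yourself flag as ``the main obstacle'': it is not bookkeeping, it is the quantitative content of Theorem \ref{main2}.

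Concretely, the only cross-term constant you actually justify is $\gamma=\sqrt2\,\delta_{2k}$, and then your positivity condition $1-\delta_{2k}-\gamma C_3>0$ with the worst-case $C_3=11/9$ gives $\delta_{2k}<\bigl(1+\tfrac{11\sqrt2}{9}\bigr)^{-1}\approx 0.367$, i.e.\ a strictly weaker theorem than the stated $\delta_{2k}\le 0.3814$. Your proposed fix---``invoke the sharp Mo--Li cross-term estimates''---is asserted rather than performed, and it is not a drop-in: the sharp constants of \cite{Mo-Li-11} are not packaged as a coefficient $\gamma(\delta_{2k})$ multiplying $k^{-1/2}\|\vh_\cZ\|_1$ in a head-$\ell_2$ inequality of your form; they arise from a square-root-lifting argument on $\ell_1$ block sums with an optimization over a split parameter (compare the appendix proof of Theorem \ref{mainM0} in this paper), and the quantity that falls out is exactly $\theta_{2k}$ in the $\ell_1$ head inequality above. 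To recover the threshold $0.3814$ your bracket would have to satisfy $1-\delta_{2k}-\gamma C_3>0$ with $\gamma=(1-\delta_{2k})\theta_{2k}$, which you never establish, and the closed forms your route produces (denominators $1-\delta_{2k}-\gamma C_3$, noise coefficient $2\sqrt{1+\delta_{2k}}$) do not coincide with \eqref{sc1}--\eqref{sbc2}, which are built from $\theta_{2k}$, $2\sqrt2/\sqrt{1-\delta_{2k}}$, and $1-C_3\theta_{2k}$. So either you redo the Mo--Li analysis inside your $T$-block framework (essentially reproving their Lemma 4.3), or you use their Lemma 4.3 and their page-7 $\ell_2$ bound directly, as the paper does; as written, your middle step delivers neither the stated RIP threshold nor the stated constants.
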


\begin{proof}
We follow an argument similar to that in \cite{Mo-Li-11}.
According to Lemma 4.3 of
\cite{Mo-Li-11}, from $\|\vA\vh\|_2=\|\vA\vx^* - \vA\vx^0\|_2=\|\vA\vx^*-\vb+\vn\|_2\le \|A\vx^*-\vb\|_2+\|\vn\|_2\le 2\|\vn\|_2$ and
$\delta_{2k}<2/3$, we obtain
\begin{equation}
\label{hsz2}
\|\vh_\cS\|_1 \le \frac{2\sqrt{2}}{\sqrt{1-\delta_{2k}}}\sqrt{k}\|\vn\|_2+ \theta_{2k}\|\vh_\cZ\|_1,
\end{equation}
where $\theta_{2k}$ is defined in \eqref{theta} as a function of $\delta_{2k}$. It is easy to verify that with the choice of $\delta_{2k}\le 0.3814$ and $\alpha$
 in the theorem, $C_3\theta_{2k}<1$ holds for all nonzero $\vx^0$. 
Hence, combining \eqref{err2} of Lemma \ref{lmop}
and \eqref{hsz2} yield the bound of $\|\vh_\cZ\|_1$:
\begin{equation}
\label{hsz3}
\|\vh_\cZ\|_1 \le (1-C_3\theta_{2k})^{-1}\left(C_3 \frac{2\sqrt{2}}{\sqrt{1-\delta_{2k}}}\sqrt{k}\|\vn\|_2 +  C_4\|\vx^0_\cZ\|_1\right).
\end{equation}
Applying \eqref{hsz2} and \eqref{hsz3} gives us \eqref{bnd1} or
\begin{eqnarray*}
\|\vx^* - \vx^0\|_1=\|\vh\|_1 &=& \|\vh_\cS\|_1+ \|\vh_\cZ\|_1 \\
&\le & \frac{2\sqrt{2}}{\sqrt{1-\delta_{2k}}}\sqrt{k}\|\vn\|_2+ (1+ \theta_{2k}) \|\vh_\cZ\|_1\\
&\le &C_1\sqrt{k}\|\vn\|_2 + C_2\|\vx^0 - \sigma_k(x^0)\|_1,
\end{eqnarray*}
where
\begin{subequations}
\begin{eqnarray}
\label{sc1}
C_1 & = & \frac{2\sqrt{2}(1+C_3)}{\sqrt{1-\delta_{2k}}(1-C_3\theta_{2k})},\\
\label{sc2}
C_2 & = & \frac{(1+\theta_{2k}) C_4}{1-C_3\theta_{2k}}.
\end{eqnarray}
\end{subequations}
To prove \eqref{bnd2},  we apply  \eqref{hsz3} to the inequality (Page 7 of \cite{Mo-Li-11})
$$
\|\vh\|_2 \le \frac{2}{\sqrt{1-\delta_{2k}}}\|\vn\|_2 + \sqrt{\frac{8(2-\delta_{2k})}{(1-\delta_{2k})(32-25\delta_{2k})}}\cdot\frac{\|\vh_\cZ\|_1}{\sqrt{k}},
$$
and obtain \eqref{bnd2} or
$$
\|x^* - x^0\|_2=\|\vh\|_2 \le \bar{C}_1\|\vn\|_2 + \bar{C}_2\|\vx^0 - x^0_{[k]}\|_1/\sqrt{k},
$$
where
\begin{subequations}
\begin{eqnarray}
\label{sbc1}
\bar{C}_1 & := & \frac{2}{\sqrt{1-\delta_{2k}}}\left(\frac{4C_3}{1-C_3\theta_{2k}}\sqrt{\frac{2-\delta_{2k}}{(1-\delta_{2k})(32-25\delta_{2k})}}+1\right),\\
\label{sbc2}
\bar{C}_2 & := & \frac{2C_4}{1-C_3\theta_{2k}}\sqrt{\frac{2(2-\delta_{2k})}{(1-\delta_{2k})(32-25\delta_{2k})}}.
\end{eqnarray}
\end{subequations}
\end{proof}
\begin{remark}
A key inequality in the proof above is  $C_3\theta_{2k}<1$, where $C_3$ (cf. \eqref{C34}) depends on $\alpha$, $\|\vx^0_\cS\|_\infty$, and
$\|\vx^0_\cZ\|_\infty$, and $\theta_{2k}$ (cf. \eqref{theta}) depends on $\delta_{2k}$. If the nonzeros of $\vx^0$ decay faster in magnitude, $C_3$
becomes smaller
and thus the condition $C_3\theta_{2k}<1$ is \emph{easier} to hold. Therefore, a faster decaying $\vx^0$ is easier to recover. This is consistent with the numerical simulation in subsection \ref{motv}. In Theorem \ref{main2}, the
condition on $\delta_{2k}$ and bound on $\alpha$ are given
for the worst case corresponding to no decay, namely, $\|\vx^0_\cS\|_\infty=\|\vx^0_\cZ\|_\infty$. If $\|\vx^0_\cS\|_\infty >\|\vx^0_\cZ\|_\infty$, one can allow a
larger $\delta_{2k}$ for each fixed $\alpha$ or, equivalently, a smaller $\alpha$ for each fixed $\delta_{2k}$. For example, if $\|\vx^0_\cS\|_\infty
\ge 10 \|\vx^0_\cZ\|_\infty$, one only needs $\delta_{2k}\le 0.4348$ instead of the theorem-assumed condition  $\delta_{2k}\le 0.3814$.

There is also a trade-off between $\delta_{2k}$ and $\alpha$.  Under the worst case $\|\vx^0_\cS\|_\infty=\|\vx^0_\cZ\|_\infty$, imposing to
$\alpha\ge 25\|\vx^0\|_\infty$ leads to the relaxed condition    $\delta_{2k}\le 0.4489$. 
\end{remark}

\subsection{Spherical section property}

Next, we derive exact and stable recovery conditions based on the spherical section property (SSP) \cite{Zhang-08,Vavasis-09} of $\vA$, which has the advantage
of invariance to left-multiplying nonsingular matrices to the sensing matrix $\vA$, as pointed out in \cite
{Zhang-08}. On the other hand, more matrices are known to satisfy the  RIP than the SSP.
\begin{definition}[$\Delta$-SSP \cite{Vavasis-09}] Let $m$ and $n$ be two  integers such that $m>0,~n>0$, and $m<n$. An $(n-m)$ dimensional subspace $\cV\subset
\mathbb{R}^n$ has the $\Delta$ spherical section property if
\beq\label{ssd}
\frac{\|\vh\|_1}{\|\vh\|_2} \ge \sqrt{\frac{m}{\Delta}}
\eeq
holds for all nonzero $\vh\in \cV$.
\end{definition}
To see the significance of \eqref{ssd}, we note that (i) $\frac{\|\vh\|_1}{\|\vh\|_2} \ge 2\sqrt{k}$ for all $\vh\in\Null(\vA)$ is a sufficient
condition for the NSP inequality \eqref{nsp} and (ii) due to \cite{Kashin-77,Garnaev-Gluskin-84}, a uniformly random $(n-m)$-dimensional subspace
$\cV\subset\RR^n$ has the SSP for
$$\Delta = C_0(\log(n/m)+1)$$
with probability at least $1-\exp(C_1(n-m))$, where $C_0$ and $C_1$ are
universal constants. Hence, $m> 4k\Delta$ guarantees \eqref{nsp} to hold, and furthermore, if $\Null(\vA)$ is uniformly random, $m=O(k\log(n/m))$ is
sufficient for \eqref{nsp} to hold with overwhelming probability \cite{Zhang-08,Vavasis-09}. 
These results can be extended to the augmented model \eqref{P2}.

\begin{theorem}[SSP condition for exact recovery]\label{thm:ssp}
Suppose $\Null(\vA)$ satisfies the  $\Delta$-SSP. Let us fix $\|\vx^0\|_\infty$ and  $\alpha>0$. If
\beq\label{ssd1}
m\ge   \left(2+ \frac{\|\vx^0\|_\infty}{\alpha}\right)^2k \Delta,
\eeq
then the null-space condition \eqref{nsp1} holds for all $\vh\in\Null(\vA)$ and coordinate sets $\cS$ of cardinality $|\cS|\le k$. By Theorem
\ref{thm:nsp}, \eqref{ssd1} guarantees that problem \eqref{P2} recovers any $k$-sparse $\vx^0$ from measurements $\vb=\vA\vx^0$.
\end{theorem}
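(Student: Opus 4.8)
The plan is to reduce the SSP condition \eqref{ssd1} to the null-space condition \eqref{nsp1} of Theorem~\ref{thm:nsp}, after which the recovery claim follows immediately. So I would fix any nonzero $\vh\in\Null(\vA)$ and any coordinate set $\cS$ with $|\cS|\le k$, and try to show
$$\left(1+\frac{\|\vx^0\|_\infty}{\alpha}\right)\|\vh_\cS\|_1\le\|\vh_{\cS^c}\|_1.$$
Write $\beta:=1+\frac{\|\vx^0\|_\infty}{\alpha}$ for brevity, so that $\beta+1=2+\frac{\|\vx^0\|_\infty}{\alpha}$, which is exactly the quantity squared in \eqref{ssd1}. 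The natural route is to bound $\|\vh_\cS\|_1$ from above using Cauchy--Schwarz on the $k$-element set $\cS$: $\|\vh_\cS\|_1\le\sqrt{k}\,\|\vh_\cS\|_2\le\sqrt{k}\,\|\vh\|_2$. Meanwhile the SSP \eqref{ssd} gives $\|\vh\|_1\ge\sqrt{m/\Delta}\,\|\vh\|_2$, i.e. $\|\vh\|_2\le\sqrt{\Delta/m}\,\|\vh\|_1$. Combining, $\|\vh_\cS\|_1\le\sqrt{k\Delta/m}\,\|\vh\|_1$.

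The key algebraic manipulation is to turn the bound on $\|\vh_\cS\|_1$ in terms of $\|\vh\|_1=\|\vh_\cS\|_1+\|\vh_{\cS^c}\|_1$ into the desired bound in terms of $\|\vh_{\cS^c}\|_1$. From $\|\vh_\cS\|_1\le\sqrt{k\Delta/m}\,(\|\vh_\cS\|_1+\|\vh_{\cS^c}\|_1)$ one solves for $\|\vh_\cS\|_1$, obtaining
$$\|\vh_\cS\|_1\le\frac{\sqrt{k\Delta/m}}{1-\sqrt{k\Delta/m}}\,\|\vh_{\cS^c}\|_1,$$
valid once $\sqrt{k\Delta/m}<1$ (guaranteed by \eqref{ssd1} since $\beta+1\ge 2$). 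It then suffices to check that $\beta\cdot\frac{\sqrt{k\Delta/m}}{1-\sqrt{k\Delta/m}}\le 1$, i.e. $(\beta+1)\sqrt{k\Delta/m}\le 1$, i.e. $m\ge(\beta+1)^2k\Delta$, which is precisely \eqref{ssd1}. Finally, invoking Theorem~\ref{thm:nsp} yields the uniform exact recovery claim.

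I do not expect a serious obstacle here; the argument is a short chain of Cauchy--Schwarz, the SSP inequality, and solving a linear inequality for $\|\vh_\cS\|_1$. The one point requiring a little care is that Cauchy--Schwarz should be applied with $\|\vh_\cS\|_2\le\|\vh\|_2$ (dropping the complement's contribution to the $\ell_2$ norm) rather than trying to keep $\|\vh_\cS\|_2$ and $\|\vh_{\cS^c}\|_2$ separate, since the SSP only controls the full-vector ratio $\|\vh\|_1/\|\vh\|_2$; any attempt to split the $\ell_2$ norm would lose the clean constant. It is also worth noting explicitly that $\vh=\vzero$ is excluded (both \eqref{ssd} and \eqref{nsp1} are trivial or vacuous there), and that the strictness/non-strictness of the inequalities matches what Theorem~\ref{thm:nsp} requires (it uses the non-strict form \eqref{nsp1}), so no boundary issue arises from \eqref{ssd1} holding with equality.
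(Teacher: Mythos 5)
Your proposal is correct and follows essentially the same route as the paper: both reduce \eqref{ssd1} to the null-space condition \eqref{nsp1} via $\|\vh_\cS\|_1\le\sqrt{k}\,\|\vh\|_2$ and the SSP bound on $\|\vh\|_1/\|\vh\|_2$, the only difference being that the paper states \eqref{nsp1} equivalently as $\bigl(2+\|\vx^0\|_\infty/\alpha\bigr)\|\vh_\cS\|_1\le\|\vh\|_1$ while you solve the same linear inequality for $\|\vh_\cS\|_1$ explicitly.
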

\begin{proof}
Let $\cS$ be a coordinate set with $|\cS|\le k$. Condition \eqref{nsp1} is equivalent to
\beq\label{nsp2}
\left(2 + \frac{\|\vx^0_\cS\|_\infty}{\alpha}\right)\|\vh_{\cS}\|_1 \le \|\vh\|_1,
\eeq
Since $\|\vh_\cS\|_1\le \sqrt{k}\|\vh_\cS\|_2\le\sqrt{k}\|\vh\|_2$, \eqref{nsp2} holds provided that
\beq
\left(2
+ \frac{\|\vx^0\|_\infty}{\alpha}\right)\sqrt{k}\le \frac{\|\vh\|_1}{\|\vh\|_2},
\eeq
which  itself holds, in light of \eqref{ssd}, provided that \eqref{ssd1} holds.
\end{proof}
Now we consider the case $\vA \vx^0=\vb$ where $\vx^0$ is an approximately sparse vector.
\begin{theorem}[SSP condition for stable recovery]\label{thm:ssp1} Suppose $\Null(\vA)$ satisfies the  $\Delta$-SSP. Let $\vx^0\in\mathbb{R}^n$ be an \emph{arbitrary}  vector,  $\cS$ be the coordinate set of
its $k$ largest components in magnitude, and $\cZ := \{1, \cdots, n\}\setminus \cS$. Let $\alpha>0$ in problem \eqref{P2}. Let $C_3$ and $C_4$ be defined in
\eqref{C34}, which depend on $\alpha$. If
\beq\label{ssd2}
m\ge   4\left(1+ C_3\right)^2k \Delta,
\eeq
 then the solution $\vx^*$ of \eqref{P2} satisfies
\beq\label{vx4}
\|\vx^*-\vx^0\|_1\le 4C_4\|\vx_{\cZ}^0 \|_1,
\eeq
where $\|\vx_{\cZ}^0 \|_1$ is the best $k$-term approximation error of $\vx^0$.
\end{theorem}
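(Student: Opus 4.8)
The plan is to reuse the cone-condition argument from the proof of Theorem \ref{main2}, but with the SSP inequality \eqref{ssd} playing the role that \eqref{hsz2} played there. First I would record that, since \eqref{P2} is the $\sigma=0$ instance of \eqref{P3} and $\vb=\vA\vx^0$, the error vector $\vh:=\vx^*-\vx^0$ satisfies $\vA\vh=\vzero$, i.e. $\vh\in\Null(\vA)$. Moreover Lemma \ref{lmop} applies verbatim to $\vx^*$: its proof of \eqref{err2} used only the optimality of $\vx^*$ for \eqref{P3}, which specializes to the equality-constrained \eqref{P2}. Hence \eqref{err2} gives $\|\vh_\cZ\|_1\le C_3\|\vh_\cS\|_1+C_4\|\vx^0_\cZ\|_1$ with $C_3,C_4$ as in \eqref{C34}; this implicitly uses $\alpha>\|\vx^0_\cZ\|_\infty$ so that $C_3,C_4$ are positive and finite (automatic once, e.g., $\alpha\ge 10\|\vx^0\|_\infty$).

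Next I would bring in the SSP. Since $|\cS|\le k$ we have $\|\vh_\cS\|_1\le\sqrt{k}\,\|\vh_\cS\|_2\le\sqrt{k}\,\|\vh\|_2$, and by \eqref{ssd} applied to $\vh\in\Null(\vA)$ we get $\|\vh\|_2\le\sqrt{\Delta/m}\,\|\vh\|_1$, so $\|\vh_\cS\|_1\le\sqrt{k\Delta/m}\,\|\vh\|_1$. Splitting $\|\vh\|_1=\|\vh_\cS\|_1+\|\vh_\cZ\|_1$ and inserting \eqref{err2} gives $\|\vh\|_1\le(1+C_3)\|\vh_\cS\|_1+C_4\|\vx^0_\cZ\|_1$. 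Chaining these yields
$$\|\vh_\cS\|_1\le\sqrt{\tfrac{k\Delta}{m}}\Big((1+C_3)\|\vh_\cS\|_1+C_4\|\vx^0_\cZ\|_1\Big).$$
The hypothesis \eqref{ssd2}, $m\ge 4(1+C_3)^2k\Delta$, is precisely what makes $\sqrt{k\Delta/m}\le\frac{1}{2(1+C_3)}$, so the coefficient of $\|\vh_\cS\|_1$ on the right is at most $1/2$; moving that term to the left gives $\|\vh_\cS\|_1\le\frac{C_4}{1+C_3}\|\vx^0_\cZ\|_1$.

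Finally I would plug this back: $\|\vx^*-\vx^0\|_1=\|\vh\|_1\le(1+C_3)\|\vh_\cS\|_1+C_4\|\vx^0_\cZ\|_1\le 2C_4\|\vx^0_\cZ\|_1$, which is in fact stronger than the claimed \eqref{vx4}; one can simply weaken $2C_4$ to $4C_4$ to match the statement (the slack presumably leaves room for a noisy variant or for cruder bookkeeping). I do not expect a genuine obstacle here: the argument is a direct transcription of the RIP stable-recovery proof with \eqref{ssd} substituted for \eqref{hsz2}. The only points that need care are (a) checking that Lemma \ref{lmop}, stated for \eqref{P3}, legitimately specializes to \eqref{P2}; (b) tracking positivity/finiteness of $C_3,C_4$, which pins down the standing requirement $\alpha>\|\vx^0_\cZ\|_\infty$; and (c) getting the absorption constant exactly right so that \eqref{ssd2} comes out as the clean threshold $4(1+C_3)^2k\Delta$.
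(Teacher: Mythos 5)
Your proposal is correct and follows essentially the same route as the paper: Lemma \ref{lmop} (specialized to \eqref{P2}), the Cauchy--Schwarz bound $\|\vh_\cS\|_1\le\sqrt{k}\|\vh\|_2$, and the SSP inequality \eqref{ssd}, combined under the threshold \eqref{ssd2}. The only difference is bookkeeping: you absorb the $\|\vh_\cS\|_1$ term directly instead of running the paper's case analysis on the ratio $\bar{C}=\|\vh\|_1/\|\vx^0_\cZ\|_1$, which incidentally yields the slightly sharper constant $2C_4$ in place of $4C_4$.
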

\begin{proof}
Let $\vh=\vx^*-\vx^0\in\Null(\vA)$. Let
\beq\label{bc}
\bar{C}=\frac{\|\vh\|_1}{\|\vx_\cZ^0\|_1}.
\eeq
Then \eqref{vx4} is equivalent to
\beq\label{acc}
\bar{C}\le 4C_4.
\eeq
Adding $\|\vh_\cS\|_1$ to \eqref{err2} and plugging in \eqref{bc} gives us
\beq
\|\vh\|_1 \le (1+C_3) \|\vh_\cS\|_1 +2C_4\bar{C}^{-1}\|\vh\|_1,
\eeq
or $(1-2C_4\bar{C}^{-1})\|\vh\|_1\le (1+C_3)\|\vh_\cS\|_1$. If $\bar{C}\le 2C_4$, \eqref{acc} naturally holds. Otherwise, we
have $ \bar{C}> 2C_4$ and \beq\label{d2c4}
\|\vh\|_1\le\frac{1+C_3}{1-2C_4\bar{C}^{-1}}\|\vh_\cS\|_1\le \frac{(1+C_3)\sqrt{k}}{1-2C_4\bar{C}^{-1}}\|\vh\|_2.
\eeq
Now, combining $\Delta$-SSP and \eqref{ssd2}, we obtain
\beq
\frac{\|\vh\|_1}{\|\vh\|_2}\ge \sqrt{\frac{m}{\Delta}}\ge 2\left(1+ C_3\right)\sqrt{k},
\eeq
which together with \eqref{d2c4} gives \eqref{acc}.
\end{proof}


\subsection{``{RIPless}'' analysis}
The ``{RIPless}'' analysis \cite{Candes-Plan-10} gives non-uniform recovery guarantees for a wide class of compressive sensing matrices such as those with iid subgaussian entries, orthogonal transform ensembles satisfying an incoherence condition, random Toeplitz/circulant ensembles, as well as certain tight and continuous frame ensembles, at $O(k\log(n))$ measurements. This analysis is especially useful in situations where the RIP, as well as NSP and SSP, is difficult to check or does not hold. In this subsection, we describe how to adapt the ``RIPless'' analysis to model \eqref{P2}.

\begin{theorem}[RIPless for exact recovery]\label{thm:ripless}
Let $\vx^0\in\RR^n$ be a fixed $k$-sparse vector. With probability at least $1-5/n-e^{-\beta}$, $\vx^0$ is the unique solution to problem \eqref{P2} with $\vb=\vA \vx^0$ and $\alpha \ge 8\|\vx^0\|_2$ as long as the number of measurements
$$
m\ge C_0(1+\beta)\mu(\vA)\cdot k\log n,
$$
where $C_0$ is a universal constant and $\mu(\vA)$ is the incoherence parameter of $\vA$ (see \cite{Candes-Plan-10} for its definition and values for various kinds of compressive sensing matrices).
\end{theorem}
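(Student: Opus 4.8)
The plan is to adapt the dual-certificate (``golfing scheme'') machinery of \cite{Candes-Plan-10} to the augmented objective $g(\vx):=\|\vx\|_1+\frac{1}{2\alpha}\|\vx\|_2^2$, rather than trying to route through the \emph{uniform} condition \eqref{nsp1}, which is stronger than what non-uniform recovery of a fixed $\vx^0$ needs. First observe that $g$ is \emph{strictly} convex, so \eqref{P2} has at most one minimizer; it therefore suffices to certify that the fixed $k$-sparse $\vx^0$ is \emph{a} minimizer. Writing $\cS:=\supp(\vx^0)$ and $\cZ:=\cS^c$, subdifferential calculus gives $\partial g(\vx^0)=\partial\|\vx^0\|_1+\frac{1}{\alpha}\vx^0$, so the natural target for a dual certificate $\vv\in\Range(\vA^\top)$ is the \emph{augmented sign vector} $\tilde{\vs}$ with $\tilde{\vs}_\cS=\sign(\vx^0_\cS)+\frac{1}{\alpha}\vx^0_\cS$ and $\tilde{\vs}_\cZ=\vzero$. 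Because $\alpha\ge 8\|\vx^0\|_2\ge 8\|\vx^0\|_\infty$, this target is only a bounded perturbation of the ordinary sign pattern: $\|\tilde{\vs}\|_\infty\le 1+\|\vx^0\|_\infty/\alpha\le 9/8$ and $\|\tilde{\vs}\|_2\le\sqrt{k}+\|\vx^0\|_2/\alpha\le\tfrac{9}{8}\sqrt{k}$.

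Next I would prove an ``inexact duality'' lemma tailored to \eqref{P2}: if there is $\vv=\vA^\top\vw$ with $\|\vv_\cS-\tilde{\vs}\|_2\le 1/4$ and $\|\vv_\cZ\|_\infty\le 1/4$, and if $\vA$ is a near-isometry on $\cS$ (say $\|\vA_\cS^\top\vA_\cS-\vI\|\le 1/2$ together with a uniform bound on the column norms $\|\vA\ve_t\|_2$, $t\in\cZ$), then $\vx^0$ is the minimizer of \eqref{P2}. The argument is the one-line expansion: for $\vh\in\Null(\vA)\setminus\{\vzero\}$,
\begin{align*}
g(\vx^0+\vh)-g(\vx^0)&\ \ge\ \langle\tilde{\vs},\vh_\cS\rangle+\|\vh_\cZ\|_1+\tfrac{1}{2\alpha}\|\vh\|_2^2\\
&\ \ge\ \bigl(1-\|\vv_\cZ\|_\infty\bigr)\|\vh_\cZ\|_1-\|\vv_\cS-\tilde{\vs}\|_2\|\vh_\cS\|_2+\tfrac{1}{2\alpha}\|\vh\|_2^2,
\end{align*}
where the first line uses $\|\vx^0+\vh\|_2^2=\|\vx^0\|_2^2+2\langle\vx^0_\cS,\vh_\cS\rangle+\|\vh\|_2^2$ and convexity of $\|\cdot\|_1$, and the second uses $\langle\vv,\vh\rangle=\langle\vw,\vA\vh\rangle=0$ to trade $\langle\tilde{\vs},\vh_\cS\rangle$ for $\langle\tilde{\vs}-\vv_\cS,\vh_\cS\rangle-\langle\vv_\cZ,\vh_\cZ\rangle$. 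The near-isometry then bounds $\|\vh_\cS\|_2$ by a constant multiple of $\|\vh_\cZ\|_1$ (through $\vA\vh_\cS=-\vA\vh_\cZ$), after which the right-hand side is strictly positive. Note that the strict-convexity term $\frac{1}{2\alpha}\|\vh\|_2^2>0$ supplies slack, so—unlike in \cite{Candes-Plan-10}—the certificate thresholds need not be pushed to strictness.

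Then I would invoke the construction of \cite{Candes-Plan-10} essentially verbatim: under the RIPless sampling model, split the $m$ measurements into $L=O(\log n)$ batches, run the golfing iteration $\vv_\ell=(\text{normalization})\,\vA_\ell^\top\vA_{\ell}\vq_{\ell-1}$ on the residual $\vq_\ell=\tilde{\vs}_\cS-(\vv_1+\dots+\vv_\ell)_\cS$, which contracts geometrically, and set $\vv=\sum_\ell\vv_\ell$. Every concentration estimate in \cite{Candes-Plan-10} that controls $\|\vq_\ell\|_2$, $\|(\vv_\ell)_\cZ\|_\infty$, and the local isometry scales with the $\ell_2$- and $\ell_\infty$-norms of the target, so replacing $\vs$ by $\tilde{\vs}$ only inflates the resulting bounds by the factor $\le 9/8$; hence the same sample bound $m\ge C_0(1+\beta)\mu(\vA)\,k\log n$ (after absorbing the constant into $C_0$) makes $\|\vv_\cS-\tilde{\vs}\|_2\le 1/4$, $\|\vv_\cZ\|_\infty\le 1/4$, and $\|\vA_\cS^\top\vA_\cS-\vI\|\le 1/2$ all hold off a failure event of probability at most $5/n+e^{-\beta}$. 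Feeding this into the inexact-duality lemma finishes the proof.

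The main obstacle is bookkeeping rather than conceptual: one must re-track the constants through the golfing scheme with the rescaled target $\tilde{\vs}$ and verify that the inflated thresholds still close the inexact-duality inequality, and that $\alpha\ge 8\|\vx^0\|_2$ (rather than a larger multiple) is precisely what keeps the perturbation $\frac{1}{\alpha}\vx^0_\cS$ small enough to absorb into the standard estimates. The structural ingredients—strict convexity of $g$, the shifted target $\tilde{\vs}$, and the unchanged geometric contraction of the golfing residuals—require no new probabilistic estimates beyond those of \cite{Candes-Plan-10}.
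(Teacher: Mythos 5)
Your proposal is sound, but it takes a different route from the paper at the decisive point: where to put the perturbation $\frac{1}{\alpha}\vx^0_\cS$ coming from the quadratic term. You fold it into the certificate's target, asking the golfing scheme to approximate the shifted vector $\tilde{\vs}=\sign(\vx^0_\cS)+\frac{1}{\alpha}\vx^0_\cS$, and then argue that all the concentration estimates of \cite{Candes-Plan-10} scale linearly in the $\ell_2$/$\ell_\infty$ norms of the target, so the $9/8$ inflation is absorbed into $C_0$. The paper instead leaves the Cand\`es--Plan construction completely untouched: its dual certificate still targets $\sign(\vx^0_\cS)$ with the original thresholds $\|\vv_\cS-\sign(\vx^0_\cS)\|_2\le 1/4$, $\|\vv_{\cS^c}\|_\infty\le 1/4$ and the local isometry \eqref{vc1}, and the new work is confined to Lemma \ref{lm:dc}, where the extra term is handled by Cauchy--Schwarz, $|\frac{1}{\alpha}\langle\vx^0_\cS,\vh_\cS\rangle|\le\frac{\|\vx^0\|_2}{\alpha}\|\vh_\cS\|_2\le\frac{1}{8}\|\vh_\cS\|_2$, which together with $\langle\sign(\vx^0_\cS),\vh_\cS\rangle\ge-\frac14(\|\vh_\cS\|_2+\|\vh_\cZ\|_1)$ and $\|\vh_\cZ\|_1\ge\frac12\|\vh_\cS\|_2$ closes the inequality exactly at $\alpha\ge 8\|\vx^0\|_2$ (your expansion inequality is otherwise identical to \eqref{xh13}). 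The trade-off: the paper's version uses \cite{Candes-Plan-10} as a black box, so the probability, sample bound, and constant $C_0$ are inherited verbatim, with the constant $8$ tied to the $1/4$-thresholds; your version is the more natural KKT/subdifferential viewpoint and would in fact work for any fixed multiple of $\|\vx^0\|_2$ (the specific $8$ is not essential in your route, contrary to your closing remark), but at the cost of reopening the golfing proof to re-track constants and possibly add a contraction step, rather than citing its conclusion as is. Both arguments are correct; yours is heavier on bookkeeping, the paper's is lighter but rigidly wedded to the stated constant.
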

\begin{proof}
The proof is mostly the same as that of Theorem 1.1 of \cite{Candes-Plan-10} except we shall adapt Lemma 3.2 of \cite{Candes-Plan-10}  to Lemma \ref{lm:dc} below for our model \eqref{P2}. We describe the proof of the theorem very briefly here. For any matrix $\vA$ satisfying property \eqref{vc1} in Lemma \ref{lm:dc}, the golfing scheme \cite{Gross-09} can be used to construct a dual vector $\vy$ such that $\vA^*\vy$ satisfies property \eqref{vc2} in Lemma \ref{lm:dc}.  The properties \eqref{vc1} and \eqref{vc2} and the construction are exactly the same as in \cite{Candes-Plan-10}. Then Lemma \ref{lm:dc} below lets this $\vA^*\vy$  guarantee the optimality of $\vx^0$ to \eqref{P3}. 
\end{proof}
\begin{lemma}[Dual certificate]\label{lm:dc}
Let $\vx^0$ be given in Theorem \ref{thm:ripless} and $\cS:=\supp(\vx^0)$. If $\vA=[\va_1~\va_2~\cdots~\va_n]$ satisfies
\beq\label{vc1}
\|(\vA^*_\cS \vA_\cS)^{-1}\|_2\le 2\quad\text{and}\quad\max_{i\in\cS^c}\|\vA^*_\cS \va_i\|_2\le 1
\eeq
and there exists $\vy$ such that  $\vv=\vA^*\vy$ satisfies
\beq\label{vc2}
\|\vv_\cS - \sign(\vx^0_\cS)\|_2\le 1/4\quad\text{and}\quad\|\vv_{\cS^c}\|_\infty\le 1/4,
\eeq
then $\vx^0$ is the unique solution to \eqref{P2} with $\vb=\vA \vx^0$ and $\alpha \ge 8\|\vx^0\|_2$.
\end{lemma}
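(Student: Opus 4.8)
The objective of \eqref{P2}, $f(\vx):=\|\vx\|_1+\frac{1}{2\alpha}\|\vx\|_2^2$, is strictly convex, so it suffices to prove $f(\vx^0+\vh)>f(\vx^0)$ for every nonzero $\vh\in\Null(\vA)$; this gives both optimality and uniqueness of $\vx^0$ at once. The plan is the inexact--dual--certificate argument of \cite{Candes-Plan-10} (their Lemma~3.2), modified to also absorb the quadratic term of \eqref{P2}. First I would lower bound the two pieces of $f(\vx^0+\vh)-f(\vx^0)$ separately. For the $\ell_1$ part, convexity of $\|\cdot\|_1$ applied with a subgradient equal to $\sign(\vx^0_\cS)$ on $\cS$ and to $\sign(\vh_{\cS^c})$ on $\cS^c$ gives $\|\vx^0+\vh\|_1-\|\vx^0\|_1\ge\langle\sign(\vx^0_\cS),\vh_\cS\rangle+\|\vh_{\cS^c}\|_1$. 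For the quadratic part, since $\supp(\vx^0)=\cS$, expanding yields the exact identity $\frac{1}{2\alpha}(\|\vx^0+\vh\|_2^2-\|\vx^0\|_2^2)=\frac{1}{\alpha}\langle\vx^0_\cS,\vh_\cS\rangle+\frac{1}{2\alpha}\|\vh\|_2^2$, which by Cauchy--Schwarz is at least $-\frac{\|\vx^0\|_2}{\alpha}\|\vh_\cS\|_2+\frac{1}{2\alpha}\|\vh\|_2^2$ (recall $\|\vx^0_\cS\|_2=\|\vx^0\|_2$).

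Next I would bring in the approximate certificate \eqref{vc2}. Since $\vA\vh=0$ and $\vv=\vA^*\vy$, we have $\langle\vv,\vh\rangle=0$, hence $\langle\vv_\cS,\vh_\cS\rangle=-\langle\vv_{\cS^c},\vh_{\cS^c}\rangle$. Writing $\langle\sign(\vx^0_\cS),\vh_\cS\rangle=\langle\sign(\vx^0_\cS)-\vv_\cS,\vh_\cS\rangle+\langle\vv_\cS,\vh_\cS\rangle$, bounding the first term by Cauchy--Schwarz on $\cS$ with $\|\sign(\vx^0_\cS)-\vv_\cS\|_2\le 1/4$ and the second (after substituting $\langle\vv_\cS,\vh_\cS\rangle=-\langle\vv_{\cS^c},\vh_{\cS^c}\rangle$) by H\"older on $\cS^c$ with $\|\vv_{\cS^c}\|_\infty\le 1/4$, I get $\langle\sign(\vx^0_\cS),\vh_\cS\rangle\ge-\frac14\|\vh_\cS\|_2-\frac14\|\vh_{\cS^c}\|_1$. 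Assembling this with the two bounds of the previous paragraph gives $f(\vx^0+\vh)-f(\vx^0)\ge-\bigl(\frac14+\frac{\|\vx^0\|_2}{\alpha}\bigr)\|\vh_\cS\|_2+\frac34\|\vh_{\cS^c}\|_1+\frac{1}{2\alpha}\|\vh\|_2^2$.

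Then I would use the first block \eqref{vc1} to eliminate $\|\vh_\cS\|_2$. From $\vA\vh=0$ we have $\vA_\cS\vh_\cS=-\sum_{i\in\cS^c}h_i\va_i$, so $\vh_\cS=-(\vA^*_\cS\vA_\cS)^{-1}\sum_{i\in\cS^c}h_i\vA^*_\cS\va_i$ and therefore $\|\vh_\cS\|_2\le\|(\vA^*_\cS\vA_\cS)^{-1}\|_2\bigl(\max_{i\in\cS^c}\|\vA^*_\cS\va_i\|_2\bigr)\|\vh_{\cS^c}\|_1\le 2\|\vh_{\cS^c}\|_1$. Substituting collapses the bound to $f(\vx^0+\vh)-f(\vx^0)\ge\bigl(\frac14-\frac{2\|\vx^0\|_2}{\alpha}\bigr)\|\vh_{\cS^c}\|_1+\frac{1}{2\alpha}\|\vh\|_2^2$. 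The hypothesis $\alpha\ge 8\|\vx^0\|_2$ makes the first coefficient nonnegative, whence $f(\vx^0+\vh)-f(\vx^0)\ge\frac{1}{2\alpha}\|\vh\|_2^2>0$ for every nonzero $\vh\in\Null(\vA)$, which is exactly the assertion that $\vx^0$ is the unique minimizer of \eqref{P2}.

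The part requiring the most care is the constant bookkeeping in the last two paragraphs: the cross term $\frac{1}{\alpha}\langle\vx^0_\cS,\vh_\cS\rangle$ is sign--indefinite, and its worst case, set against the slack $\frac34-2\cdot\frac14=\frac14$ left over from \eqref{vc1}--\eqref{vc2}, is what forces the threshold $\alpha\ge 8\|\vx^0\|_2$ (as opposed to the $\alpha\ge 10\|\vx^0\|_\infty$ appearing in the RIP/NSP/SSP theorems, which control the cross term through $\|\cdot\|_\infty$ rather than $\|\cdot\|_2$). One should double-check that discarding the extra $\frac{1}{2\alpha}\|\vh_\cS\|_2^2$ does not cost anything essential, but for the stated threshold the crude estimates above already close the argument; uniqueness via strict convexity and the implication $\vh_{\cS^c}=\vzero\Rightarrow\vh=\vzero$ (from injectivity of $\vA_\cS$, guaranteed by \eqref{vc1}) are then routine.
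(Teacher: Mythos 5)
Your proof is correct and follows essentially the same route as the paper: lower bound the $\ell_1$ and quadratic parts of the objective difference along a null-space perturbation $\vh$, use $\langle\vA^*\vy,\vh\rangle=0$ with \eqref{vc2} to get $\langle\sign(\vx^0_\cS),\vh_\cS\rangle\ge-\tfrac14\|\vh_\cS\|_2-\tfrac14\|\vh_{\cS^c}\|_1$, use \eqref{vc1} to get $\|\vh_{\cS^c}\|_1\ge\tfrac12\|\vh_\cS\|_2$, and close with $\alpha\ge 8\|\vx^0\|_2$, leaving the strictly positive $\tfrac{1}{2\alpha}\|\vh\|_2^2$. The only difference is cosmetic: you derive the two auxiliary inequalities directly from \eqref{vc1}--\eqref{vc2}, whereas the paper cites the proof of Lemma 3.2 of Cand\`es--Plan for them, so your write-up is a self-contained version of the same argument.
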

\begin{proof}
Let $\cZ:=\cS^c$. For any \emph{nonzero} $\vh\in\mathrm{Null}(\vA)$, we have $\vA\vh=\vzero$ and
\begin{eqnarray}\nonumber
\|\vx^0+\vh\|_1 + \frac{1}{2\alpha}\|\vx^0+\vh\|_2^2 & = & \|\vx^0_\cS + \vh_\cS\|_1 +
\frac{1}{2\alpha}\|\vx^0_\cS +\vh_\cS\|_2^2 + \|\vh_\cZ\|_1 + \frac{1}{2\alpha}\|\vh_\cZ\|_2^2\\
\nonumber
& \ge & \|\vx^0_\cS\|_1 +\langle\sign(\vx_\cS),\vh_\cS\rangle + \frac{1}{2\alpha}\|\vx^0_\cS\|_2^2 + \frac{1}{\alpha}\langle\vx^0_\cS, \vh_\cS \rangle
+ \frac{1}{2\alpha}\|\vh_\cS\|_2^2 + \|\vh_\cZ\|_1 + \frac{1}{2\alpha}\|\vh_\cZ\|_2^2\\
\nonumber
& \ge & \left[\|\vx^0_\cS\|_1+\frac{1}{2\alpha}\|\vx^0_\cS\|_2^2\right] + \left[ \langle\sign(\vx_\cS),\vh_\cS\rangle
+\frac{1}{\alpha}\langle\vx^0_\cS, \vh_\cS \rangle+\|\vh_\cZ\|_1 \right] + \frac{1}{2\alpha}\|\vh\|_2^2\\
\label{xh13}
& = & \left[\|\vx^0\|_1+\frac{1}{2\alpha}\|\vx^0\|_2^2\right] + \left[ \langle\sign(\vx_\cS),\vh_\cS\rangle
+\frac{1}{\alpha}\langle\vx^0_\cS, \vh_\cS \rangle+\|\vh_\cZ\|_1 \right] + \frac{1}{2\alpha}\|\vh\|_2^2
\end{eqnarray}
Since the last term of \eqref{xh13} is strictly positive, $\vx^0$ is the unique solution to \eqref{P2} provided that
\beq
\label{ned}
\langle\sign(\vx_\cS),\vh_\cS\rangle
+\frac{1}{\alpha}\langle\vx^0_\cS, \vh_\cS \rangle+\|\vh_\cZ\|_1\ge 0.
\eeq
Following the proof of Lemma 3.2 in \cite{Candes-Plan-10} and from \eqref{vc1} and \eqref{vc2} we obtain
$$
\langle\sign(\vx_\cS),\vh_\cS\rangle\ge-\frac{1}{4}\left(\|\vh_\cS\|_2+\|\vh_\cZ\|_1\right)\quad\text{and}\quad \|\vh_\cZ\|_1\ge\frac{1}{2}\|\vh_\cS\|_2,
$$
which together with $\alpha\ge 8\|\vx^0\|_2$ give
\begin{align*}
\langle\sign(\vx_\cS),\vh_\cS\rangle
+\frac{1}{\alpha}\langle\vx^0_\cS, \vh_\cS \rangle+\|\vh_\cZ\|_1 & \ge -\frac{1}{4}\left(\|\vh_\cS\|_2+\|\vh_\cZ\|_1\right )-\frac{\|\vx^0_\cS\|_2}{\alpha}\|\vh_\cS\|_2+\|\vh_\cZ\|_1\\
& \ge -\frac{1}{4}\|\vh_\cS\|_2 + \frac{3}{4}\|\vh_\cZ\|_1 - \frac{1}{8}\|\vh_\cS\|_2\\
& \ge \frac{3}{8}\|\vh_\cS\|_2- \frac{3}{8}\|\vh_\cS\|_2\\
& = 0.
\end{align*}
Hence, $\vx^0+\vh$ gives a strictly worse objective \eqref{P2} than $\vx^0$, so $\vx^0$ is the unique solution to \eqref{P2}.
\end{proof}

\subsection{Matrix Recovery Guarantees}
It is fairly easy to extend the results above, except the ``RIPless'' analysis, to the recovery of low-rank matrices. Throughout this subsection, we let $\sigma_i(\vX),~ i=1, \cdots, m$ denote the $i$th largest  singular value of matrix $\vX$ of rank $m$ or less, and let $\|\vX\|_*:=\sum_{i=1}^m\sigma_i(\vX)$, $\|\vX\|_F:=\left(\sum_{i=1}^m\sigma^2_i(\vX)\right)^{1/2}$, and $\|\vX\|_2=\sigma_1(\vX)$ denote the nuclear, Frobenius, and spectral norms of $\vX$, respectively.

The extension is based on the following property of unitarily invariant matrix norms.
\begin{lemma}[\cite{Horn-Johnson-book-90} Theorem 7.4.51]
\label{normproperty}
Let $\vX$ and $\vY$ be two matrices of the same size. Any unitarily invariant  norm $\|\cdot\|_\phi$ satisfies
\begin{equation}
\label{strongform}
\|\Sigma(\vX)-\Sigma(\vY)\|_\phi\leq \|\vX - \vY\|_\phi,
\end{equation}
where $\Sigma(\vX)=\diag(\sigma_1(\vX), \cdots, \sigma_m(\vX))$ and $\Sigma(\vY)=\diag(\sigma_1(\vY), \cdots, \sigma_m(\vY))$ are two diagonal matrices.
\end{lemma}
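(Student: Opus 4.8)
Inequality \eqref{strongform} is Mirsky's perturbation theorem, and the plan is to derive it from two standard structural facts about unitarily invariant norms together with one singular-value perturbation inequality. First I would invoke von Neumann's representation: every unitarily invariant norm $\|\cdot\|_\phi$ on matrices of the given shape has the form $\|\vX\|_\phi=\Phi(\sigma_1(\vX),\dots,\sigma_m(\vX))$ for a symmetric gauge function $\Phi$ on $\RR^m$, i.e.\ a norm invariant under permutations and sign changes of the coordinates. Since $\Sigma(\vX)-\Sigma(\vY)$ is diagonal with diagonal entries $\sigma_i(\vX)-\sigma_i(\vY)$, its singular values are precisely the numbers $|\sigma_i(\vX)-\sigma_i(\vY)|$, so \eqref{strongform} is equivalent to the vectorial inequality $\Phi\big(|\sigma_1(\vX)-\sigma_1(\vY)|,\dots,|\sigma_m(\vX)-\sigma_m(\vY)|\big)\le\Phi\big(\sigma_1(\vX-\vY),\dots,\sigma_m(\vX-\vY)\big)$.

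Next I would apply Ky Fan's dominance principle: for nonnegative vectors $\va,\vb\in\RR^m$, if $\va$ is weakly majorized by $\vb$ --- meaning $\sum_{i=1}^\ell a^{\downarrow}_i\le\sum_{i=1}^\ell b^{\downarrow}_i$ for all $\ell=1,\dots,m$, with $(\cdot)^{\downarrow}$ the decreasing rearrangement --- then $\Phi(\va)\le\Phi(\vb)$ for every symmetric gauge function. Thus it suffices to prove the weak majorization $\big(|\sigma_i(\vX)-\sigma_i(\vY)|\big)_{i=1}^m\prec_w\big(\sigma_i(\vX-\vY)\big)_{i=1}^m$. To establish it I would pass to the symmetric dilations $\widetilde{\vX}$ and $\widetilde{\vY}$ (with $\vX$, resp.\ $\vY$, and its transpose placed in the off-diagonal blocks), whose eigenvalues are $\pm\sigma_i(\vX)$, resp.\ $\pm\sigma_i(\vY)$, together with zeros, and which satisfy $\widetilde{\vX}-\widetilde{\vY}=\widetilde{\vX-\vY}$. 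Lidskii's inequality for symmetric matrices says that for every increasing index set $i_1<\dots<i_k$ one has $\sum_{j}\lambda_{i_j}(\widetilde{\vX})-\sum_{j}\lambda_{i_j}(\widetilde{\vY})\le\sum_{j=1}^k\lambda_j(\widetilde{\vX}-\widetilde{\vY})$; picking the index set that selects the $k$ largest eigenvalue gaps, together with the trace identity, yields the majorization $\lambda^{\downarrow}(\widetilde{\vX})-\lambda^{\downarrow}(\widetilde{\vY})\prec\lambda^{\downarrow}(\widetilde{\vX}-\widetilde{\vY})$. Taking coordinatewise absolute values (majorization implies weak majorization of the vector of absolute values) and reading off the entries that come from the singular values --- equivalently, using the weak-majorization inequalities at the even indices $2\ell$ --- gives the required weak majorization.

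The main obstacle is the last step: Lidskii's inequality is the one non-elementary ingredient, and the naive shortcut of splitting the indices into $\{i:\sigma_i(\vX)\ge\sigma_i(\vY)\}$ and its complement and then applying the Ky Fan triangle inequality $\sum_{i=1}^p\sigma_i(\vA+\vB)\le\sum_{i=1}^p\sigma_i(\vA)+\sum_{i=1}^p\sigma_i(\vB)$ on each block fails, because it double-counts the largest singular values of $\vX-\vY$ and produces a bound in the wrong direction. A correct proof of Lidskii's inequality needs Wielandt's minimax characterization of partial sums of eigenvalues (or an exterior-algebra argument) to control the selection of indices. Since the statement is exactly Theorem 7.4.51 of \cite{Horn-Johnson-book-90}, in the paper we simply invoke it; for the two norms actually used below, $\|\cdot\|_F$ and $\|\cdot\|_*$, the inequality is in any case classical (the Frobenius case being the Wielandt--Hoffman inequality for singular values).
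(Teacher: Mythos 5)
The paper itself gives no argument for this lemma: it is quoted verbatim as Theorem 7.4.51 of Horn and Johnson, so there is no internal proof for your proposal to diverge from. Your sketch is the standard (and correct) route to Mirsky's theorem, and every step is sound: von Neumann's correspondence between unitarily invariant norms and symmetric gauge functions, the Fan dominance principle reducing \eqref{strongform} to the weak majorization of $\bigl(|\sigma_i(\vX)-\sigma_i(\vY)|\bigr)_i$ by $\bigl(\sigma_i(\vX-\vY)\bigr)_i$, the Hermitian dilation whose eigenvalues are $\pm\sigma_i$ plus zeros, the Lidskii--Wielandt majorization $\lambda^{\downarrow}(\widetilde{\vX})-\lambda^{\downarrow}(\widetilde{\vY})\prec\lambda^{\downarrow}(\widetilde{\vX}-\widetilde{\vY})$, passage to absolute values (majorization plus convexity gives weak majorization), and the extraction of the undoubled inequalities from the partial sums at even indices $2\ell$, which indeed works because the doubled sorted vectors have partial sums exactly $2\sum_{i\le\ell}$ of the single vectors. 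You are also right that Lidskii's inequality is the genuinely non-elementary ingredient and that the naive two-block application of Ky Fan's triangle inequality does not yield it. Your closing remark --- that in the paper one simply invokes the cited theorem, with the Frobenius and nuclear-norm cases being classical --- is precisely the paper's treatment, so your proposal supplies a correct proof outline where the paper deliberately supplies none.
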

In particular,  matrices $\vX$ and $\vY$ obey
\begin{equation}
\label{Ineq1}
\sum_{i=1}^m |\sigma_i(\vX)- \sigma_i(\vY)|\le \|\vX- \vY\|_*
\end{equation}
and
\begin{equation}
\label{Ineq2}
\sum_{i=1}^m \left(\sigma_i(\vX) - \sigma_i(\vY)\right)^2 \le \|\vX - \vY\|^2_F.
\end{equation}


%
%
%

By  applying \eqref{Ineq1}, \cite{OMFH11} shows that any sufficient conditions based on RIP and SSP  of $\vA$ for recovering sparse vectors by model \eqref{P1} can be translated to  sufficient conditions based on similar properties of $\cA$ for recovering low-rank matrices by model \eqref{N1}. We can establish similar translations from model \eqref{P3} to model \eqref{N3} using both inequalities \eqref{Ineq1} and \eqref{Ineq2}. 
Hence, we present the low-rank matrix recovery results  only with the parts that are \emph{different} from their vector counterparts.

Paper \cite{OH10} presents the NSP condition for problem \eqref{N1}:
all matrices $\vX^0$ of rank $r$ or less can be exactly recovered by problem \eqref{N1} from measurements $\vb=\cA(\vX^0)$ if and only if  all $\vH\in \Null(\mathcal{A})\backslash\{\vzero\}$ satisfy 
\beq\label{nspMC}
\sum_{i=1}^{r}\sigma_i(\vH)<\sum_{i=r+1}^{m}\sigma_i(\vH).
\eeq
We can extend this result to problem (\ref{N2}) by applying inequalities (\ref{Ineq1}) and (\ref{Ineq2}).
\begin{theorem}[Matrix NSP condition]
\label{thm:nsp4MC}
Assume that $\|\vX^0\|_2$ is fixed.  Problem  \eqref{N2}  uniquely recovers all matrices $\vX^0$ (with the specified $\|\vX^0\|_2$) of rank $r$ or less from measurements
$\vb=\cA(\vX^0)$ if and only if
\begin{align}
\label{nspMC1}
\left(1 + \frac{\|\vX^0\|_2}{\alpha}\right)\sum_{i=1}^r\sigma_i(\vH) \le &\, \sum_{i=r+1}^m \sigma_i(\vH)
\end{align}
holds for all matrices $\vH\in\Null(\cA)$.
\end{theorem}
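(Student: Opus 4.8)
The plan is to mimic closely the proof of Theorem~\ref{thm:nsp} (the vector NSP condition), replacing the coordinate-split triangle and inner-product estimates by their singular-value analogues from Lemma~\ref{normproperty}, i.e.\ inequalities \eqref{Ineq1} and \eqref{Ineq2}. Let $\vX^0$ be any matrix of rank $r$ or less with the prescribed $\|\vX^0\|_2$, and let $\vH\in\Null(\cA)$ be nonzero, so that $\cA(\vX^0+\vH)=\cA(\vX^0)=\vb$. The goal is to show that, under \eqref{nspMC1}, the augmented objective strictly increases at $\vX^0+\vH$, and conversely that \eqref{nspMC1} is forced by exact recovery.

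For \textbf{sufficiency}, first I would lower-bound $\|\vX^0+\vH\|_*$. Write $\Sigma(\vX^0+\vH)-\Sigma(\vX^0)$ and apply \eqref{Ineq1} to the matrix difference $(\vX^0+\vH)-\vX^0=\vH$ to get $\sum_i|\sigma_i(\vX^0+\vH)-\sigma_i(\vX^0)|\le\|\vH\|_*$. Since $\vX^0$ has at most $r$ nonzero singular values, one separates the first $r$ singular values of $\vH$ from the rest: $\|\vX^0+\vH\|_*\ge\|\vX^0\|_*-\sum_{i=1}^r\sigma_i(\vH)+\sum_{i=r+1}^m\sigma_i(\vH)$, the standard nuclear-norm lower bound. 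Next, for the Frobenius term I would use \eqref{Ineq2} in the same spirit to produce the cross term: $\frac{1}{2\alpha}\|\vX^0+\vH\|_F^2\ge\frac{1}{2\alpha}\|\vX^0\|_F^2+\frac{1}{\alpha}\langle\text{something}\rangle+\frac{1}{2\alpha}\|\vH\|_F^2$, where the cross term is controlled by $-\frac{\|\vX^0\|_2}{\alpha}\sum_{i=1}^r\sigma_i(\vH)$ (the matrix analogue of $\langle\vx^0_\cS,\vh_\cS\rangle\ge-\|\vx^0\|_\infty\|\vh_\cS\|_1$; here $\|\vX^0\|_2=\sigma_1(\vX^0)$ plays the role of $\|\vx^0_\cS\|_\infty$ and $\sum_{i=1}^r\sigma_i(\vH)$ plays the role of $\|\vh_\cS\|_1$). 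Combining, the augmented objective at $\vX^0+\vH$ dominates that at $\vX^0$ plus the bracket $\bigl[\sum_{i=r+1}^m\sigma_i(\vH)-\bigl(1+\frac{\|\vX^0\|_2}{\alpha}\bigr)\sum_{i=1}^r\sigma_i(\vH)\bigr]$ plus $\frac{1}{2\alpha}\|\vH\|_F^2>0$; under \eqref{nspMC1} the bracket is nonnegative, so strict increase holds and $\vX^0$ is the unique minimizer.

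For \textbf{necessity}, as in Theorem~\ref{thm:nsp} I would, given nonzero $\vH\in\Null(\cA)$, build a rank-$\le r$ matrix $\vX^0$ that makes all the inequalities above tight. Write an SVD $\vH=\sum_i\sigma_i(\vH)\vu_i\vv_i^\top$ and set $\vX^0=-c\sum_{i=1}^r\vu_i\vv_i^\top$ for a scalar $c>0$ chosen so that $\|\vX^0\|_2=c$ equals the prescribed value. Then $\vX^0$ and $\tau\vH$ (for $0<\tau\le1$) have aligned leading singular subspaces, so $\|\vX^0+\tau\vH\|_*=\|\vX^0\|_*-\tau\sum_{i=1}^r\sigma_i(\vH)+\tau\sum_{i=r+1}^m\sigma_i(\vH)$ exactly, and the Frobenius cross term equals exactly $-\frac{c\tau}{\alpha}\sum_{i=1}^r\sigma_i(\vH)$; both \eqref{Ineq1} and \eqref{Ineq2} hold with equality. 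Requiring the augmented objective at $\vX^0+\tau\vH$ to strictly exceed that at $\vX^0$ for all small $\tau>0$ and letting $\tau\downarrow0$ forces the bracket to be nonnegative, which is exactly \eqref{nspMC1}.

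The main obstacle is the necessity direction: one must verify that the chosen $\vX^0$ genuinely makes \eqref{Ineq1} and \eqref{Ineq2} tight, which requires that $\vX^0$ and $\vH$ be \emph{simultaneously diagonalizable by the same unitaries} (i.e.\ share singular vectors on the relevant blocks) so that the singular values of $\vX^0+\tau\vH$ are literally $\sigma_i(\vX^0)+\tau\sigma_i(\vH)$ with no cross interaction; the construction $\vX^0=-c\sum_{i=1}^r\vu_i\vv_i^\top$ achieves this, but the argument that equality is attained — and that no other singular-value rearrangement gives a smaller objective — needs Lemma~\ref{normproperty} applied carefully, together with the observation that on the orthogonal complement $\vX^0$ contributes zero singular values so the tail $\sum_{i=r+1}^m$ of $\vH$ survives untouched. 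The sufficiency direction is essentially a routine transcription of \eqref{xh12} with $\ell_1/\ell_2$ replaced by $\|\cdot\|_*/\|\cdot\|_F$ and $\|\vx^0\|_\infty$ by $\|\vX^0\|_2$.
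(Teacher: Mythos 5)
Your proposal follows essentially the same route as the paper's proof: bound both terms of the augmented objective by the corresponding expressions in the singular-value vectors via Lemma \ref{normproperty}, so that the vector computation \eqref{xh12} from Theorem \ref{thm:nsp} applies with $\sum_{i\le r}\sigma_i(\vH)$, $\sum_{i>r}\sigma_i(\vH)$, $\|\vH\|_F$ and $\|\vX^0\|_2$ playing the roles of $\|\vh_\cS\|_1$, $\|\vh_\cZ\|_1$, $\|\vh\|_2$ and $\|\vx^0\|_\infty$; and, for necessity, construct an $\vX^0$ aligned with the top-$r$ singular vectors of $\vH$ and let $\tau\downarrow 0$. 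One local step, as you wrote it, would not go through: applying \eqref{Ineq1} to the difference $(\vX^0+\vH)-\vX^0=\vH$ gives only the aggregate bound $\sum_i|\sigma_i(\vX^0+\vH)-\sigma_i(\vX^0)|\le\|\vH\|_*$, from which you cannot ``separate the first $r$ singular values of $\vH$ from the rest''; it yields only the weaker $\|\vX^0+\vH\|_*\ge\|\vX^0\|_*-\|\vH\|_*$. The correct use --- and what the paper does --- is to apply \eqref{Ineq1} and \eqref{Ineq2} to the pair $(\vX^0,-\vH)$, so that the right-hand sides are $\|\vX^0+\vH\|_*$ and $\|\vX^0+\vH\|_F^2$ and the left-hand sides are $\sum_i|\sigma_i(\vX^0)-\sigma_i(\vH)|$ and $\sum_i(\sigma_i(\vX^0)-\sigma_i(\vH))^2$; splitting these sums at $i=r$ (where $\sigma_i(\vX^0)=0$ for $i>r$) gives exactly your bracket, with the cross term bounded below by $-\|\vX^0\|_2\sum_{i\le r}\sigma_i(\vH)$ as you intended, matching the paper's \eqref{xH12}.

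On necessity, your flat-spectrum choice $\vX^0=-c\sum_{i\le r}\vu_i\vv_i^\top$ with $c$ equal to the prescribed spectral norm is the faithful analogue of the vector construction and is in fact the right one: it makes both the nuclear-norm identity and the Frobenius cross term exact for small $\tau$ (the cross term is exactly $-\frac{c\tau}{\alpha}\sum_{i\le r}\sigma_i(\vH)$), so requiring positivity of the objective increase and letting $\tau\downarrow 0$ does force \eqref{nspMC1}. The paper instead takes $\vX^0=-\vU\Sigma_r\vV^\top$ (keeping $\vH$'s top-$r$ singular values, then rescaling), for which the cross term is $-\frac{\tau}{\alpha}\sum_{i\le r}\sigma_i(\vX^0)\sigma_i(\vH)$ and the displayed identity is exact only when the top-$r$ singular values of $\vH$ are equal; your version avoids this issue, so on this point your argument is, if anything, the cleaner one.
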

\begin{proof}
\textbf{Sufficiency}: Pick any matrix $\vX^0$ of rank  $r$ or less and let $\vb=\cA (\vX^0)$.
For any \emph{nonzero} $\vH\in\mathrm{Null}(\cA)$, we have $\cA(\vX^0+\vH)=\cA\vX^0=\vb$. By using (\ref{Ineq1}) and (\ref{Ineq2}), we have
\begin{align}
\nonumber
&\|\vX^0+\vH\|_* + \frac{1}{2\alpha}\|\vX^0+\vH\|_F^2  \ge \|s(\vX^0)-s(\vH)\|_1 + \frac{1}{2\alpha} \|s(\vX^0)-s(\vH)\|_2^2 \\
 \ge & \left[\|\vX^0\|_*+\frac{1}{2\alpha}\|\vX^0\|_F^2\right]
+ \left[\sum_{i=r+1}^m \sigma_i(\vH) -  \left( 1+  \frac{\|\vX^0\|_2}{\alpha}\right)\sum_{i=1}^r\sigma_i(\vH)\right] + \frac{1}{2\alpha}\|\vH\|_F^2
\label{xH12}
\end{align}
where the second inequality follows from \eqref{xh12} by letting $\vh = -s(\vH)$ and $\cS=\{1,\ldots,r\}$ and noticing $\vh_\cS=\sum_{i=1}^r\sigma_i(\vH)$ and $\vh_\cZ = \sum_{i=r+1}^m \sigma_i(\vH)$.


For any nonzero $\vH\in\Null(\cA)$, $\|\vH\|_F>0$. Hence, from \eqref{xH12} and \eqref{nspMC1}, it follows that $\vX^0+\vH$ leads to a strictly worse objective than $\vX^0$. That is, $\vX^0$ is the unique solution to problem \eqref{N2}.


\textbf{Necessity:} For any  nonzero $\vH\in\Null(\cA)$ obeying (\ref{nspMC1}), let $\vH=\vU \Sigma \vV^\top$ be the SVD of $\vH$. Construct $\vX^0 = - \vU \Sigma_r \vV^\top$, where $\Sigma_r$ keeps only the largest $r$  diagonal entries of $\Sigma$ and sets the rest to 0.
Scale $\vX^0$ so that it has the specified $\|\vX^0\|_2$. We have
$$
\|\vX^0+t\vH\|_* + \frac{1}{2\alpha}\|\vX^0+t\vH\|_F^2 =
\|\vX^0\|_* + \frac{1}{2\alpha}\|\vX^0\|_F^2 + \left[\sum_{i=r+1}^m\sigma_i(t\vH)- \left(1+ \frac{\|\vX^0\|_2}{\alpha}\right) \sum_{i=1}^r \sigma_i(t\vH)\right]
 + {\frac{1}{2\alpha}\|t\vH\|_F^2}
$$
for any $t>0$. For $\vX^0$ to be the unique solution to \eqref{N2} given $\vb=\cA(\vX^0)$, we must have
$$
\left[\sum_{i=r+1}^m\sigma_i(t\vH)- \left(1+ \frac{\|\vX^0\|_2}{\alpha}\right) \sum_{i=1}^r \sigma_i(t\vH)\right]
 + {\frac{1}{2\alpha}\|t\vH\|^2_F}>0
$$
for all $t>0$. Hence, \eqref{nspMC1} is necessary.
\end{proof}

Paper \cite{Recht-Fazel-Parrilo-07} introduces the following
RIP for matrix recovery.
\begin{definition}[Matrix RIP]
Let $\cM_r:=\{\vX\in\mathbb{R}^{n_1\times n_2}:\rank(\vX)\le r\}$. The RIP constant $\delta_{r}$ of linear operator $\cA$ is the smallest value such
that
\begin{equation}
\label{D-RIP}
(1- {\delta}_r) \|{\vX}\|_F^2 \le \|\cA(\vX)\|_2^2 \le (1+ {\delta}_r) \|{\vX}\|_F^2
\end{equation}
holds for all ${\vX}\in \cM_r$.
\end{definition}

To uniformly recover all matrices of rank $r$ or less by solving \eqref{N1}, it is sufficient for $\cA$ to satisfy $\delta_{5r} <0.1$ \cite{Recht-Fazel-Parrilo-07}, which has been improved to the RIP with $\delta_{4r}<\sqrt{2}-1$ in \cite{Candes-Plan-11} and to $\delta_{2r}<0.307$, as well as ones involving $\delta_{3r}$, $\delta_{4r}$, and $\delta_{5r}$, in \cite{Mohan-Fazel-10}. The algorithm SVP \cite{Meka-Jain-Dhillon-09} provably achieves exact recovery if $\delta_{2r}<1/3$. 

Next, we present a stronger RIP-based condition for the \emph{unsmoothed} problem \eqref{N1}, and then extend it to the \emph{smoothed} problem \eqref{N2} without a proof. 

\begin{theorem}[RIP condition for exact recovery by \eqref{N1}]
\label{mainM0}
Let $\vX^{0}$ be a matrix with rank $r$ or less. Problem \eqref{N1} exactly recovers $\vX^{0}$ from measurements $\vb=\cA(\vX^{0})$ if  $\cA$ satisfies the RIP with $\delta_{2r}<0.4931$.
\end{theorem}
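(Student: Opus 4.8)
The plan is to reduce Theorem \ref{mainM0} to the matrix null-space condition \eqref{nspMC} and then lift, essentially verbatim, the sharp RIP argument of \cite{Mo-Li-11} for the vector model \eqref{P1} to the matrix model \eqref{N1}. This is precisely the kind of passage enabled by the translation principle of \cite{OMFH11}, which rests on Lemma \ref{normproperty}: any sufficient condition for uniform sparse recovery by \eqref{P1} that is phrased only through the RIP of $\vA$ on sparse vectors and the feasibility inequality $\|\vx^0+\vh\|_1\le\|\vx^0\|_1$ carries over, with $k$ replaced by $r$, $\ell_1$ by $\|\cdot\|_*$, and $\ell_2$ by $\|\cdot\|_F$, to a sufficient condition for uniform low-rank recovery by \eqref{N1}. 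By the NSP characterization of \eqref{N1} recalled above (from \cite{OH10}), it therefore suffices to show that $\delta_{2r}<0.4931$ forces
\[
\sum_{i=1}^{r}\sigma_i(\vH)<\sum_{i=r+1}^{m}\sigma_i(\vH)\qquad\text{for every nonzero }\vH\in\Null(\cA).
\]

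To prove this implication I would imitate the proof of Theorem \ref{main1}, whose engine is the estimate $\|\vh_\cS\|_1\le\theta_{2k}\|\vh_{\cS^c}\|_1$ of Theorem 3.1 in \cite{Mo-Li-11}, with $\theta_{2k}$ as in \eqref{theta}, together with the fact that $\theta_{2k}<1$ is equivalent to $\delta_{2k}<0.4931$ (the quadratic $41\delta^2-77\delta+28$ is positive precisely for $\delta<0.4931$). Fix a nonzero $\vH\in\Null(\cA)$ with SVD $\vH=\vU\Sigma\vV^\top$ and partition its singular triples into consecutive blocks of size $r$, writing $\vH=\vH_0+\vH_1+\cdots$, where $\vH_0$ collects the $r$ largest singular values, $\vH_1$ the next $r$, and so on. Distinct blocks use orthogonal left and right singular subspaces, so $\vH_i^\top\vH_j=\vzero$ and $\vH_i\vH_j^\top=\vzero$ for $i\ne j$; hence $\|\vH_i+\vH_j\|_F^2=\|\vH_i\|_F^2+\|\vH_j\|_F^2$, $\rank(\vH_i+\vH_j)\le 2r$ (so the matrix RIP \eqref{D-RIP} with constant $\delta_{2r}$ applies to such sums), and the telescoping bound $\|\vH_{j+1}\|_F\le r^{-1/2}\|\vH_j\|_*$ holds since each singular value in block $j+1$ is at most the average of those in block $j$. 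These are the exact matrix analogues of the facts \cite{Mo-Li-11} uses about consecutive blocks of a sorted vector. Running their argument with $\cA(\vH)=\vzero$ in place of $\vA\vh=\vzero$, the matrix RIP in place of the vector RIP, and inequalities \eqref{Ineq1}, \eqref{Ineq2} to replace every scalar step by its singular-value counterpart, yields $\sum_{i=1}^{r}\sigma_i(\vH)\le\theta_{2r}\sum_{i=r+1}^{m}\sigma_i(\vH)$ with $\theta_{2r}$ the exact analogue of \eqref{theta}. Since $\theta_{2r}<1$ for $\delta_{2r}<0.4931$, condition \eqref{nspMC} follows, and \cite{OH10} then gives exact recovery of $\vX^0$.

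The main obstacle is non-commutative bookkeeping inside the \cite{Mo-Li-11} argument rather than any new estimate. That argument represents the normalized tail of $\vh$ as a convex combination of sparse unit vectors and applies the RIP polarization identity to each piece; in the matrix setting one must check that each such vector lifts to a genuine rank-$\le 2r$ matrix supported on the relevant singular subspaces, so that \eqref{D-RIP} may be invoked, and that the cross terms which vanished for sparse vectors on disjoint supports still vanish after this lift — both of which are guaranteed by the orthogonality of the SVD blocks noted above. Once this is verified, \cite{OMFH11} (equivalently, Lemma \ref{normproperty}) certifies that nothing is lost in passing from $\ell_1/\ell_2$ to $\|\cdot\|_*/\|\cdot\|_F$, so the constant $0.4931$ is inherited unchanged.
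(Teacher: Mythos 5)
Your proposal follows essentially the same route as the paper's proof in the Appendix: reduce to the matrix null-space condition \eqref{nspMC} of \cite{OH10}, decompose $\vH\in\Null(\cA)$ into SVD blocks $\vH_0,\vH_1,\dots$ of $r$ singular values each, and transplant the sharp estimates of \cite{Mo-Li-11} (their Lemmas 2.3 and 2.4, which the paper states as the two inequalities \eqref{mtxlms}) using the orthogonality of the blocks and Lemma \ref{normproperty}/\cite{OMFH11}, arriving at $\sum_{i=1}^r\sigma_i(\vH)\le\theta_{2r}\sum_{i>r}\sigma_i(\vH)$ with the same threshold $\delta_{2r}<(77-\sqrt{1337})/82\approx 0.4931$. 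The only difference is presentational: the paper explicitly writes the two lifted inequalities in terms of $\rho$ and $t$ and equates the right-hand sides via $\cA(\vH)=\vzero$, whereas you describe the same adaptation at the level of the block-orthogonality facts that make it go through.
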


The proof is a straightforward extension to the arguments in \cite{Mo-Li-11} using arguments in \cite{OMFH11}; the interested reader can find it in Appendix. Next we present the result for the augmented model (\ref{N2}).
\begin{theorem}[RIP condition for exact recovery]
\label{mainM1}
Let $\vX^{0}$ be a matrix with rank $r$ or less. The augmented model \eqref{N2} exactly recovers $\vX^{0}$ from measurements $\vb=\cA(\vX^{0})$ if  $\cA$ satisfies the RIP with $\delta_{2r}<0.4404$ and in \eqref{N2} $\alpha\ge 10\|\vX^0\|_2$.
\end{theorem}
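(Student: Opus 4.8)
The plan is to mirror the proof of Theorem \ref{main1}, but in the matrix setting, replacing Theorem \ref{thm:nsp} with the matrix NSP condition of Theorem \ref{thm:nsp4MC}. First I would extract, from the proof of Theorem \ref{mainM0} in the Appendix (which itself adapts \cite{Mo-Li-11} via the translation of \cite{OMFH11}), the key null-space estimate: every $\vH\in\Null(\cA)$ satisfies
$$
\sum_{i=1}^r\sigma_i(\vH)\ \le\ \theta_{2r}\sum_{i=r+1}^m\sigma_i(\vH),
$$
where $\theta_{2r}$ is the quantity in \eqref{theta} with $\delta_{2k}$ replaced by $\delta_{2r}$. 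This is exactly the matrix analogue of the inequality $\|\vh_\cS\|_1\le\theta_{2k}\|\vh_\cZ\|_1$ used in Theorem \ref{main1}, and it survives the vector-to-matrix passage because the RIP arguments of \cite{Mo-Li-11} use only the $\ell_1/\ell_2$ geometry of the vector of singular values together with inequality \eqref{Ineq1}; this is precisely the mechanism already invoked to prove Theorem \ref{mainM0}.

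Next I would compare this estimate against the matrix NSP condition \eqref{nspMC1}. For $\delta_{2r}$ in the stated range one has $\theta_{2r}<1$, so \eqref{nspMC1} is implied as soon as
$$
\left(1+\frac{\|\vX^0\|_2}{\alpha}\right)^{-1}\ \ge\ \theta_{2r},
\qquad\text{equivalently}\qquad
\alpha\ \ge\ \left(\theta_{2r}^{-1}-1\right)^{-1}\|\vX^0\|_2 .
$$
Substituting $\delta_{2r}=0.4404$ into \eqref{theta} gives $\left(\theta_{2r}^{-1}-1\right)^{-1}\approx 9.9849<10$, so $\alpha\ge 10\|\vX^0\|_2$ suffices; for any $\delta_{2r}<0.4404$ the required multiple of $\|\vX^0\|_2$ is even smaller. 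With \eqref{nspMC1} verified for all $\vH\in\Null(\cA)$, Theorem \ref{thm:nsp4MC} immediately yields that \eqref{N2} recovers $\vX^0$ uniquely from $\vb=\cA(\vX^0)$, which finishes the proof.

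The only substantive step is the first one — confirming that the Mo–Li null-space inequality carries over to singular-value vectors. I expect this to be routine given \eqref{Ineq1} and the reduction in \cite{OMFH11} (it is in fact embedded in the proof of Theorem \ref{mainM0}), so the theorem follows with exactly the same constant arithmetic as in Theorem \ref{main1}: one simply notes that $\theta_{2r}$ has the identical functional form as $\theta_{2k}$, hence the threshold $0.4404$ and the factor $10$ are unchanged. No new obstacle arises; the content is the bookkeeping that packages Theorems \ref{mainM0} and \ref{thm:nsp4MC} together.
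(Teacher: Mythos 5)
Your proposal is correct and follows essentially the same route as the paper: the paper's proof likewise cites the null-space inequality $\|\vH_0\|_*\le\theta_{2r}\|\sum_{i\ge 1}\vH_i\|_*$ established in the Appendix proof of Theorem \ref{mainM0}, feeds it into the matrix NSP condition \eqref{nspMC1} of Theorem \ref{thm:nsp4MC} via $\bigl(1+\|\vX^0\|_2/\alpha\bigr)^{-1}\ge\theta_{2r}$, and finishes with the same constant arithmetic as in Theorem \ref{main1}. Nothing is missing in your argument.
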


\begin{proof} The proof of Theorem \ref{mainM0} in Appendix establishes
that  any $\vH\in\Null(\cA)$ satisfies
$
\|\vH_0\|_* \le \theta_{2r} \|\sum_{i\ge 1}\vH_i\|_*.
$
Hence, \eqref{nspMC1} holds if
$
\left(1 + \frac{\|\vX^0\|_2}{\alpha}\right)^{-1} \ge\theta_{2r}.
$
The rest of the proof is similar to that of Theorem \ref{main1}.
\end{proof}
Skipping a proof similar to that of Theorem \ref{main2}, we present the stable recovery result as follows.
\begin{theorem}[RIP condition for stable recovery]
\label{mainM2}
Let $\vX^0\in\mathbb{R}^{n_1\times n_2}$ be an arbitrary matrix and $\sigma_{i}(\vX^0) $ be its $i$-th largest singular value. Let $\vb: =
\cA(\vX^0)+\vn$, where $\cA$ is a linear operator and $\vn$ is an arbitrary noise vector. If ${\cal A}$
satisfies  the RIP with $\delta_{2r} \le 0.3814$, then the solution $\vX^*$ of \eqref{N3} with any $\alpha\ge 10 \cdot \|\vX^0\|_2$ satisfies the error bounds:
\begin{align}
\label{MCbnd1}
\|\vX^* - \vX^0\|_*  \le & C_1 \cdot \sqrt{k}\|\vn\|_2 + C_2 \cdot\hat{\sigma}(\vX^0),\\
\label{MCbnd2}
\|\vX^* - \vX^0\|_F  \le & \bar{C}_1 \cdot \|\vn\|_2 + (\bar{C}_2 /\sqrt{r}) \cdot\hat{\sigma}(\vX^0),
\end{align}
where $\hat{\sigma}(\vX^0):=\sum_{i=r+1}^{\min\{n_1,n_2\}}\sigma_i(\vX^0)$ is the best rank-$r$ approximation error of $\vX^0$, $C_1$, $C_2$, $\bar{C}_1$, and $\bar{C}_2$ are given by formulas \eqref{sc1}--\eqref{sbc2} in which $\theta_{2k}$ shall be replaced by $\theta_{2r}$ (given in \eqref{t2r}), and
\beq\label{C34m}
C_3 := \frac{\alpha + \sigma_1(\vX^0)}{\alpha - \sigma_{r+1}(\vX^0)}\quad\text{and}\quad C_4 := \frac{2\alpha}{\alpha - \sigma_{r+1}(\vX^0)},
\eeq
respectively.
\end{theorem}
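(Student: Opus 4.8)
The plan is to reproduce the proof of Theorem~\ref{main2} with vectors replaced by vectors of singular values, the bridge being the unitarily invariant norm inequalities \eqref{Ineq1} and \eqref{Ineq2}. Write $\vH:=\vX^*-\vX^0$, $\cS:=\{1,\dots,r\}$, $\cZ:=\{r+1,\dots,\min\{n_1,n_2\}\}$, and for a matrix $\vX$ let $s(\vX)$ be the vector of its singular values sorted nonincreasingly (zero-padded to length $\min\{n_1,n_2\}$), so that $\|\vX\|_*=\|s(\vX)\|_1$, $\|\vX\|_F^2=\|s(\vX)\|_2^2$, $s(\vX)_\cS$ collects $\sigma_1(\vX),\dots,\sigma_r(\vX)$, and $s(\vX)_\cZ$ the rest. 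First I would record the matrix form of the noise estimate used in Theorem~\ref{main2}: since $\vX^0$ is feasible for \eqref{N3} (because $\|\cA(\vX^0)-\vb\|_2=\|\vn\|_2\le\sigma$) and $\vX^*$ is feasible, one gets $\|\cA(\vH)\|_2\le\|\cA(\vX^*)-\vb\|_2+\|\vn\|_2\le 2\|\vn\|_2$.

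The second step is the matrix analog of Lemma~\ref{lmop}, namely $\sum_{i>r}\sigma_i(\vH)\le C_3\sum_{i\le r}\sigma_i(\vH)+C_4\,\hat\sigma(\vX^0)$ with $C_3,C_4$ as in \eqref{C34m}. To obtain it, optimality of $\vX^*$ together with feasibility of $\vX^0$ in \eqref{N3} yields $\|\vX^0+\vH\|_*+\frac1{2\alpha}\|\vX^0+\vH\|_F^2\le\|\vX^0\|_*+\frac1{2\alpha}\|\vX^0\|_F^2$; applying \eqref{Ineq1} and \eqref{Ineq2} to the pair $(\vX^0,-\vH)$ lower-bounds the left-hand side by $\|s(\vX^0)-s(\vH)\|_1+\frac1{2\alpha}\|s(\vX^0)-s(\vH)\|_2^2$. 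The resulting scalar inequality is precisely \eqref{xhh1} for the vectors $\vx^0:=s(\vX^0)$ and $\vh:=-s(\vH)$ (with $\vx^*:=\vx^0+\vh$); since the chain \eqref{xhh2} that turns \eqref{xhh1} into \eqref{err2} is purely algebraic and never uses that $\vx^*$ is a genuine minimizer, it carries over verbatim, and for this choice of $\vx^0$ one has $\|\vx^0_\cS\|_\infty=\sigma_1(\vX^0)$, $\|\vx^0_\cZ\|_\infty=\sigma_{r+1}(\vX^0)$, $\|\vx^0_\cZ\|_1=\hat\sigma(\vX^0)$, which reproduces exactly the constants \eqref{C34m}.

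Third, I would import from the proof of Theorem~\ref{mainM0} in the Appendix (which extends \cite{Mo-Li-11} via \cite{OMFH11}) its RIP estimates in singular-value form: for $\cA$ with $\delta_{2r}<2/3$ and $\vH$ obeying $\|\cA(\vH)\|_2\le2\|\vn\|_2$, $\sum_{i\le r}\sigma_i(\vH)\le\frac{2\sqrt2}{\sqrt{1-\delta_{2r}}}\sqrt r\,\|\vn\|_2+\theta_{2r}\sum_{i>r}\sigma_i(\vH)$ and $\|\vH\|_F\le\frac{2}{\sqrt{1-\delta_{2r}}}\|\vn\|_2+\sqrt{\frac{8(2-\delta_{2r})}{(1-\delta_{2r})(32-25\delta_{2r})}}\cdot\frac{\sum_{i>r}\sigma_i(\vH)}{\sqrt r}$, with $\theta_{2r}$ as in \eqref{t2r}. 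I then check $C_3\theta_{2r}<1$: because $\alpha\ge10\|\vX^0\|_2\ge10\,\sigma_{r+1}(\vX^0)$ we get $C_3\le\frac{11}{9}$, and $\delta_{2r}\le0.3814$ keeps $\theta_{2r}$ below the value that would push the product to $1$ — the same worst-case arithmetic ($\sigma_1(\vX^0)=\sigma_{r+1}(\vX^0)$, i.e.\ no singular-value decay) that appears in Theorem~\ref{main2}.

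Finally, substituting the first step-3 estimate into the step-2 inequality and rearranging gives $\sum_{i>r}\sigma_i(\vH)\le(1-C_3\theta_{2r})^{-1}\bigl(C_3\frac{2\sqrt2}{\sqrt{1-\delta_{2r}}}\sqrt r\,\|\vn\|_2+C_4\hat\sigma(\vX^0)\bigr)$; plugging this into $\|\vX^*-\vX^0\|_*=\sum_{i\le r}\sigma_i(\vH)+\sum_{i>r}\sigma_i(\vH)\le\frac{2\sqrt2}{\sqrt{1-\delta_{2r}}}\sqrt r\,\|\vn\|_2+(1+\theta_{2r})\sum_{i>r}\sigma_i(\vH)$ produces \eqref{MCbnd1}, and plugging it into the Frobenius estimate of step~3 produces \eqref{MCbnd2}; the emerging constants are exactly \eqref{sc1}--\eqref{sbc2} with $2k$ replaced by $2r$. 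I expect the only delicate part to be the bookkeeping of step~2 — persuading oneself that the algebraic skeleton \eqref{xhh2} of Lemma~\ref{lmop} really does survive the substitution $\vh\mapsto-s(\vH)$, even though $-s(\vH)$ is not the error vector of an actual minimizer — together with verifying that the estimates quoted from the Appendix come out in precisely the stated singular-value form; everything else is the identical arithmetic used for Theorem~\ref{main2}.
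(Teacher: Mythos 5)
Your proposal is correct and follows essentially the route the paper itself intends: the paper skips the proof with the remark that it is ``similar to that of Theorem \ref{main2},'' relying on exactly the translation you carry out --- apply \eqref{Ineq1}--\eqref{Ineq2} to the pair $(\vX^0,-\vH)$ so that the algebraic skeleton of Lemma \ref{lmop} runs on $\vx^0=s(\vX^0)$, $\vh=-s(\vH)$ (yielding \eqref{C34m}), then combine with the matrix RIP estimates from the Appendix/\cite{Mo-Li-11}/\cite{OMFH11} and the same $C_3\theta_{2r}<1$ arithmetic. The only ingredients you quote rather than prove (the noisy singular-value analogs of Mo--Li's Lemma 4.3 and Page-7 inequality) are precisely what the paper also leaves implicit, so your write-up is at least as complete as the paper's.
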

Although there are  few discussions on SSP for low-rank matrix recovery in the literature (cf. \cite{DF10}), we present two SSP-based results without proofs.
\begin{theorem}[Matrix SSP condition for exact recovery]\label{thm:mssp}
Let $\cA:\mathbb{R}^{n_1\times n_2}\to \mathbb{R}^m$ be a linear operator. Suppose there exists  $\Delta >0$ such that  all nonzero $\vH\in\Null(\cA)$ satisfy
$$
\frac{\|\vH\|_*}{\|\vH\|_F}\ge \sqrt{\frac{m}{\Delta}}.
$$
Assume that $\|\vX^0\|_2$ and $\alpha>0$ are fixed.  If
\beq\label{MCssd1}
m\ge   \left(2+ \frac{\|\vX^0\|_2}{\alpha}\right)^2r \Delta,
\eeq
then the null-space condition \eqref{nspMC1} holds for all $\vH\in\Null(\cA)$. Hence, \eqref{MCssd1} is sufficient for problem \eqref{N2} to recover any matrices $\vX^0$ of rank $r$ or less from measurements $\vb=\cA(\vX^0)$.
\end{theorem}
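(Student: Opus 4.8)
The plan is to transcribe, line by line, the proof of the vector result Theorem~\ref{thm:ssp}, with the singular-value vector of $\vH$ playing the role of a sorted coefficient vector and $\|\vX^0\|_2=\sigma_1(\vX^0)$ playing the role of $\|\vx^0\|_\infty$. First I would put the matrix null-space condition \eqref{nspMC1} into a more convenient equivalent form. Since $\sum_{i=r+1}^{m}\sigma_i(\vH)=\|\vH\|_*-\sum_{i=1}^{r}\sigma_i(\vH)$, inequality \eqref{nspMC1} is equivalent to
\[
\left(2+\frac{\|\vX^0\|_2}{\alpha}\right)\sum_{i=1}^{r}\sigma_i(\vH)\le\|\vH\|_*.
\]

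Next I would bound the left-hand side by the Frobenius norm. Applying the Cauchy--Schwarz inequality to the $r$ largest singular values gives
\[
\sum_{i=1}^{r}\sigma_i(\vH)\le\sqrt{r}\left(\sum_{i=1}^{r}\sigma_i(\vH)^2\right)^{1/2}\le\sqrt{r}\,\|\vH\|_F,
\]
so the displayed condition, hence \eqref{nspMC1}, holds for a given nonzero $\vH\in\Null(\cA)$ whenever
\[
\left(2+\frac{\|\vX^0\|_2}{\alpha}\right)\sqrt{r}\le\frac{\|\vH\|_*}{\|\vH\|_F}.
\]
Finally, invoking the matrix SSP hypothesis $\|\vH\|_*/\|\vH\|_F\ge\sqrt{m/\Delta}$, this last inequality is implied by $\bigl(2+\|\vX^0\|_2/\alpha\bigr)\sqrt{r}\le\sqrt{m/\Delta}$, which upon squaring is exactly the stated bound \eqref{MCssd1}. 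Since $\vH$ was an arbitrary nonzero element of $\Null(\cA)$, condition \eqref{nspMC1} holds for all such $\vH$, and Theorem~\ref{thm:nsp4MC} then delivers exact recovery by \eqref{N2} of every matrix of rank $r$ or less with the prescribed $\|\vX^0\|_2$.

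I do not anticipate a genuine obstacle: the argument is essentially a verbatim copy of the proof of Theorem~\ref{thm:ssp}. The only points needing care are (i) the equivalence between \eqref{nspMC1} and its $\|\vH\|_*$-form, which rests on the identity $\|\vH\|_*=\sum_{i\le r}\sigma_i(\vH)+\sum_{i>r}\sigma_i(\vH)$, and (ii) applying Cauchy--Schwarz to the top-$r$ singular values only, so that the factor $\sqrt{r}$ rather than $\sqrt{\rank(\vH)}$ appears; both are routine.
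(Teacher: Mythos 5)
Your proof is correct: the reduction of \eqref{nspMC1} to $\bigl(2+\|\vX^0\|_2/\alpha\bigr)\sum_{i\le r}\sigma_i(\vH)\le\|\vH\|_*$, the Cauchy--Schwarz bound $\sum_{i\le r}\sigma_i(\vH)\le\sqrt{r}\,\|\vH\|_F$, and the appeal to the matrix SSP and Theorem~\ref{thm:nsp4MC} all go through exactly as written. The paper states this theorem without proof, but your argument is precisely the intended one, namely the verbatim singular-value analogue of the paper's proof of Theorem~\ref{thm:ssp}.
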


\begin{theorem}[Matrix SSP condition for stable recovery]\label{thm:MCssp} Assume that linear operator $\cA:\mathbb{R}^{n_1\times n_2}\to \mathbb{R}^m$ has the same property as it is in Theorem \ref{thm:mssp}. Let $\vX^0\in\mathbb{R}^{n_1\times n_2}$ be an \emph{arbitrary} matrix. Let $\alpha>0$ in problem \eqref{N2}. Define $C_3$ and $C_4$ in
\eqref{C34m}, which depend on $\alpha$. If
\beq\label{MCssd2}
m\ge   4\left(1+ C_3\right)^2r \Delta,
\eeq
 then the solution $\vX^*$ of \eqref{N2} satisfies
\beq\label{MCvx4}
\|\vX^*-\vX^0\|_*\le 4C_4\cdot\hat{\sigma}(\vX^0),
\eeq
where $\hat{\sigma}(\vX^0):=\sum_{i=r+1}^{\min\{n_1,n_2\}}\sigma_i(\vX^0)$ is the best rank-$r$ approximation error of $\vX^0$.
\end{theorem}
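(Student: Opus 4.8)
The plan is to transcribe the proof of the vector result, Theorem~\ref{thm:ssp1}, into the matrix setting; the only extra ingredient is the singular-value reduction already used in the proof of Theorem~\ref{thm:nsp4MC}, i.e.\ the unitarily-invariant-norm inequalities \eqref{Ineq1} and \eqref{Ineq2}. Since problem \eqref{N2} is equality constrained and $\vb=\cA(\vX^0)$, the error matrix $\vH:=\vX^*-\vX^0$ lies in $\Null(\cA)$, so the hypothesis $\|\vH\|_*/\|\vH\|_F\ge\sqrt{m/\Delta}$ applies to it directly.

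The key step is to establish the matrix analogue of the bound \eqref{err2} of Lemma~\ref{lmop}:
\[
\sum_{i=r+1}^{\min\{n_1,n_2\}}\sigma_i(\vH)\;\le\;C_3\sum_{i=1}^{r}\sigma_i(\vH)\;+\;C_4\,\hat\sigma(\vX^0),
\]
with $C_3,C_4$ as in \eqref{C34m}. To get it, let $\tilde\vx^0$ be the vector of singular values of $\vX^0$ in nonincreasing order, $\tilde\vh:=-(\sigma_1(\vH),\sigma_2(\vH),\dots)$, and $\cS:=\{1,\dots,r\}$. Applying \eqref{Ineq1} and \eqref{Ineq2} to $\vX^0$ and $-\vH$ gives $\|\vX^0+\vH\|_*\ge\|\tilde\vx^0+\tilde\vh\|_1$ and $\|\vX^0+\vH\|_F^2\ge\|\tilde\vx^0+\tilde\vh\|_2^2$; inserting these into the optimality inequality $\|\vX^0+\vH\|_*+\frac1{2\alpha}\|\vX^0+\vH\|_F^2\le\|\vX^0\|_*+\frac1{2\alpha}\|\vX^0\|_F^2$ and then bounding the resulting $\ell_1$/$\ell_2$ quantities by exactly the chain of estimates that produced \eqref{xhh2} (triangle inequality and $\langle\va,\vb\rangle\le\|\va\|_\infty\|\vb\|_1$) identifies $\|\tilde\vh_\cS\|_1=\sum_{i\le r}\sigma_i(\vH)$, $\|\tilde\vh_{\cS^c}\|_1=\sum_{i>r}\sigma_i(\vH)$, $\|\tilde\vx^0_{\cS^c}\|_1=\hat\sigma(\vX^0)$, $\|\tilde\vx^0_\cS\|_\infty=\sigma_1(\vX^0)$, $\|\tilde\vx^0_{\cS^c}\|_\infty=\sigma_{r+1}(\vX^0)$, and $\|\tilde\vh\|_2=\|\vH\|_F$; dropping the nonnegative $\frac1{2\alpha}\|\vH\|_F^2$ and dividing by $1-\sigma_{r+1}(\vX^0)/\alpha>0$ yields the displayed inequality. (This tacitly uses $\alpha>\sigma_{r+1}(\vX^0)$, implicit in $C_3,C_4>0$; if $\hat\sigma(\vX^0)=0$ the theorem reduces to exact recovery and $\vX^*=\vX^0$.)

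With this inequality in hand, the rest follows the proof of Theorem~\ref{thm:ssp1} verbatim. Put $\bar{C}:=\|\vH\|_*/\hat\sigma(\vX^0)$, so that \eqref{MCvx4} is equivalent to $\bar{C}\le 4C_4$. Adding $\sum_{i\le r}\sigma_i(\vH)$ to the displayed inequality and substituting $\hat\sigma(\vX^0)=\bar{C}^{-1}\|\vH\|_*$ gives $(1-2C_4\bar{C}^{-1})\|\vH\|_*\le(1+C_3)\sum_{i\le r}\sigma_i(\vH)$. If $\bar{C}\le 2C_4$, \eqref{MCvx4} holds trivially; otherwise $1-2C_4\bar{C}^{-1}>0$ and the Cauchy--Schwarz bound $\sum_{i\le r}\sigma_i(\vH)\le\sqrt r\,\|\vH\|_F$ gives $\|\vH\|_*\le\frac{(1+C_3)\sqrt r}{1-2C_4\bar{C}^{-1}}\|\vH\|_F$. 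Combined with the SSP hypothesis and \eqref{MCssd2}, which together give $\|\vH\|_*/\|\vH\|_F\ge\sqrt{m/\Delta}\ge 2(1+C_3)\sqrt r$, this forces $2(1-2C_4\bar{C}^{-1})\le 1$, i.e.\ $\bar{C}\le 4C_4$, which is \eqref{MCvx4}.

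The only part needing genuine care is the singular-value reduction of the second paragraph: verifying that the nonincreasing singular-value vector of $\vX^0$ truncated to its top $r$ entries correctly plays the role of the support $\cS$ in Lemma~\ref{lmop}, and that \eqref{Ineq1}--\eqref{Ineq2} are applied to the right pair of matrices. After that, no new idea is needed --- everything is the vector argument under the dictionary $\|\cdot\|_1\mapsto\|\cdot\|_*$, $\|\cdot\|_2\mapsto\|\cdot\|_F$, $k\mapsto r$, $\|\vx^0_\cZ\|_1\mapsto\hat\sigma(\vX^0)$, with the $\sqrt r$ from Cauchy--Schwarz in place of $\sqrt k$ --- which is why the paper can state the result without a detailed proof.
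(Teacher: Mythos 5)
Your proposal is correct and is exactly the route the paper intends (it states Theorem \ref{thm:MCssp} without proof precisely because it follows from the argument of Theorem \ref{thm:ssp1} after the singular-value reduction via \eqref{Ineq1}--\eqref{Ineq2}, as in the proof of Theorem \ref{thm:nsp4MC}): the matrix analogue of \eqref{err2} with the constants \eqref{C34m}, followed verbatim by the $\bar{C}$ argument and the Cauchy--Schwarz bound $\sum_{i\le r}\sigma_i(\vH)\le\sqrt{r}\,\|\vH\|_F$. Your handling of the corner cases ($\vH=\vzero$, $\hat\sigma(\vX^0)=0$, and the implicit requirement $\alpha>\sigma_{r+1}(\vX^0)$) is consistent with, and slightly more careful than, the vector proof in the paper.
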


\section{Global Linear Convergence}\label{sc:glc}
Now we turn to study the numerical properties of  the linearized Bregman algorithm (LBreg) for the augmented model \eqref{P2}. In this section, we show that LBreg, as well as its two fast variants, achieves global linear convergence with \emph{no} assumptions on the solution
sparsity or aforementioned properties of matrix $\vA$. First, we review its four equivalent forms of LBreg that have appeared in different papers.
We start off with the dual gradient descent iteration \cite{Yin-LBreg-09}: give a step size $h>0$, $\vy^{(0)}=\vzero$, and $k$ starting from 0,
\begin{subequations}\label{lbreg}
\beq\label{lbregy}
\vy^{(k+1)} \gets \vy^{(k)} - h \left(-\vb + \alpha\vA\shrink(\vA^\top \vy^{(k)})\right).
\eeq
The last term of \eqref{lbregy} is the gradient of the objective function of problem \eqref{D2u}.
By letting $\vx^{(k)}:=\alpha\shrink(\vA^\top \vy^{(k)})$, one obtains the  ``primal-dual'' form \begin{align}
\label{lbregy1}
\vx^{(k+1)}&\gets \alpha\shrink(\vA^\top \vy^{(k)}),\\
\label{lbregy2}
\vy^{(k+1)}&\gets \vy^{(k)} + h (\vb - \vA\vx^{(k+1)}).
\end{align}
The same iteration is
given in \cite{YOGD08,Osher-Mao-Dong-Yin-10,Cai-Osher-Shen-lbreg2-09} as
\begin{align}
\label{lbregv1}
\vx^{(k+1)}&\gets \alpha\shrink(\vv^{(k)}),\\
\label{lbregv2}
\vv^{(k+1)}&\gets \vv^{(k)} + h \vA^\top(\vb - \vA\vx^{(k+1)}),
\end{align}
where $\vv^{(k)}=\vA^\top\vy^{(k)}$. Finally, the name ``linearized Bregman'' comes  from the iteration \cite{YOGD08}
\begin{align}
\label{lbregl1}
\vx^{(k+1)}&\gets \argmin_{\vx} D_{\ell_1}^{\vp^{(k)}}(\vx,\vx^{(k)}) + h\langle \vA^\top(\vA\vx^{(k)}-\vb),\vx\rangle + \frac{1}{2\alpha}\|\vx
- \vx^{(k)}\|_2^2,\\
\label{lbregl2}
\vp^{(k+1)}&\gets \vp^{(k)}+h\vA^\top (\vb-\vA\vx^{(k)}) - \frac{1}{\alpha}(\vx^{(k+1)}-\vx^{(k)}),
\end{align}
\end{subequations}
where $\vx^{(0)}=\vp^{(0)}=\vzero$ and
the Bregman ``distance'' $D_f^{\vp}(\cdot,\cdot)$ is defined as
$$
D_f^{\vp}(\vx,\vy) = f(\vx) - f(\vy) - \langle \vp, \vx-\vy\rangle,\quad\text{where}~ \vp \in \partial f(\vy).
$$
The last two terms of \eqref{lbregl1} replace the term $\frac{h}{2}\|\vA\vx-\vb\|_2^2$ in the original Bregman iteration.
Following \cite{YOGD08}, one can obtain \eqref{lbregv1}-\eqref{lbregv2} from \eqref{lbregl1}-\eqref{lbregl2} by setting $\vv^{(k)} =
\vp^{(k)}+h\vA^\top(\vb-\vA\vx^{(k)})+\frac{\vx^{(k)}}{\alpha}$.

It is most convenient to work with \eqref{lbregy} due to its simplicity  and gradient-descent interpretation. In the rest of this section, we let
$f(\vy)$ be the objective function of \eqref{D2u} and have $\grad f(\vy) = -\vb + \alpha\vA\shrink(\vA^\top \vy)$.


\subsection{Preliminary}
In this subsection, we prove a few key results that will be used to prove the restricted strongly convex property in the next subsection.
\begin{definition}Let $\lambda_{\min}^{++}(\vS)$ denote the minimum \emph{strictly positive} eigenvalue of a nonzero symmetric matrix $\vS$,
assuming its existence. Namely,
$$ \lambda_{\min}^{++}(\vS):=\min\{\lambda_i(\vS):\lambda_i(\vS)>0\},$$
where $\{\lambda_i(\vS)\}$ is the set of eigenvalues of $\vS$.
\end{definition}

\begin{lemma}\label{minposeig}
Let $\vA$ be a nonzero $m$-by-$n$ matrix. Let $\vD\succ\vzero$ be an $n$-by-$n$ diagonal matrix with strictly positive diagonal entries. We have
\beq\label{lmpmin}
\lambda_{\min}^{++}(\vA\vD\vA^\top)=\min_{\|\vA\alpha\|_2=1}(\vA\alpha)^\top(\vA\vD\vA^\top)(\vA\alpha).
\eeq
\end{lemma}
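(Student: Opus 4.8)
The plan is to exploit the fact that $\vA\vD\vA^\top$ is symmetric positive semidefinite, so that its action on a vector is determined by the component of that vector in the range of $\vA\vD\vA^\top$, which coincides with $\Range(\vA)$. First I would recall that for a symmetric PSD matrix $\vS=\vA\vD\vA^\top$, one has $\Null(\vS)=\Null(\vA^\top)$ (since $\vD\succ\vzero$) and hence $\Range(\vS)=\Range(\vA)$. The quantity $\lambda_{\min}^{++}(\vS)$ is the smallest eigenvalue of $\vS$ restricted to its range, which by the Courant--Fischer/Rayleigh-quotient characterization equals $\min\{\vu^\top\vS\vu : \vu\in\Range(\vS),\ \|\vu\|_2=1\}$. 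So the left-hand side of \eqref{lmpmin} equals
\[
\min\{\vu^\top(\vA\vD\vA^\top)\vu : \vu\in\Range(\vA),\ \|\vu\|_2=1\}.
\]

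Next I would change variables. Every $\vu\in\Range(\vA)$ can be written as $\vu=\vA\alpha$ for some $\alpha\in\RR^n$ (not unique, but any preimage works since $\vu^\top(\vA\vD\vA^\top)\vu$ depends only on $\vu$), and conversely every $\vA\alpha$ lies in $\Range(\vA)$. Under this substitution the constraint $\|\vu\|_2=1$ becomes $\|\vA\alpha\|_2=1$, and the objective $\vu^\top(\vA\vD\vA^\top)\vu$ becomes $(\vA\alpha)^\top(\vA\vD\vA^\top)(\vA\alpha)$. Therefore the minimum over $\vu\in\Range(\vA)$ with $\|\vu\|_2=1$ equals the minimum over $\alpha\in\RR^n$ with $\|\vA\alpha\|_2=1$ of $(\vA\alpha)^\top(\vA\vD\vA^\top)(\vA\alpha)$, which is exactly the right-hand side of \eqref{lmpmin}. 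One should note the feasible set $\{\alpha:\|\vA\alpha\|_2=1\}$ is nonempty because $\vA$ is nonzero, so the minimum is well-defined and the asserted $\lambda_{\min}^{++}(\vA\vD\vA^\top)$ exists.

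The few points needing care are: (i) justifying $\Range(\vA\vD\vA^\top)=\Range(\vA)$, for which I would use that $\vw^\top\vA\vD\vA^\top\vw=0$ with $\vD\succ\vzero$ forces $\vA^\top\vw=\vzero$, giving $\Null(\vA\vD\vA^\top)=\Null(\vA^\top)$ and then taking orthogonal complements; (ii) confirming that the Rayleigh quotient of a symmetric PSD matrix, minimized over unit vectors in its range, returns precisely the smallest positive eigenvalue — this is the spectral theorem applied to $\vS$ restricted to $\Range(\vS)$, where it is positive definite; and (iii) observing the substitution $\vu\leftrightarrow\vA\alpha$ is a surjection from $\{\alpha:\|\vA\alpha\|_2=1\}$ onto $\{\vu\in\Range(\vA):\|\vu\|_2=1\}$, so the two minima agree. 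None of these is a genuine obstacle; the main thing to get right is item (i) together with the identification of $\lambda_{\min}^{++}$ as a constrained Rayleigh-quotient minimum, which is the crux of the short argument.
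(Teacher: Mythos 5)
Your argument is correct and follows essentially the same route as the paper: both identify $\Range(\vA\vD\vA^\top)$ with $\Range(\vA)$ (the paper via the eigenvectors of the positive eigenvalues spanning $\Range(\vA)$, you via $\Null(\vA\vD\vA^\top)=\Null(\vA^\top)$ and orthogonal complements) and then read off $\lambda_{\min}^{++}$ as the Rayleigh-quotient minimum over unit vectors of the form $\vA\alpha$. No gaps; the substitution and feasibility remarks you add are exactly the right points of care.
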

\begin{proof}
Let $r=\rank(\vA)\ge 1$. Since $\rank(\vA\vD\vA^\top)=r$ and $\vA\vD\vA^\top\succeq\vzero$, $\vA\vD\vA^\top$ has $r$ strictly positive eigenvalues.
Let $\lambda>0$ be a positive eigenvalue and $\vx$ be its corresponding eigenvector. Since $\vA\vD\vA^\top\vx = \lambda\vx$, we see $\vx\in\Range(\vA)$ and can thus write  $\vx = \vA\alpha_\lambda$. From this and $\rank(\vA)=r$, the eigenvectors corresponding to the $r$ strictly positive eigenvalues span $\Range(\vA)$.
Hence, 
\eqref{lmpmin} attains its minimum at the eigenvector $\vA\alpha$ corresponding to the eigenvalue
$\lambda_{\min}^{++}(\vA\vD\vA^\top)$. \end{proof}

Next, we show that a constrained eigenvalue problem, which will appear in our proof of restricted strong convexity, has a \emph
{strictly positive} minimum objective.
\begin{lemma}\label{lm:abcd} Let $\vA$ be a nonzero $m$-by-$n$ matrix, $\vB$ be an $m$-by-$\ell$ matrix, and $\vD\succ\vzero$ be a diagonal matrix of
size $n$ by $n$. Let $r:=\rank([\vA~ \vB])-\rank(\vA)$, which satisfies $0\le r\le \ell$. Let $\vc$ and $\vd$ be free vectors of sizes $n$ and $\ell$,
respectively. The  constrained eigenvalue problem
\beq
\label{vmin}
v:=\min\left\{(\vA\vc+\vB\vd)^\top(\vA\vD\vA^\top)(\vA\vc+\vB\vd):\|\vA\vc+\vB\vd\|_2=1,\vB^\top(\vA\vc+\vB\vd)\le\vzero,\vd\ge\vzero\right\}\\
\eeq
satisfies
$v\ge v_{\min}>0$, where
\beq\label{cminl}
v_{\min}:=\min_{\vC}\left\{\lambda_{\min}^{++}(\vA\vD\vA^\top+\vC\vC^\top):\vC ~\text{is an $m$-by-$p$ submatrix of}~\vB,~r\le p\le \ell\right\}.
\eeq
(If $p=0$, $\vC$ vanishes.)
\end{lemma}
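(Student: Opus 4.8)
The plan is to reduce \eqref{vmin} to a statement about the positive eigenvalues of the family $\vA\vD\vA^\top+\vC\vC^\top$: I will work at an optimal point $\vw^*:=\vA\vc^*+\vB\vd^*$ and exhibit a submatrix $\vC$ of $\vB$ of admissible size (between $r$ and $\ell$ columns) such that $\vw^*\in\Range(\vA)+\Range(\vC)$ and $\vC^\top\vw^*=\vzero$; the bound $v\ge v_{\min}$ then drops out of the variational characterization of $\lambda_{\min}^{++}$. First I recast the feasible set in terms of $\vw=\vA\vc+\vB\vd$: the objective and the constraints $\|\vA\vc+\vB\vd\|_2=1$ and $\vB^\top(\vA\vc+\vB\vd)\le\vzero$ depend on $(\vc,\vd)$ only through $\vw$, while $\vd\ge\vzero$ only restricts which $\vw$ are attainable, so the feasible $\vw$ form the set $S:=\{\vw:\|\vw\|_2=1\}\cap\big(\Range(\vA)+\mathrm{cone}(\vB_1,\ldots,\vB_\ell)\big)\cap\{\vw:\vB^\top\vw\le\vzero\}$, which is compact (the cone is polyhedral hence closed). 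If $S=\emptyset$ then $v=+\infty$ and there is nothing to prove; otherwise $v$ is attained at some $\vw^*=\vA\vc^*+\vB\vd^*$. For any $\vw\in S$ we have $1=\|\vw\|_2^2=\vw^\top\vA\vc+\vw^\top\vB\vd\le\vw^\top\vA\vc$, because $\vw^\top\vB\vd=\sum_j d_j(\vB^\top\vw)_j\le0$; hence $\vA^\top\vw\ne\vzero$ and $\vw^\top(\vA\vD\vA^\top)\vw=\|\vD^{1/2}\vA^\top\vw\|_2^2>0$, so $v>0$ by compactness. The same argument shows every matrix $\vA\vD\vA^\top+\vC\vC^\top$ in \eqref{cminl} is a nonzero positive-semidefinite matrix (nonzero since $\vA\ne\vzero$), so its $\lambda_{\min}^{++}$ is well defined and positive; as \eqref{cminl} minimizes over finitely many submatrices $\vC$, $v_{\min}>0$.

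Next I pass to a \emph{minimal} representation of $\vw^*$: among all $(\vc,\vd)$ with $\vw^*=\vA\vc+\vB\vd$ and $\vd\ge\vzero$, choose one whose support $P:=\supp(\vd)$ has smallest cardinality $p:=|P|$, and relabel it $(\vc^*,\vd^*)$. A standard elimination argument shows $\{\vB_j\}_{j\in P}$ is linearly independent modulo $\Range(\vA)$: were it dependent, a dependency could be absorbed into the $\vA$-part and scaled so that one coordinate of $\vd^*$ drops to zero while $\vw^*$ and $\vd^*\ge\vzero$ are preserved, contradicting minimality of $|P|$. Hence $p=\rank[\vA~\vB_P]-\rank(\vA)\le r$. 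Because $\vD\succ\vzero$, the matrix $\vD^{1/2}$ is invertible, so for any $\vC$ we have $\Range(\vA\vD\vA^\top+\vC\vC^\top)=\Range([\vA\vD^{1/2}~\vC])=\Range(\vA)+\Range(\vC)$; applying Lemma \ref{minposeig} to $[\vA\vD^{1/2}~\vC]$ (with the identity in place of $\vD$) then shows that every unit vector $\vu\in\Range(\vA)+\Range(\vC)$ satisfies $\vu^\top(\vA\vD\vA^\top+\vC\vC^\top)\vu\ge\lambda_{\min}^{++}(\vA\vD\vA^\top+\vC\vC^\top)$. Consequently the lemma reduces to the following: construct an $m$-by-$q$ submatrix $\vC$ of $\vB$ with $r\le q\le\ell$, $\vw^*\in\Range(\vA)+\Range(\vC)$, and $\vC^\top\vw^*=\vzero$. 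Indeed, such a $\vC$ gives $v=\vw^{*\top}(\vA\vD\vA^\top)\vw^*=\vw^{*\top}(\vA\vD\vA^\top+\vC\vC^\top)\vw^*\ge\lambda_{\min}^{++}(\vA\vD\vA^\top+\vC\vC^\top)\ge v_{\min}>0$. (When $r=0$ one simply takes $\vC$ empty, since then $\Range(\vB)\subseteq\Range(\vA)$ forces $\vw^*\in\Range(\vA)$.)

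The construction of $\vC$ — from the columns of $\vB$ indexed by $P$ together with enough further columns of $\vB$ — is the crux, and it rests on two claims about the minimizer: (i) $\vB_j^\top\vw^*=0$ for every $j\in P$; and (ii) the active columns $\{j:\vB_j^\top\vw^*=0\}$, together with $\Range(\vA)$, span a subspace large enough that $P$ can be enlarged \emph{within the active set} to an index set $\hat P$ with $r\le|\hat P|\le\ell$ and $\vw^*\in\Range(\vA)+\Range(\vB_{\hat P})$; then $\vC:=\vB_{\hat P}$ has the three required properties. I would obtain (i) and (ii) from the first-order (KKT) conditions of \eqref{vmin} at $\vw^*$ together with complementary slackness for $\vd\ge\vzero$ (inactive on $P$, since $\vd^*_P>\vzero$) and for $\vB^\top\vw\le\vzero$: for instance, if $(\vB^\top\vw^*)_{j_0}<0$ for some $j_0\in P$, then both $d_{j_0}\ge0$ and $(\vB^\top\vw)_{j_0}\le0$ are slack, so one can perturb $(\vc,\vd)$ two-sidedly while keeping feasibility and, by optimality on the unit sphere, force a contradiction. \textbf{This is the step I expect to be the main obstacle.} The delicate point is that the support $P$ of $\vd^*$ and the active set of the inequalities $\vB^\top\vw\le\vzero$ are a priori distinct index sets, so the admissible perturbation directions must be chosen carefully — in general perturbing $\vc$ simultaneously with $\vd$ so as to preserve the currently active inequality constraints — both to pin $P$ inside the active set and to produce enough active columns for (ii). Granting (i)–(ii), the reduction of the previous paragraph closes the argument and yields $v\ge v_{\min}>0$.
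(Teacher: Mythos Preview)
Your overall reduction is sound: showing $v>0$ via $\vw^\top\vA\vc\ge 1$ is clean, and the observation that it suffices to exhibit a submatrix $\vC$ with $\vC^\top\vw^*=\vzero$, $\vw^*\in\Range(\vA)+\Range(\vC)$, and $r\le |\text{cols}(\vC)|\le\ell$ is correct. The gap is exactly where you flag it: claims (i) and (ii). Your sketched argument for (i) --- perturb $d_{j_0}$ two-sidedly when $d_{j_0}^*>0$ and $\vb_{j_0}^\top\vw^*<0$ --- does not go through as stated, because \emph{other} constraints $\vb_k^\top\vw\le 0$ may be active at $\vw^*$ with $\vb_k^\top\vb_{j_0}>0$, blocking the perturbation in one direction. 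You note this, but there is no indication of how to repair it; the KKT system couples $P$ and the active set $J$ in a way that does not obviously force $P\subseteq J$. Claim (ii) is even more uncertain: you need $|J|\ge r$ at the \emph{single} minimizer $\vw^*$, and nothing in the first-order conditions guarantees that many constraints bind simultaneously.

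The paper avoids both difficulties by an inductive ``peeling'' argument rather than a one-shot KKT analysis. Its key step is much weaker than your (i)--(ii): it only needs that \emph{at least one} constraint $\vb_j^\top\vw^*\le 0$ is active whenever $\Range([\vA~\vB])\supsetneq\Range(\vA)$, proved by the same $\|\vw\|^2\le\vc^\top\vA^\top\vw$ trick you used for $v>0$. Given one active column, say $\vb_1$, the paper writes $v=\vw^{*\top}(\vA\vD\vA^\top+\vb_1\vb_1^\top)\vw^*$, moves $\vb_1$ into $\vA$ to form $\vA_1=[\vA~\vb_1]$, drops the constraint, and obtains a \emph{new} problem with value $v_1\le v$. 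Crucially the minimizer of this new problem is allowed to differ from $\vw^*$, so the next active column is identified at a possibly different point. Iterating until $\Range([\vA_p~\vB_p])=\Range(\vA_p)$ forces $p\ge r$ automatically, and Lemma~\ref{minposeig} finishes. The induction thus accumulates $\ge r$ active columns \emph{across a sequence of minimizers}, rather than demanding them all at the original $\vw^*$; this is precisely the idea your direct approach is missing.
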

Let us first explain  this lemma. If $\vA$ and $\vB$ are  orthogonal to each other (i.e., $\vA^\top \vB = \vzero$),  then $\vB^\top(\vA\vc+\vB\vd)=\vB^\top\vB\vd\le\vzero$ and $\vd\ge\vzero$ will force $\vB\vd=\vzero$ and thus reduce \eqref{vmin} to \eqref{lmpmin}. Therefore, the lemma is more about the general case where $\vA$ and $\vB$ are not orthogonal. The result \eqref{cminl} reveals that  \eqref{vmin} can go lower than \eqref{lmpmin} yet must remain strictly positive.  From another perspective, if we ignore the constraints
$\vB^\top(\vA\vc+\vB\vd)\le\vzero$ in \eqref{vmin}, then we can choose $\vc$ and $\vd\ge \vzero$ such that $\vA^\top(\vA\vc+\vB\vd)=\vzero$ and thus have $v=0$. (For example, if $r>0$, we can choose any $\vd\ge\vzero$ so that $\vB\vd\not\in \Range(\vA)$ and then choose $\vc$ so that $-\vA\vc$ equals $\vB\vd$'s projection on $\Range(\vA)$; if $r=0$, the case is trivial.) Therefore, the three constraints in \eqref{vmin} prevent $\vA^\top(\vA\vc+\vB\vd)$ from being $\vzero$. Those constraints will arise during the study of  certain KKT systems.
\begin{proof}[Proof of Lemma \ref{lm:abcd}]
Let $\vB=[\vb_1~\vb_2\cdots\vb_{\ell}]$. If $r=0$, then $\rank([\vA~ \vB])=\rank(\vA)$ and thus $\vA\vc+\vB\vd \in\Range(\vA)$. Since dropping the
constraints $\vB^\top(\vA\vc+\vB\vd)\le\vzero$ and $\vd\ge\vzero$ from \eqref{vmin} does not increase its optimal objective, we have
$v\ge\lambda_{\min}^{++}(\vA\vD\vA^\top)\ge v_{\min}>0$ from Lemma \ref{minposeig}.

Now we consider the nontrivial case $r>0$, i.e., $\Range([\vA~ \vB])\supsetneq\Range(\vA)$. Ignoring the constraints
$\vB^\top(\vA\vc+\vB\vd)\le\vzero$, we can choose $\vc$ and $\vd\ge \vzero$ such that $\vA^\top(\vA\vc+\vB\vd)=\vzero$ and thus $v=0$. (See the discussions before the proof for  example.)
Therefore, the rest of the proof focuses on the role of these constraints.


The proof is based on induction. We will show  later that as long as $\Range([\vA~ \vB])\supsetneq\Range(\vA)$,  any minimizer $(\vc^*,\vd^*)$ of
\eqref{vmin} makes at least one of the constraints
$\vB^\top(\vA\vc+\vB\vd)\le\vzero$ \emph{active}. (Minimizer $(\vc^*,\vd^*)$ exists for the following reason. Let $\vs=(\vA\vc)$ and $\vt=(\vB\vd)$ be the optimization variables instead of $\vc$ and $\vd$; then constraints $\vd\ge \vzero$ translate to $\vt\in \{\vB\vd:\vd\ge\vzero\}$, which is a closed set. Since problem \eqref{vmin} has a compact, nonempty feasible set and a continuous objective function in terms of $\vs$ and $\vt$,  there exist minimizer $(\vs^*,\vt^*)$ and thus $(\vc^*,\vd^*)$.)  Without loss of generality, suppose this active constraint is $\vb_1^\top(\vA\vc^*+\vB\vd^*)=0$.
From this, we obtain
$$v =  (\vA\vc^*+\vB\vd^*)^\top(\vA\vD\vA^\top)(\vA\vc^*+\vB\vd^*)
= (\vA\vc^*+\vB\vd^*)^\top(\vA\vD\vA^\top+\vb_1\vb_1^\top)(\vA\vc^*+\vB\vd^*).$$
We move $\vb_1$ ``from $\vB$ to $\vA$'' by introducing new matrices  $\vA_1:=[\vA~ \vb_1]$, $\vB_1:=[\vb_2~\vb_3\cdots \vb_{\ell}]$. Introduce
$$
\vD_1:=\begin{bmatrix}
\vD & \vzero\\
\vzero & 1
\end{bmatrix}
$$
so  $(\vA\vD\vA^\top+\vb_1\vb_1^\top)= (\vA_1\vD_1\vA_1^\top)$. Furthermore, drop the constraints  $\vb_1^\top(\vA\vc^*+\vB\vd^*)\le 0$ and $d_1\ge0$,
and consider the resulting problem
\beq
\label{v1min}
v_{1}:=\min_{\vc_1,\vd_1}\left\{(\vA_1\vc_1+\vB_1\vd_1)^\top(\vA_1\vD_1\vA_1^\top)(\vA_1\vc_1+\vB_1\vd_1):
\begin{array}{l}
\|\vA_1\vc_1+\vB_1\vd_1\|_2=1,\\
\vB_1^\top(\vA_1\vc_1+\vB_1\vd_1)\le\vzero,\vd_1\ge\vzero
\end{array}\right\}.
\eeq
\eqref{v1min} would have the same objective value as \eqref{vmin} if the active constraint $\vb_1^\top(\vA\vc^*+\vB\vd^*)=\vzero$ was present. As \eqref{v1min} does not have this constraint, we conclude
\beq\label{vv1}
v\ge v_1.
\eeq
We  apply the same argument  to \eqref{v1min} and then inductively to the subsequent problems: let
\beq
\label{vimin}
v_{j}:=\min_{\vc_j,\vd_j}\left\{(\vA_j\vc_j+\vB_j\vd_j)^\top(\vA_j\vD_j\vA_j^\top)(\vA_j\vc_j+\vB_j\vd_j):
\begin{array}{l}
\|\vA_j\vc_j+\vB_j\vd_j\|_2=1,\\
\vB_j^\top(\vA_j\vc_j+\vB_j\vd_j)\le\vzero,\vd_j\ge\vzero
\end{array}\right\}.
\eeq
where each $\vA_j=[\vA_{j-1}~\vb_{j}]$, $\vB_j=[\vb_{j+1}\cdots\vb_{\ell}]$, and $\vD_j=\begin{bmatrix}\vD_{j-1}&\vzero\\ \vzero& 1\end{bmatrix}$, for
$j=2,3,\ldots, p$ until either $p=\ell$
(i.e., ``all $\vb_i$'s have been moved out of $\vB$'') or $\Range([\vA_p~ \vB_p])=\Range(\vA_p)$ (i.e., the condition for the induction breaks down
when $j$ reaches $p$). The former case occurs  if $r=\ell$, and in this case, we obtain empty $\vB_{\ell}$ and $\vd_{\ell}$ and thus
$$
v_{\ell}=\min_{\vc^{\ell}}\left\{(\vA_{\ell}\vc_{\ell})^\top(\vA_{\ell}\vD_{\ell}\vA_{\ell}^\top)(\vA_{\ell}\vc_{\ell}):\|
\vA_{\ell}\vc_{\ell}\|_2=1\right\}.$$
and from the induction,
$$v\ge v_1\ge \cdots\ge v_{\ell}.$$
From $\vA_{\ell}\vD_{\ell}\vA_{\ell}^\top=\vA\vD\vA^\top+\vB\vB^\top$ and Lemma \ref{minposeig}, it follows
$$v_\ell=\lambda_{\min}^{++}(\vA\vD\vA^\top+\vB\vB^\top).$$
The latter case (i.e., $j=p<\ell$) occurs if $0<r<\ell$. In this case, $p\ge r$ and the induction gives $v\ge v_1\ge \cdots\ge v_{p}$. From
$\Range([\vA_p~ \vB_p])=\Range(\vA_p)$ and the same argument at the beginning of this proof, we have
$v_p\ge \lambda_{\min}^{++}(\vA_p\vD_p\vA_p^\top)$. By the definition of $v_{\min}$, we have $\lambda_{\min}^{++}(\vA_p\vD_p\vA_p^\top)\ge v_{\min}$
and thus $v\ge v_{\min}>0$.

Hence, Lemma \ref{lm:abcd} is proved for all three cases: $r=0$, $0<r<\ell$, and $r=\ell$.


Finally, we establish the existence of an active constraint by showing that if $\Range([\vA~ \vB])\supsetneq\Range(\vA)$, every solution of the
problem  obtained by removing the constraints $\vB^\top(\vA\vc+\vB\vd)\le\vzero$ from \eqref{vmin}, namely, 
\beq
\label{vamin}
\min_{\vc,\vd}\left\{(\vA\vc+\vB\vd)^\top(\vA\vD\vA^\top)(\vA\vc+\vB\vd):\|\vA\vc+\vB\vd\|_2=1,\vd\ge\vzero\right\},
\eeq
will violate $\vB^\top(\vA\vc+\vB\vd)\le\vzero$. Since $\Range([\vA~ \vB])\supsetneq\Range(\vA)$, as been argued above, one can choose $\vc$ and $\vd\ge\vzero$ such that
$\vA\vc+\vB\vd\in\Null(\vA)$ and thus $(\vA\vc+\vB\vd)^\top(\vA\vD\vA^\top)(\vA\vc+\vB\vd)=0$. (See the discussions before the proof for  example.) Therefore, any solution $(\bar{\vc},\bar{\vd})$ of
\eqref{vamin}  must attain the 0 objective, so
\beq\label{aac}
\vA^\top(\vA\bar{\vc}+\vB\bar{\vd})=\vzero.
\eeq
Suppose
\beq\label{bac}
\vB^\top(\vA\bar{\vc}+\vB\bar{\vd})\le\vzero.
\eeq
 i.e., no  constraint is violated. Then, from $\bar{\vd}\ge \vzero$, \eqref{bac}, and \eqref{aac}, it follows
\begin{align}
\bar{\vd}^\top \vB^\top(\vA\bar{\vc}+\vB\bar{\vd}) \le &\,\vzero,\\
\bar{\vc}^\top \vA^\top(\vA\bar{\vc}+\vB\bar{\vd}) = &\, \vzero,
\end{align}
so
$$
\|\vA\bar{\vc}+\vB\bar{\vd}\|_2^2 =\, \bar{\vd}^\top \vB^\top(\vA\bar{\vc}+\vB\bar{\vd}) + \bar{\vc}^\top \vA^\top(\vA\bar{\vc}+\vB\bar{\vd}) \\
\le \vzero,
$$
which contradicts the constraint $\|\vA\vc+\vB\vd\|_2=1$. Therefore, $\vB^\top(\vA\bar{\vc}+\vB\bar{\vd})\le\vzero$ cannot hold, and at least one of
these constraints must be violated. Clearly, this argument applies to problem \eqref{vimin} for $j=1,2,\ldots,$ as long as $j\le\ell$ and
$\Range([\vA_j~ \vB_j])\supsetneq\Range(\vA_j)$.
\end{proof}
\begin{lemma}\label{lm:shrkiq}
Let $\shrink$ be the shrinkage operator 
$\shrink(s)= \hbox{sign}(s) \max\{|s|-1,0\}$. Then the following inequality
\beq\label{srkineq}
(s-s^*)\cdot(\shrink(s)-\shrink(s^*)) \ge \frac{|\shrink(s^*)|}{|\shrink(s^*)|+2} \cdot (s-s^*)^2\ge 0
\eeq
holds for  $\forall s,s^*\in\mathbb{R}$. The first equality holds when $s=-\sign(s^*)$.
\end{lemma}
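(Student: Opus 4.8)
The plan is to reduce the claim to a one-variable difference-quotient estimate. First, since $\shrink(-s)=-\shrink(s)$, both sides of \eqref{srkineq} are invariant under the substitution $(s,s^*)\mapsto(-s,-s^*)$, so I may assume $s^*\ge 0$. If $|s^*|\le 1$ then $\shrink(s^*)=0$, the right-hand side of \eqref{srkineq} vanishes, and the left-hand side is nonnegative because $\shrink$ is nondecreasing; so that case is immediate, and I may further assume $s^*>1$, in which case $t^*:=\shrink(s^*)=s^*-1>0$ and $\frac{|\shrink(s^*)|}{|\shrink(s^*)|+2}=\frac{s^*-1}{s^*+1}$. For $s=s^*$ both sides are zero, so it remains to prove, for all $s\ne s^*$, the scalar inequality $\frac{\shrink(s)-t^*}{s-s^*}\ge\frac{s^*-1}{s^*+1}$.

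To do this I would split the real line into the three intervals on which $\shrink$ is affine. On $\{s\ge 1\}$ we have $\shrink(s)=s-1$, so the quotient equals $1$. On $\{-1\le s\le 1\}$ we have $\shrink(s)=0$, so the quotient equals $\frac{s^*-1}{s^*-s}$; since $s^*-s$ ranges over $[s^*-1,\,s^*+1]$ as $s$ ranges over $[-1,1]$, this lies in $[\frac{s^*-1}{s^*+1},\,1]$, attaining its minimum at $s=-1$. On $\{s\le -1\}$ we have $\shrink(s)=s+1$, so the quotient equals $1+\frac{2}{s-s^*}$; because $s-s^*\le -1-s^*<0$, we get $\frac{2}{s-s^*}\ge\frac{2}{-(s^*+1)}$, hence the quotient is at least $1-\frac{2}{s^*+1}=\frac{s^*-1}{s^*+1}$, again with equality exactly at $s=-1$. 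Combining the three cases yields the first inequality of \eqref{srkineq}, with equality attained at $s=-1=-\sign(s^*)$, and the second inequality $\frac{|\shrink(s^*)|}{|\shrink(s^*)|+2}(s-s^*)^2\ge 0$ is obvious.

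The only step needing care is the third region $s\le -1$: there the difference quotient of $\shrink$ actually dips strictly below $1$, so the naive monotonicity/$1$-Lipschitz bounds do not suffice, and one must retain the exact expression $1+\frac{2}{s-s^*}$ and observe that it is minimized at the boundary $s=-1$. Reassuringly, that boundary point coincides with the equality case $s=-\sign(s^*)$ asserted in the statement, which serves as a built-in check on the computation.
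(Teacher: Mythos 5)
Your argument is correct: the symmetry reduction to $s^*\ge 0$, the trivial case $|s^*|\le 1$, and the three-interval analysis of the difference quotient (with the minimum $\frac{s^*-1}{s^*+1}$ attained at $s=-1=-\sign(s^*)$) establish \eqref{srkineq} exactly as stated. The paper itself only asserts that the first inequality follows from "elementary case-by-case analysis," so your proof is simply a complete write-up of the same intended approach.
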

\begin{proof}
The first inequality in \eqref{srkineq} can be proved by elementary case-by-case analysis. The second one is trivial.
\end{proof}

\subsection{Globally Linear Convergence}
In this subsection, we show that the  LBreg iteration \eqref{lbregy},
as a fixed-step size gradient descent iteration for \eqref{D2u},
 generates a globally linearly convergent sequences $\{\vy^k\}$ and $\{\vx^k\}$.

To do this, we need the following theorem from \cite{Yin-LBreg-09} with our modifications for better clarity. Below, we use the notion
$$\shrink(\vz):=\shrink_1(\vz)=\vz - \Proj_{[-1,1]^n}(\vz) = \sign(\vz)\max\{|\vz|-\vone,\vzero\},$$
where $\sign(\cdot)$, $|\cdot|$, and $\max\{\cdot,\cdot\}$ are component--wise operations.
\begin{theorem}
Let $f$ denote the objective function of problem \eqref{D2u}, and $\vx^*$ denote the solution of \eqref{P2}, which is unique since it has a strictly
convex objective. Define coordinate sets $\cS_+, \cS_-, \cS_0$ as the sets of positive, negative, and zero components of $\vx^*$, respectively.
Corresponding to $\cS_+, \cS_-, \cS_0$,  decompose
\beqs
\vA & = & [\vA_+,\vA_-,\vA_0],\\
\vx^* & = & [\vx^*_+;\vx^*_-;\vx^*_0] .
\eeqs
Then, the set of solutions of \eqref{D2u} is given by
\begin{subequations}\label{defY}
\beqn\label{defy1}
\cY^* & = & \{\vy'\in\mathbb{R}^m:\alpha\shrink(\vA^\top \vy') = \vx^*\}\\
\label{defyset}
& = & \{\vy'\in\mathbb{R}^m:\vA_+^\top \vy' -\vone= \alpha^{-1}\vx_+^*,~\vA_-^\top \vy' +\vone= \alpha^{-1}\vx_-^*,~-\vone \le \vA_0^\top \vy'\le \vone\},
\eeqn
\end{subequations}
which is a convex set.
Furthermore, $\grad f(\vy')=\vzero,~\forall \vy'\in\cY^*$. 
\end{theorem}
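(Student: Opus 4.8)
The plan is to obtain the chain of identities in \eqref{defY}, together with the final gradient statement, from two ingredients: that $f$ is convex and continuously differentiable, so its set of minimizers coincides with its set of stationary points, and the elementary subdifferential characterization of the shrinkage operator, which ties such a stationary point to the optimality conditions of the strictly convex problem \eqref{P2}.

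First I would record that $f(\vy)=-\vb^\top\vy+\tfrac{\alpha}{2}\|\shrink(\vA^\top\vy)\|_2^2=-\vb^\top\vy+\tfrac{\alpha}{2}\,\dist^2\!\big(\vA^\top\vy,[-1,1]^n\big)$ is convex and $C^1$ (the squared distance to a nonempty convex set is convex and differentiable with gradient equal to $\shrink(\cdot)$, precomposed with the linear map $\vA^\top$), so, as already noted, $\grad f(\vy)=-\vb+\alpha\vA\shrink(\vA^\top\vy)$ and $\vy'$ solves \eqref{D2u} if and only if $\vA\big(\alpha\shrink(\vA^\top\vy')\big)=\vb$. Moreover, when \eqref{P2} is feasible we have $\vb\in\Range(\vA)$, which makes $f$ constant along $\Null(\vA^\top)$ directions and coercive on $\Range(\vA)$ (there $\|\vA^\top\vy\|_2\ge c\|\vy\|_2$ for some $c>0$, so $\|\shrink(\vA^\top\vy)\|_2\ge c\|\vy\|_2-\sqrt n$ eventually dominates the linear term), hence a minimizer exists and $\cY^*\neq\emptyset$.

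Next comes the core identity \eqref{defy1}: $\vy'$ solves \eqref{D2u} iff $\alpha\shrink(\vA^\top\vy')=\vx^*$. The direction ``$\Leftarrow$'' is immediate, since then $\grad f(\vy')=-\vb+\vA\vx^*=\vzero$ because $\vA\vx^*=\vb$. For ``$\Rightarrow$'', put $\vx':=\alpha\shrink(\vA^\top\vy')$, which is feasible for \eqref{P2} by the stationarity above, and write $\vz:=\vA^\top\vy'$; then $\vz-\tfrac1\alpha\vx'=\vz-\shrink(\vz)=\Proj_{[-1,1]^n}(\vz)$, and a coordinatewise check shows this vector lies in $\partial\|\vx'\|_1$ (on coordinates with $|z_i|>1$ it equals $\sign(z_i)=\sign(x_i')$; elsewhere $x_i'=0$ and it lies in $[-1,1]$). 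Hence $\vA^\top\vy'\in\partial\|\vx'\|_1+\tfrac1\alpha\vx'=\partial\big(\|\cdot\|_1+\tfrac1{2\alpha}\|\cdot\|_2^2\big)(\vx')$, which is exactly the optimality condition for the affinely constrained strictly convex problem \eqref{P2} (necessary and sufficient, no constraint qualification needed for affine constraints); by uniqueness, $\vx'=\vx^*$. One may alternatively invoke the zero-duality-gap relation already used to derive \eqref{D2u} together with the recovery formula $\vx^*=\alpha\shrink(\vA^\top\vy^*)$.

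It remains to unpack \eqref{defyset} and conclude. Reading $\alpha\shrink(\vA^\top\vy')=\vx^*$ blockwise over $\cS_+,\cS_-,\cS_0$: on $\cS_+$, $x_i^*>0$ is equivalent to $(\vA^\top\vy')_i-1=\alpha^{-1}x_i^*$; on $\cS_-$, $x_i^*<0$ to $(\vA^\top\vy')_i+1=\alpha^{-1}x_i^*$; on $\cS_0$, $x_i^*=0$ to $-1\le(\vA^\top\vy')_i\le1$. Collecting these gives the three blockwise (in)equalities in \eqref{defyset}, and they are plainly reversible, so the two descriptions coincide. The set $\cY^*$ is then a polyhedron, hence convex (equivalently, it is the minimizer set of a convex function), and $\grad f(\vy')=\vzero$ for every $\vy'\in\cY^*$ is just the stationarity from the first step. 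The only genuine obstacle I anticipate is the ``$\Rightarrow$'' direction of \eqref{defy1} — making precise that a dual stationary point delivers the primal optimum — and the care in the coordinatewise subdifferential computation; the rest is bookkeeping.
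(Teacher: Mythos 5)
Your proposal is correct, and while the converse direction and the bookkeeping (the blockwise unpacking into \eqref{defyset}, convexity via polyhedrality, and the vanishing gradient on $\cY^*$) coincide with the paper, your forward implication in \eqref{defy1} is argued by a genuinely different mechanism. The paper's proof starts from the strong duality identity $-f(\vy')=\|\vx^*\|_1+\frac{1}{2\alpha}\|\vx^*\|_2^2$ satisfied by any dual solution $\vy'$ and extracts $\alpha\shrink(\vA^\top\vy')=\vx^*$ from this objective-value equality by a sign-by-sign case analysis; this leans on the zero-gap/minimax exchange already performed when \eqref{D2u} was derived. You instead stay entirely at the level of first-order conditions: stationarity of $\vy'$ gives feasibility of $\vx':=\alpha\shrink(\vA^\top\vy')$, the coordinatewise identity $\vA^\top\vy'-\tfrac{1}{\alpha}\vx'=\Proj_{[-1,1]^n}(\vA^\top\vy')\in\partial\|\vx'\|_1$ exhibits $\vy'$ as a Lagrange multiplier certifying optimality of $\vx'$ for \eqref{P2} (only KKT sufficiency is used, which needs no constraint qualification), and strict convexity forces $\vx'=\vx^*$. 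What your route buys is a self-contained argument that never invokes the duality-gap relation, plus, as a by-product, the coercivity argument showing $\cY^*\neq\emptyset$, a point the paper does not address in this theorem (nonemptiness is only noted later, under consistency of $\vA\vx=\vb$, in the restricted strong convexity lemma). What the paper's route buys is brevity, since strong duality is available for free from the derivation of \eqref{D2u}; you correctly flag this as the alternative. Both arguments are sound.
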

\begin{proof}Any $\vy'\in \cY^*$ must satisfy 
the strong duality condition, namely, the primal objective equal to the dual objective: $-f(\vy') = \|\vx^*\|_1+\frac{1}{2\alpha}\|\vx^*\|_2^2$. From this and $\vA\vx^*=\vb$, it is easy to derive $\alpha\shrink(\vA^\top \vy') = \vx^*$ using a case-by-case analysis on the sign of $x^*_i$.
Conversely, since $\nabla f(\vy) = -\vb + \vA (\alpha\shrink(\vA^\top \vy))$ and $\vA\vx^*=\vb$,  any $\vy'$ obeying $\alpha\shrink(\vA^\top \vy') = \vx^*$\
satisfies $\nabla f(\vy')=\vzero$. Then, $\vy'\in \cY^*$.

By the definition \eqref{defyset}, $\cY^*$ is a polyhedron, so it is convex.
\end{proof}

In general, the two sets
of equality equations in \eqref{defyset} do not define a unique $\vy^*$, so $\cY^*$ can include multiple solutions. 

A typical tool for obtaining  global convergence at a linear rate (or, global geometric convergence) is the strong convexity of the objective
function. A function  $g$ is strongly convex with a constant $c$ if it satisfies
\beq
\label{strcvx}
\langle\vy - \vy', \grad f(\vy) - \grad f(\vy')\rangle \ge c\|\vy- \vy'\|^2,\quad \forall \vy,\vy'\in\dom\,f.
\eeq
Strong convexity, however, does not hold for our $f(\vy)$ since $\grad f(\vy^*)=\vzero,~\forall \vy^*\in\cY^*$, while $\cY^*$ is not necessarily a singleton.
Nevertheless, we  establish  the ``restricted'' strong convexity \eqref{rescvx} below.
\begin{lemma}[Restricted strong convexity]
\label{lem6}
Consider problem 
\eqref{D2u} with a nonzero $m$-by-$n$ matrix $\vA$ and nonzero vector $\vb$. Assume that  $\vA\vx=\vb$ are consistent.  Let $\Proj_{\cY^*}(\vy)$ denote the Euclidean projection of $\vy$ to the solution set $\cY^*$.  The
objective function $f$ of \eqref{D2u} satisfies
\beq
\label{rescvx}
\langle \vy - \Proj_{\cY^*}(\vy),\grad f(\vy)\rangle \ge \nu \|\vy - \Proj_{\cY^*}(\vy)\|^2, \quad \forall\, \vy,
\eeq
where constant
\beq\label{constnu}
\nu=\lambda_{\vA}\cdot \left(\min_{i\in \supp(\vx^*)}\frac{\alpha|x^*_i|}{|x^*_i|+2\alpha}\right)>0,
\eeq
and $\lambda_{\vA}=\min\left\{\lambda^{++}_{\min}({\vC}{\vC}^\top):{\vC}~\text{is a nonzero submatrix of}~\vA~\text{of $m$ rows}\right\}$.

\end{lemma}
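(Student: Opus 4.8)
The plan is to use that the dual gradient $\grad f$ vanishes on the whole solution set $\cY^*$, reduce the quantity in \eqref{rescvx} to a coordinatewise sum involving the shrinkage operator, and then bound that sum by combining Lemma \ref{lm:shrkiq} (on the support of $\vx^*$) with Lemma \ref{lm:abcd} applied to the error direction, whose structure is pinned down by the optimality conditions of the Euclidean projection onto $\cY^*$.

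Write $\bar\vy:=\Proj_{\cY^*}(\vy)$ and $\vh:=\vy-\bar\vy$. Since $\bar\vy\in\cY^*$ we have $\grad f(\bar\vy)=\vzero$, so using $\grad f(\vy)=-\vb+\alpha\vA\shrink(\vA^\top\vy)$,
$$\langle \vh,\grad f(\vy)\rangle=\langle \vh,\grad f(\vy)-\grad f(\bar\vy)\rangle=\alpha\sum_{i=1}^n(\vA^\top\vh)_i\big(\shrink((\vA^\top\vy)_i)-\shrink((\vA^\top\bar\vy)_i)\big).$$
Next I would estimate each summand. For $i\in\cS:=\supp(\vx^*)$ one has $\shrink((\vA^\top\bar\vy)_i)=x_i^*/\alpha\ne0$, so Lemma \ref{lm:shrkiq} bounds the $i$-th summand below by $\frac{\alpha|x_i^*|}{|x_i^*|+2\alpha}(\vA^\top\vh)_i^2$. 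For $i\notin\cS$ the $i$-th summand equals $\alpha(\vA^\top\vh)_i\shrink((\vA^\top\vy)_i)$ and is always nonnegative; moreover, whenever the $i$-th box constraint is active at $\bar\vy$ (that is, $|(\vA^\top\bar\vy)_i|=1$) \emph{and} $\vy$ lies even farther outside the box than $\bar\vy$, that summand equals exactly $\alpha(\vA^\top\vh)_i^2$. Let $\cT$ be $\cS$ together with all such coordinates; dropping the remaining nonnegative summands and writing $\beta:=\min_{i\in\cS}\frac{|x_i^*|}{|x_i^*|+2\alpha}\in(0,1)$ yields
$$\langle \vh,\grad f(\vy)\rangle\ \ge\ \alpha\beta\sum_{i\in\cT}(\vA^\top\vh)_i^2\ =\ \alpha\beta\,\|\vA_\cT^\top\vh\|_2^2 .$$

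The crux is then $\|\vA_\cT^\top\vh\|_2^2\ge\lambda_{\vA}\|\vh\|_2^2$, where the projection enters. Writing $\cY^*$ as in \eqref{defyset}, i.e. $\{\vy':\vA_\cS^\top\vy'=\vg,\ \vA_{\cS_0}^\top\vy'\in[-1,1]^{|\cS_0|}\}$, the optimality conditions of $\bar\vy=\Proj_{\cY^*}(\vy)$ give $\vh=\vA_\cS\vw+\sum_{i\in\cJ}\mu_i\vA_i$, where $\cJ\subseteq\cS_0$ indexes the box constraints active at $\bar\vy$ and each $\mu_i$ has the sign of $(\vA^\top\bar\vy)_i$. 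Flipping the signs of those columns so the multipliers become nonnegative turns $\vh$ into a combination $\vA_\cT\vc+\vB\vd$ with $\vd\ge\vzero$, where $\vB$ collects the (flipped) active columns with index in $\cJ$ not absorbed into $\cT$, namely those $i$ for which $\vy$ does not lie strictly outside the box; for precisely these columns $\vB^\top\vh\le\vzero$, since the corresponding entry of $\vB^\top\vh$ equals $(\text{flipped }\vA_i)^\top\vy-1\le0$. Applying Lemma \ref{lm:abcd} with $\vD=\vI$ to $\vh/\|\vh\|_2$ then gives $\|\vA_\cT^\top\vh\|_2^2/\|\vh\|_2^2\ge v_{\min}$, and since every matrix $\vA_\cT\vA_\cT^\top+\vC\vC^\top$ appearing in \eqref{cminl} equals $\vA_{\cK}\vA_{\cK}^\top$ for an index set $\cK\supseteq\cS$, we obtain $v_{\min}\ge\lambda_{\vA}$. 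Combining the last two displays gives $\langle\vh,\grad f(\vy)\rangle\ge\alpha\beta\lambda_{\vA}\|\vh\|_2^2=\nu\|\vh\|_2^2$, and $\nu>0$ because $\cS\ne\emptyset$ (as $\vb\ne\vzero$), each ratio $\frac{\alpha|x_i^*|}{|x_i^*|+2\alpha}$ is positive, and $\lambda_{\vA}>0$ since $\vA_\cS$ is a nonzero submatrix.

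I expect the main obstacle to be the case analysis that decides which active zero-coordinates go into the ``$\vA$'' block and which into the ``$\vB$'' block of Lemma \ref{lm:abcd}: the split must simultaneously make the constraint $\vB^\top(\vA\vc+\vB\vd)\le\vzero$ hold and make each absorbed summand equal to $\alpha(\vA^\top\vh)_i^2$. Without it one cannot exclude $\vh\perp\Range(\vA_\cS)$, and indeed the simpler-looking inequality $\|\vA_\cS^\top\vh\|_2^2\ge\lambda_{\vA}\|\vh\|_2^2$ can fail, so the contribution of the active zero-coordinates is genuinely needed.
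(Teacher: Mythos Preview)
Your proposal is correct and follows essentially the same route as the paper. The paper also writes $\vy-\Proj_{\cY^*}(\vy)$ via the KKT conditions of the projection \eqref{prj}, splits the zero-support coordinates into those where $\vy$ lies outside the active face of the box (your $\cT\setminus\cS$, the paper's $\vA_1,\vA_3$) and those where it does not (your $\vB$, the paper's $\vA_2,\vA_4$), and then invokes Lemma~\ref{lm:abcd}; the only cosmetic difference is that you factor out $\beta=\min_{i\in\cS}\frac{|x_i^*|}{|x_i^*|+2\alpha}$ before applying Lemma~\ref{lm:abcd} with $\vD=\vI$, whereas the paper carries the diagonal weight $\bar\vD=\diag(\hat\vD,\vI)$ through the lemma and extracts $\min_i(\bar\vD)_{ii}$ afterward.
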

Note that if we let $\vy' =\Proj_{\cY^*}(\vy)$ and from $\grad f(\vy') = \vzero$, \eqref{rescvx} becomes $\langle \vy - \vy',\grad f(\vy)-\grad f(\vy')\rangle \ge \nu \|\vy - \vy'\|^2$. Hence, \eqref{rescvx} is the restriction of \eqref{strcvx} to the specially chosen $\vy'$. Yet, this will be enough for global linear convergence.
\begin{proof}[Proof of Lemma \ref{lem6}] Since  $\vA\vx=\vb$ are consistent, problem \eqref{P2} has a unique solution $\vx^*$, so $\cY^*$ is well-defined and nonempty. If $\vy\in\cY^*$, then $\vy=\Proj_{\cY^*}(\vy)$ and thus  \eqref{rescvx} holds trivially. To show \eqref{rescvx} for $\vy\not\in\cY^*$, we shall consider
\beq\label{ratio}
\min \left\{\frac{\langle \vy - \vy',\grad f(\vy)-\grad f(\vy')\rangle}{\langle \vy -\vy',\vy -\vy'\rangle}:\vy-\vy'\not=0,~\vy'=\Proj_{\cY^*}(\vy).\right\}
\eeq
%
%
The proof is divided to three parts. The first part works out $\vy'=\Proj_{\cY^*}(\vy)$ and express $\vy - \vy'$ in terms of submatrices of $\vA$. The second part establishes $\langle \vy - \vy',\grad f(\vy)-\grad f(\vy')\ge (\vy-\vy')^\top \vM(\vy-\vy')$, where $\vM\succeq \vzero$ also depends on  submatrices of $\vA$. The last part invokes Lemma \ref{lm:abcd} to obtain a strictly positive lower bound for \eqref{ratio}. Most of the effort is to  decompose $\vA$ into the submatrices and understand how they contribute to $ \vy - \vy' $ and $\grad f(\vy)-\grad f(\vy')$.

Part 1. By definition, $\vy'=\Proj_{\cY^*}(\vy)$ is the solution of
\begin{equation}\label{prj}
\min_{\bar{\vy}}\left\{\frac{1}{2}\|{\bar{\vy}} - \vy\|_2^2:~\bar{\vy}\in \cY^*\right\}. \end{equation}
Hence, $\vy'$ satisfies the KKT conditions of \eqref{prj}. Using the expression of $\cY^*$ in \eqref{defyset}, these conditions are
\begin{subequations}\label{kkt}
\begin{eqnarray}
\label{pd}
\vy-\vy' & = & \vA_+\lambda_+ + \vA_-\lambda_-+\vA_0(\vu-\boldsymbol{\ell}),\\
\vy' & \in & \cY^*,\\
\label{lu}
\boldsymbol{\ell},\vu & \ge & \vzero,\\
\label{comp}
(\vone-\vA_0^\top \vy')^\top  \vu + (\vone+\vA_0^\top \vy')^\top  \boldsymbol{\ell} & = & \vzero,
\end{eqnarray}
\end{subequations}
where $\lambda_+$ and $\lambda_-$ are the Lagrange multipliers for the two equality  conditions in \eqref{defyset} and $\boldsymbol{\ell}$ and $\vu$ are those for the first and second inequality conditions in \eqref{defyset}, respectively. Equation \eqref{comp} is the so-called complementarity condition, which together with \eqref{lu}, gives the following three cases for $\forall i\in \cS_0$:
\beq\label{compc}
\ell_i=0,~ u_i=0;\quad \text{if}~u_i>0, ~ \text{then} ~ \vA_i^\top \vy' = 1,~\ell_i=0;\quad\text{if}~ \ell_i>0, ~ \text{then} ~ \vA_i^\top \vy' = -1,~u_i=0.
\eeq
Part 2. Let
$
\vA_\pm = [\vA_+, \vA_-].
$
We  first argue that $\vA_{\pm}$ is a nonzero submatrix of $\vA$. Since $\vA$ and $\vb$ are both nonzero, the solution $\vx^*$ to problem \eqref{P2} is nonzero. If some column $\va_i$ of $\vA$ is a zero vector, then $x_i$ is free from the constraints $\vA\vx=\vb$ and thus $x^*_i=0$. Hence, all the columns of $\vA_{\pm}$ are nonzero vectors.

From $\grad f(\vy)=-\vb+\alpha \vA\shrink(\vA^\top  \vy)$ and $\vzero=\nabla f(\vy')=-\vb+\alpha \vA\shrink(\vA^\top  \vy')$, we obtain
\begin{subequations}\label{parts}
\begin{align}
\langle\vy - \vy', \nabla f(\vy)\rangle  = \hspace{0pt}\langle\vy - \vy', \nabla f(\vy)-\nabla f(\vy')\rangle
= &\, \alpha \langle \vA^\top  \vy - \vA^\top  \vy', \shrink(\vA^\top  \vy) - \shrink(\vA^\top  \vy')\rangle\\
\label{pmpart}
 =&\,  \alpha \langle \vA_{\pm}^\top  \vy - \vA_{\pm}^\top  \vy', \shrink(\vA_{\pm}^\top  \vy) - \shrink(\vA_{\pm}^\top  \vy')\rangle \\
\label{zopart}
& +\alpha \langle \vA_{0}^\top  \vy - \vA_{0}^\top  \vy', \shrink(\vA_{0}^\top  \vy) - \shrink(\vA_{0}^\top  \vy')\rangle.
\end{align}
\end{subequations}
By definition, every component of $ \shrink(\vA_{\pm}^\top  \vy')=\alpha^{-1}\vx^*_\pm$ is nonzero, and all components of $
\shrink(\vA_{0}^\top  \vy')=\alpha^{-1}\vx_0^*$ are zero. For this reason, we deal with \eqref{pmpart} and \eqref{zopart} separately.

Applying inequality \eqref{srkineq} to  \eqref{pmpart}, we can ``remove'' the ``$\shrink$'' operators for it as
\begin{eqnarray}
\nonumber
 \alpha\langle \vA_{\pm}^\top  \vy - \vA_{\pm}^\top  \vy^*, \shrink(\vA_{\pm}^\top  \vy) - \shrink(\vA_{\pm}^\top
\vy')\rangle & = & \alpha\sum_{i\in \cS_\pm} (\va^\top_i \vy - \va^\top_i \vy')\cdot (\shrink(\va^\top_i \vy) - \shrink(\va^\top_i \vy'))\\\
\nonumber
& \ge & \alpha\sum_{i\in \cS_\pm} \frac{\alpha^{-1}|x^*_i|}{\alpha^{-1}|x^*_i|+2}\cdot( \va^\top_i \vy - \va^\top_i \vy')^2\\
\label{pmpart1}
\quad& = & \alpha(\vy-\vy')^\top  \vA_{\pm}\hat{\vD} \vA_{\pm}^\top  (\vy-\vy'),
\end{eqnarray}
where $\hat{\vD} := \diag\left(\frac{\alpha^{-1}|x^*_i|}{\alpha^{-1}|x^*_i|+2}\right)_{i\in \supp(\vx^*)}\succ 0$.
Equation \eqref{pmpart1} along is not enough to bound \eqref{ratio} from zero since $\vA_{\pm}$ can have more columns than rows and $  \vA_{\pm}\hat{\vD} \vA_{\pm}^\top $ can be rank deficient. So, we need to include \eqref{zopart} in the analysis, and we begin with a decomposition of the involved matrix $\vA_0$:
$$\vA_0=[\vA_1 ~\vA_2 ~\vA_3~\vA_4 ~\vA_5]$$
according to the criteria
\begin{subequations}\label{see}
\begin{align}
\label{se0}\vy-\vy' & = \vA_\pm\lambda_\pm +\vA_1\vu_1+\vA_2\vu_2-\vA_3\boldsymbol{\ell}_3 - \vA_4\boldsymbol{\ell}_4,~\text{where}~\vu_1,\vu_2,\boldsymbol{\ell}_3,\boldsymbol{\ell}_4>\vzero,\\
\label{se1}\vA_1^\top  \vy &> +\vone, \\
\label{se2}\vA_2^\top  \vy &\le +\vone, \\
\label{se3}\vA_3^\top  \vy &< -\vone, \\
\label{se4}\vA_4^\top  \vy &\ge -\vone.
\end{align}
\end{subequations}
Equations \eqref{see} mean the followings: (i) 
the projected point $\vy'$ is actively confined by  the boundaries of $\cY^*$ involving $[\vA_1~ \vA_2~ \vA_3~ \vA_4]$ (c.f., the last term of \eqref{defyset}); (ii) $\vA_5$ does not contribute to $\vy-\vy'$; (iii) by applying  \eqref{compc} and \eqref{se1}--\eqref{se4}, we get  $\vA_1^\top  \vy'=\vone$, $\vA_3^\top  \vy'=-\vone$ and can thus simplify the components of \eqref{zopart} involving $\vA_1$ and $\vA_3$ as follows:
\begin{subequations}
\begin{align}
\shrink(\vA_1^\top  \vy)-\shrink(\vA_1^\top  \vy')=&\shrink(\vA_1^\top  \vy)=\vA_1^\top  \vy-\vone=\vA_1^\top \vy-\vA_1^\top \vy',\\
\shrink(\vA_3^\top  \vy)-\shrink(\vA_3^\top  \vy')=&\shrink(\vA_3^\top  \vy)=\vA_3^\top \vy+\vone=\vA_3^\top  \vy-\vA_3^\top  \vy'.
\end{align}
\end{subequations}
Now we ``drop'' the components of \eqref{zopart} involving $\vA_2$, $\vA_4$, and $\vA_5$ as follows: from \eqref{srkineq}, it follows that $\langle \vA_{i}^\top  \vy - \vA_{i}^\top  \vy', \shrink(\vA_{i}^\top  \vy) - \shrink(\vA_{i}^\top  \vy')\rangle\ge 0$ for $i=2,4,5$.
Hence,
\begin{align}
\nonumber
 \alpha\langle \vA_{0}^\top  \vy - \vA_{0}^\top  \vy', \shrink(\vA_{0}^\top  \vy) - \shrink(\vA_{0}^\top  \vy')\rangle = &\alpha \sum_{i=1}^5 \langle \vA_{i}^\top  \vy - \vA_{i}^\top  \vy', \shrink(\vA_{i}^\top  \vy) - \shrink(\vA_{i}^\top  \vy')\rangle\\
\nonumber
\ge &\alpha \sum_{i=1,3} \langle \vA_{i}^\top  \vy - \vA_{i}^\top  \vy', \shrink(\vA_{i}^\top  \vy) - \shrink(\vA_{i}^\top  \vy')\rangle\\
\label{opart1}
= &\alpha (\vy - \vy')^\top (\vA_1\vA_1^\top  + \vA_3\vA_3^\top )(\vy-\vy').
\end{align}
Now we combine \eqref{pmpart1} and \eqref{opart1}.
 Define $\bar{\vA}=[\vA_{\pm}~\vA_1~ (-\vA_3)]$,
$\bar{\vc}=[\lambda_{\pm};\vu_1;\boldsymbol{\ell}_3]$, $\bar{\vB}=[\vA_2~ (-\vA_4)]$,  $\bar{\vd}=[\vu_2;\boldsymbol{\ell}_4]$, and $\bar{\vD}=\begin{bmatrix}\hat{\vD} & \vzero\\
\vzero & I\end{bmatrix}$. By \eqref{se0}, we have $\vy-\vy'=\bar{\vA}\bar{\vc} + \bar{\vB}\bar{\vd}$ and $\bar{\vd}\ge\vzero$.
Plugging \eqref{pmpart1} and \eqref{opart1} into \eqref{parts}, we get
\beq
\label{lvy}
\langle\vy - \vy', \nabla f(\vy)\rangle\ge \alpha(\bar{\vA}\bar{\vc} + \bar{\vB}\bar{\vd})^\top (\bar{\vA}\bar{\vD}\bar{\vA}^\top)(\bar{\vA}\bar{\vc} + \bar{\vB}\bar{\vd})
\eeq
However, \eqref{lvy} is still not enough to bound \eqref{ratio} from zero since $\bar{\vA}\bar{\vD}\bar{\vA}^\top$ may still be rank deficient.

Part 3. To bound \eqref{ratio}, we now include the ``dropped'' parts of $\vA$  and apply Lemma \ref{lm:abcd}.
From \eqref{compc}, we have  $\vA_2^\top  \vy'=\vone$ and $\vA_4^\top  \vy'=-\vone$, and further from \eqref{se2} and \eqref{se4},
\begin{align*}
\vone\ge\vA_2^\top {\vy}=&\vA_2^\top \vy'+\vA_2^\top (\vy-\vy')=+\vone + \vA_2^\top (\bar{\vA}\bar{\vc} + \bar{\vB}\bar{\vd}),\\
-\vone\le\vA_4^\top {\vy}=&\vA_4^\top \vy'+\vA_4^\top (\vy-\vy')=-\vone + \vA_4^\top (\bar{\vA}\bar{\vc} + \bar{\vB}\bar{\vd}),
\end{align*}
or written compactly,
\beq\label{bcond}
\bar{\vB}^\top (\bar{\vA}\bar{\vc} + \bar{\vB}\bar{\vd})\le \vzero.
\eeq
Now for the objective of \eqref{ratio}, we apply \eqref{lvy} and then  Lemma~\ref{lm:abcd} to obtain
\begin{align*}
\frac{\langle\vy - \vy', \nabla f(\vy)\rangle}{\langle\vy - \vy',\vy - \vy'\rangle}\ge\, &\alpha\cdot\min\left\{\frac{(\bar{\vA}\bar{\vc} + \bar{\vB}\bar{\vd})^\top (\bar{\vA}\bar{\vD}\bar{\vA}^\top)(\bar{\vA}\bar{\vc} + \bar{\vB}\bar{\vd})}{(\bar{\vA}\bar{\vc} + \bar{\vB}\bar{\vd})^\top(\bar{\vA}\bar{\vc} + \bar{\vB}\bar{\vd})}:\bar{\vA}\bar{\vc} + \bar{\vB}\bar{\vd}\not=\vzero,\bar{\vd}\ge\vzero, \bar{\vB}^\top (\bar{\vA}\bar{\vc} + \bar{\vB}\bar{\vd})\le \vzero
\right\}\\
\ge\,&\alpha\cdot\min\{\lambda_{\min}^{++}(\bar{\vA}\bar{\vD}\bar{\vA}^\top+\bar{\vC}\bar{\vC}^\top):\bar{\vC}~\text{is an $m$-by-$p$ submatrix of}~\bar{\vB},~ p\ge 0 \}
\end{align*}
Note that under our convention,  an $m$-by-$0$ matrix  vanishes. Since matrix $\bar{\vA}$ contains the nonzero matrix $\vA_{\pm}$ as a submatrix, $\bar{\vA}\bar{\vD}\bar{\vA}^\top+\bar{\vC}\bar{\vC}^\top$ is nonzero. Therefore, we have
\begin{align*}
\frac{\langle\vy - \vy', \nabla f(\vy)\rangle}{\langle\vy - \vy',\vy - \vy'\rangle}\ge\,&\alpha\cdot (\min_{i}(\bar{\vD})_{ii})\cdot\underbrace{\min\{\lambda^{++}_{\min}({\vC}{\vC}^\top):{\vC}~\text{is a nonzero submatrix of}~\vA~\text{of $m$ rows}\}}_{\lambda_\vA}\\
\ge\,&\left(\min_{i\in \supp(\vx^*)}\frac{\alpha|x^*_i|}{|x^*_i|+2\alpha}\right)\cdot\lambda_\vA\\
=\,&\nu.
\end{align*}

\end{proof}
\begin{remark}
If the entries of $\vA$ are in general positions, i.e., any $m$ distinct columns of $\vA$ are linearly independent, or in other words, $\vA$ has
completely full rank  \cite{LW10},
then all $m$-by-$m$ submatrices of $\vA$ have full rank and thus $\lambda_{\vA}=\{\lambda_{\min}(\vC\vC^\top):\vC~\text{is an $m$-by-$m$ submatrix of}~\vA\}.$ This is often the case when those entries are samples from i.i.d. subgaussian distributions, or the columns of $\vA$ are data vector independent of one another. In general, the submatrix $\vC^*$ achieving the minimum $\lambda_{\vA}$ has the maximum number of independent columns, i.e., it contains $r$ columns from $\vA$ where $r=\rank(\vA)$.
\end{remark}


With the restricted strong convexity property, we next show the main convergence result with the help of the  standard notion of  point--to--set distance
$$
\dist(\vz,\cZ):=\min_{\vz'}\{\|\vz-\vz'\|_2:\vz'\in\cZ\},
$$
where $\vz$ is a vector and $\cZ$ is a set of vectors. By convention,  the convergence $\dist(\vz^k,\cZ)\to 0$ is called globally Q-linear if there exists $\mu\in(0,1)$ such that $\dist(\vz^{k+1},\cZ)/\dist(\vz^k,\cZ)\le \mu$ for all $k$, and the convergence $s^k\to 0$ is called globally R-linear if there exists a globally Q-linear converging sequence $t^k\to 0$ such that $|s^k|\le |t^k|$. Unlike Q-linear convergence, R-linear convergence does not require $|s^k|$ to be monotonic in $k$.
\begin{theorem}
\label{mainconv}
Consider problem 
\eqref{D2u} with a nonzero $m$-by-$n$ matrix $\vA$ and nonzero vector $\vb$. Assume that  $\vA\vx=\vb$ are consistent.
Let $f$ be the objective function of problem \eqref{D2u} and $f^*$ be the optimal objective value.
The  linearized Bregman iteration \eqref{lbregy} starting from any $\vy^{(0)}\in\mathbb{R}^m$ with step size $$0<h< 2\nu/(\alpha^2 \|\vA\|^4),$$
 where the strong convexity constant $\nu$ is given in \eqref{constnu}, generates a globally Q-linearly converging
sequence $\{\vy^{(k)}, k\ge 1\}$
\beq\label{dyk}
\dist(\vy^{(k)},\cY^*)\le \, \left(1- 2h\nu + h^2 \alpha^2 \|\vA\|_2^4\right)^{k/2}  \dist(\vy^{(0)},\cY^*),
\eeq
where $\cY^*$ is given in \eqref{defY}. The objective value sequence converges R-linearly as
\beq
\label{dyk2}
f(\vy^{(k)}) - f^* \le \,\frac{L}{2}\left(1-2h\nu + h^2\alpha^2 \|\vA\|_2^4 \right)^{k}\dist^2(\vy^{(0)},\cY^*).
\eeq
Furthermore, $\{\vx^{(k)}\}$ is a globally R-linear converging sequence since
\beq\label{dyk3}
\|\vx^{(k+1)}-\vx^{*}\|_2\le \alpha \|\vA\|_2\cdot\dist(\vy^{(k)},\cY^*).
\eeq
\end{theorem}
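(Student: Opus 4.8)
The plan is to run the standard analysis of fixed-step-size gradient descent, but with the restricted strong convexity of Lemma~\ref{lem6} in place of ordinary strong convexity, and with progress measured by the point-to-set distance $\dist(\cdot,\cY^*)$ instead of the distance to a single minimizer. First I would record two one-line facts about $f$. (i) Its gradient $\grad f(\vy)=-\vb+\alpha\vA\shrink(\vA^\top\vy)$ is Lipschitz with constant $L:=\alpha\|\vA\|_2^2$: since $\shrink=\shrink_1$ is nonexpansive, $\|\grad f(\vy)-\grad f(\vy')\|_2\le\alpha\|\vA\|_2\,\|\vA^\top\vy-\vA^\top\vy'\|_2\le\alpha\|\vA\|_2^2\|\vy-\vy'\|_2$. (ii) $\grad f$ vanishes on $\cY^*$ (the theorem preceding Lemma~\ref{lem6}); hence for any $\vy$, writing $\vy':=\Proj_{\cY^*}(\vy)$, we have both $\|\grad f(\vy)\|_2=\|\grad f(\vy)-\grad f(\vy')\|_2\le L\,\dist(\vy,\cY^*)$ and, by Lemma~\ref{lem6}, $\langle\vy-\vy',\grad f(\vy)\rangle\ge\nu\,\dist^2(\vy,\cY^*)$.

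For the distance bound \eqref{dyk}, fix $k$, set $\vy':=\Proj_{\cY^*}(\vy^{(k)})$, and bound $\dist(\vy^{(k+1)},\cY^*)\le\|\vy^{(k+1)}-\vy'\|_2$ (legitimate because $\vy'\in\cY^*$ and the distance is the minimum over the set). Expanding $\vy^{(k+1)}-\vy'=(\vy^{(k)}-\vy')-h\grad f(\vy^{(k)})$ and inserting the two estimates from (i)--(ii) gives
$$\dist^2(\vy^{(k+1)},\cY^*)\le\bigl(1-2h\nu+h^2\alpha^2\|\vA\|_2^4\bigr)\,\dist^2(\vy^{(k)},\cY^*).$$
A short computation shows that the step-size condition $0<h<2\nu/(\alpha^2\|\vA\|_2^4)$ makes the factor lie in $(0,1)$: it is strictly less than $1$ exactly on that range, and it is nonnegative for every $h>0$ because $\nu<\alpha\|\vA\|_2^2$ (which follows from $\lambda_{\vA}\le\|\vA\|_2^2$ together with the shrinkage weights in \eqref{constnu} being smaller than $\alpha$), so the quadratic in $h$ has negative discriminant. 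Iterating from $k$ down to $0$ then yields \eqref{dyk} with $\mu=(1-2h\nu+h^2\alpha^2\|\vA\|_2^4)^{1/2}$, i.e.\ $\{\vy^{(k)}\}$ converges Q-linearly to $\cY^*$.

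For \eqref{dyk2} I would invoke the standard smoothness inequality for a convex function with $L$-Lipschitz gradient evaluated at a minimizer: with $\vy':=\Proj_{\cY^*}(\vy^{(k)})$, so that $\grad f(\vy')=\vzero$ and $f(\vy')=f^*$, one has $f(\vy^{(k)})-f^*\le\tfrac{L}{2}\|\vy^{(k)}-\vy'\|_2^2=\tfrac{L}{2}\dist^2(\vy^{(k)},\cY^*)$; combining with \eqref{dyk} gives \eqref{dyk2}, and since the right-hand side is a Q-linearly vanishing bound on $f(\vy^{(k)})-f^*$, the objective converges R-linearly. For \eqref{dyk3}, recall $\vx^{(k+1)}=\alpha\shrink(\vA^\top\vy^{(k)})$ while, by \eqref{defy1}, $\vx^*=\alpha\shrink(\vA^\top\vy')$ for every $\vy'\in\cY^*$; nonexpansiveness of $\shrink$ gives $\|\vx^{(k+1)}-\vx^*\|_2\le\alpha\|\vA\|_2\|\vy^{(k)}-\vy'\|_2$, and taking $\vy'=\Proj_{\cY^*}(\vy^{(k)})$ yields \eqref{dyk3}; together with \eqref{dyk} this exhibits $\{\vx^{(k)}\}$ as R-linearly convergent. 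There is essentially no obstacle left in this theorem --- all the real work was absorbed into Lemma~\ref{lem6} --- and the only points requiring a moment's care are that the reference point in the contraction step must be $\Proj_{\cY^*}(\vy^{(k)})$ rather than $\Proj_{\cY^*}(\vy^{(k+1)})$ so that Lemma~\ref{lem6} is applicable, and the verification that the contraction factor is genuinely in $(0,1)$ on the stated step-size interval.
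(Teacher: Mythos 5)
Your proposal is correct and follows essentially the same route as the paper: project $\vy^{(k)}$ onto $\cY^*$, expand $\|\vy^{(k)}-\vy'-h\grad f(\vy^{(k)})\|_2^2$, control the cross term with the restricted strong convexity of Lemma~\ref{lem6} and the quadratic term with the nonexpansiveness of $\shrink$ (equivalently the Lipschitz bound $L=\alpha\|\vA\|_2^2$), then obtain \eqref{dyk2} from the standard descent-lemma inequality at $\vy'=\Proj_{\cY^*}(\vy^{(k)})$ and \eqref{dyk3} from nonexpansiveness of $\shrink$ applied to $\vx^{(k+1)}=\alpha\shrink(\vA^\top\vy^{(k)})$ and $\vx^*=\alpha\shrink(\vA^\top\vy')$. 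Your additional verification that the contraction factor lies in $(0,1)$ on the stated step-size interval is a small, correct supplement the paper leaves implicit.
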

\begin{proof}
For each $k$, let $\vy'^{(k)}:=\Proj_{\cY^*}(\vy^{(k)})$. Hence, $\dist(\vy^{(k)},\cY^*)=\|\vy^{(k)}-\vy'^{(k)}\|_2$.
Using this projection property, we have
\begin{subequations}
\begin{align}
\|\vy^{(k+1)}-\vy'^{(k+1)}\|^2_2 \le&\, \|\vy^{(k+1)}-\vy'^{(k)}\|^2_2\\
=&\, \|\vy^{(k)}-\vy'^{(k)}-h\nabla f(\vy^{(k)})\|_2^2\\
=&\, \|\vy^{(k)}-\vy'^{(k)}\|_2^2-2h\langle\nabla f(\vy^{(k)}),\vy^{(k)}-\vy'^{(k)}\rangle+h^2\|\nabla f(\vy^{(k)})-\nabla f(\vy'^{(k)})\|_2^2\\
\label{hhvl}
\le &\left(1-2h\nu\right) \|\vy^{(k)}-\vy'^{(k)}\|_2^2 + h^2\|\alpha\vA \shrink(\vA^\top \vy^{(k)})-\alpha\vA \shrink(\vA^\top \vy'^{(k)})\|_2^2\\
\label{hhvl1}
\le & \left(1-2h\nu\right) \|\vy^{(k)}-\vy'^{(k)}\|_2^2 + h^2\alpha^2\|\vA\|_2^2 \|\vA^\top \vy^{(k)} - \vA^\top \vy'^{(k)}\|_2^2\\
\le &  \left(1-2h\nu + h^2 \alpha^2 \|\vA\|_2^4 \right) \|\vy^{(k)}-\vy'^{(k)}\|_2^2
\end{align}
\end{subequations}
where we have used the nonexpansive property of the shrinkage operator (cf. \cite{Hale-Yin-Zhang-07-theory}). Hence, we obtain \eqref{dyk}.

To get \eqref{dyk2}, we recall for any convex $f$ with $L$-Lipschitz $\nabla f$,
$f(\vy) - f(\vx) \le \langle\nabla f(\vx), \vy-\vx\rangle + \frac{L}{2}\|\vx - \vy\|_2^2$ (see Theorem 2.1.5 in \cite{Nesterov-book-04}).
Let $\vy=\vy^{(k)}$ and $\vx =\vy'^{(k)}$. We have $f(\vy'^{(k)})=f^*$, $\nabla f(\vy'^{(k)})=\vzero$ and from \eqref{dyk},
$$
f(\vy^{(k)}) - f^*\le \frac{L}{2}\|\vy^{(k)}-\vy'^{(k)}\|_2^2\le \frac{L}{2}\left(1-2h\nu + h^2\alpha^2 \|\vA\|^4\right)^{k}
\dist^2(\vy^{(0)},\cY^*), $$
which shows \eqref{dyk2}. When $0<h< 2\nu/(\alpha^2 \|\vA\|^4)$, we have  $\left(1-2h\nu + h^2\alpha^2 \|\vA\|^4\right)<1$.  Due to \eqref{lbregy1}, \eqref{defy1}, and the non-expansiveness of $\shrink(\cdot)$, we get
\begin{subequations}
\begin{align}
\|\vx^{(k+1)}-\vx^{*}\|_2&\le \|\alpha\shrink(\vA^\top \vy^{(k)})-\alpha\shrink(\vA^\top \vy'^{(k)})\|_2\\
&\le \alpha\|\vA^\top \vy^{(k)} - \vA^\top \vy'^{(k)}\|_2\\
&\le \alpha \|\vA\|_2\cdot\|\vy^{(k)} -\vy'^{(k)}\|_2,
\end{align}
\end{subequations}
which gives \eqref{dyk3}.
\end{proof}
\begin{remark}
If we set $h=\nu/(\alpha^2 \|\vA\|_2^4)$, then the geometric decay factor $\left(1- 2h\nu + h^2 \alpha^2 \|\vA\|_2^4\right)=\left(1-\nu^2/(\alpha^2 \|\vA\|_2^4)\right)$. Hence, we find the convergence rate affected by $\nu$, $\alpha$, and $\|\vA\|_2$. From the definition of $\nu$ in \eqref{constnu}, we get
\beq\label{ratt}
\text{decay factor}=1-\frac{\nu^2}{\alpha^2 \|\vA\|_2^4}=1-\omega^2\cdot\kappa^2,
\eeq
where
\begin{align*}
\omega&:=\min_{i\in\supp(\vx^*)}\frac{|x^*_i|/\alpha}{2+|x^*_i|/\alpha}\\
\kappa&:=\min\left\{\frac{\lambda^{++}_{\min}({\vC}{\vC}^\top)}{\lambda_{\max}(\vA \vA^\top)}:{\vC}~\text{is a nonzero submatrix of}~\vA~\text{of $m$ rows}\right\}.
\end{align*}
The constant $\kappa$ is similar to the ``condition number'' of $\vA$.
Let $r^*=(\max_{i\in\supp(\vx^*)}x_i^*)/(\min_{i\in\supp(\vx^*)}x_i^*)$ denote the dynamic range of $\vx^*$. If we set $\alpha = C\|\vx^*\|_\infty$, then
$$\omega = (1+2C r^*)^{-1}.$$
For recovering a sparse vector, recall that both the simulations in Section \ref{motv} and the analysis in Section \ref{rcg} show that if $\vx^*$ has faster decaying nonzero entries, $C$ can be set smaller. So, when $r^*$ is large, one can choose a small $C$ to counteract.

The proved rate of convergence is quite conservative. The dependence on the solution dynamic range is due to \eqref{pmpart1}, which considers the worst case of \eqref{srkineq}, yet when this worst case happens, the inequality between \eqref{hhvl} and \eqref{hhvl1} can be improved due to  properties of the shrinkage operator. In addition, our analysis on the global rate does not exploit the possibility that the algorithm may reach the optimal active set in a finite number of iterations and then exhibit faster linear convergence, typically at a rate depending only on the active set of columns of $\vA$ and independent of the solution's dynamic range.

The step size $h\le 2\nu/(\alpha^2 \|\vA\|_2^4)$ is also very conservative. As one will see in the simulation results in the next section, classical techniques for gradient descents such as line search can significantly accelerate the convergence.
\end{remark}

\subsection{Extensions to two faster variants of LBreg}\label{sc:ext}
We extend the linear convergence results to two variants of  LBreg (iteration \eqref{lbreg}) that can run significantly faster than LBreg: \emph{BB-line-search} \cite{Yin-LBreg-09} and \emph{kicking} \cite{Osher-Mao-Dong-Yin-10}. The former dynamically sets the step size $h$ in \eqref{lbreg} by the Barzilai-Borwein method with nonmontone line search using techniques from \cite{nonmono04}. The latter is a simple add-on to iteration \eqref{lbreg} to consolidate a sequence of consecutive iterations in which $\vx^k$ is unchanged. If  $\vx^k=\cdots=\vx^{k+j}$,  \cite{Osher-Mao-Dong-Yin-10} shows that $\vy^k,\ldots,\vy^{k+j}$ stay on the same line, so it is easy to skip all the intermediate iterations and go directly to the end of the line.

Obviously, since \emph{kicking} only skips certain LBreg iterations, it remains have global linear convergence. On the other hand, given strong convexity, Theorems 3.1 and 3.2 of \cite{nonmono04} shows that  \emph{BB-line-search} also enjoys global linear convergence (though the results are weakened to the R-linear convergence of $\vA\vx^{(k)}-\vb$ in our case); it is not difficult to verify that the proof of the theorem remains to hold given only restricted strong convexity\footnote{In \cite{nonmono04}, Theorem 3.1 relies on its inequality (3.4), which in turn require inequalities (3.3) and (3.2)  to hold between a current point and its projection to the solution set. The latter is precise our \eqref{strcvx}. Theorem 3.2 needs (3.12) and in turn (3.11). (3.11) is obtained from (3.1) restricted to between a current point and its projection to the solution set, which can be proved by assuming (3.2) or our \eqref{strcvx}.}.

\subsection{Numerical Demonstration}\label{sc:num}
We present the results of simple tests to demonstrate the convergence of three  algorithms: the original LBreg iteration \eqref{lbreg}, \emph{kicking} \cite{Osher-Mao-Dong-Yin-10}, and \emph{BB-line-search} \cite{nonmono04,Yin-LBreg-09}. Their numerical efficiency and properties  have been  previously studied in papers \cite{Osher-Mao-Dong-Yin-10,Yang-Moller-Osher-11,Huang-Ma-Goldfarb-11} and are not the focus of this paper, so we merely use two examples to illustrate  global linear convergence. We generated two compressive sensing tests where both tests had signals $\vx^0$ with 512 entries, out of which $50$ were nonzero and sampled from the standard Gaussian distribution (for Figure \ref{fig:convg}) or the Bernoulli distribution (for Figure \ref{fig:convg_flat}). Both tests had the same sensing matrix $\vA$ with 256 rows and entries sampled from the standard Gaussian distribution. We set $\alpha = 10\|\vx^0\|_\infty$ in each test and stopped all the three algorithms upon $\|\grad f(\vy)\|_2 < 10^{-6}$. The iterative errors $\|\vx^k-\vx^*\|_2$ and  $\|\vy^k-\vy^*\|_2$ of the three algorithms are depicted in Figures \ref{fig:convg} and \ref{fig:convg_flat}.
\begin{figure}[ht]
\centering
\subfigure[$\ell_2$ error of primal variable $\vx^{(k)}$]
{
\includegraphics[width=0.45\textwidth]{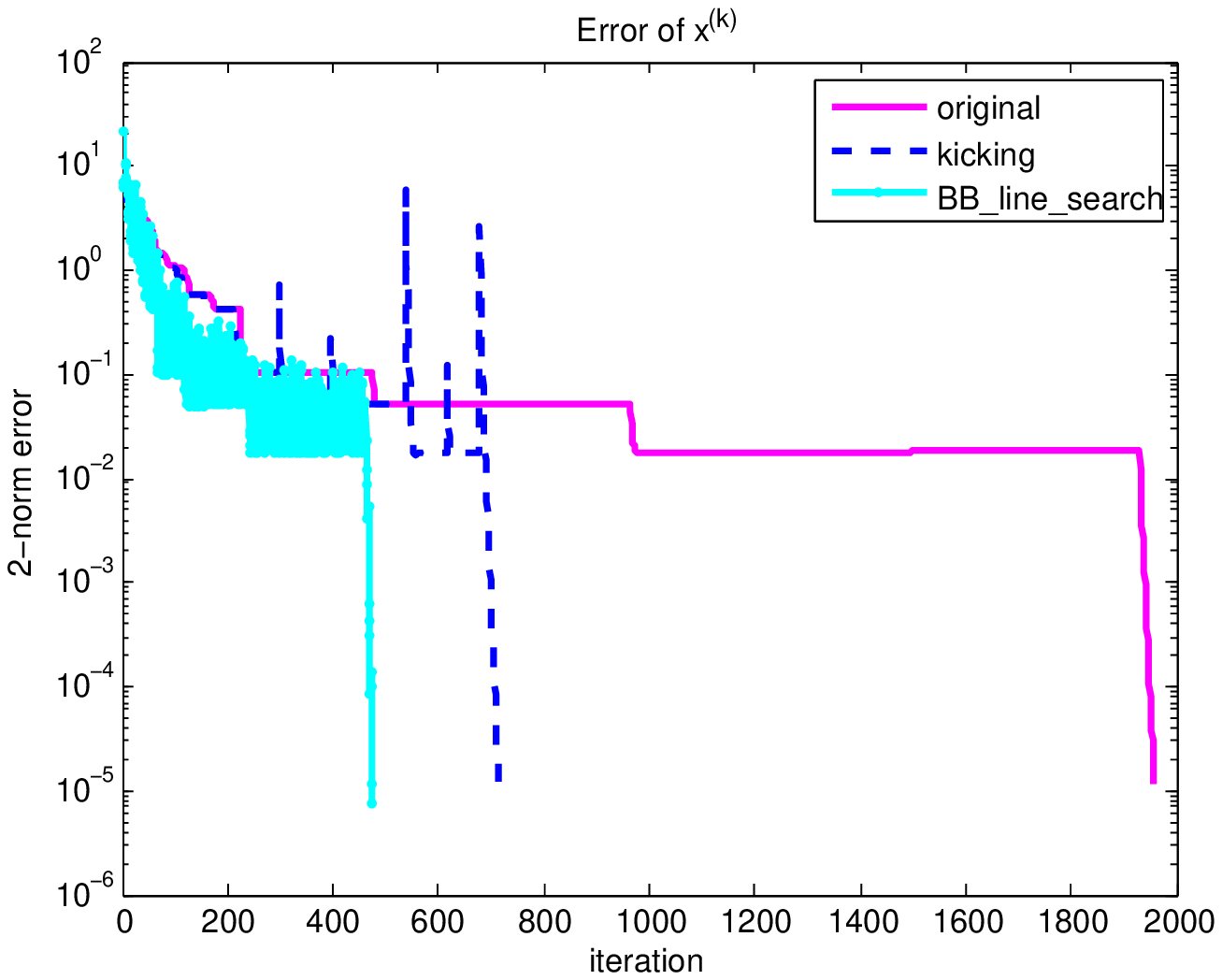}
\label{fig:fig1}
}
\subfigure[$\ell_2$ error of dual variable $\vy^{(k)}$]
{
\includegraphics[width=0.45\textwidth]{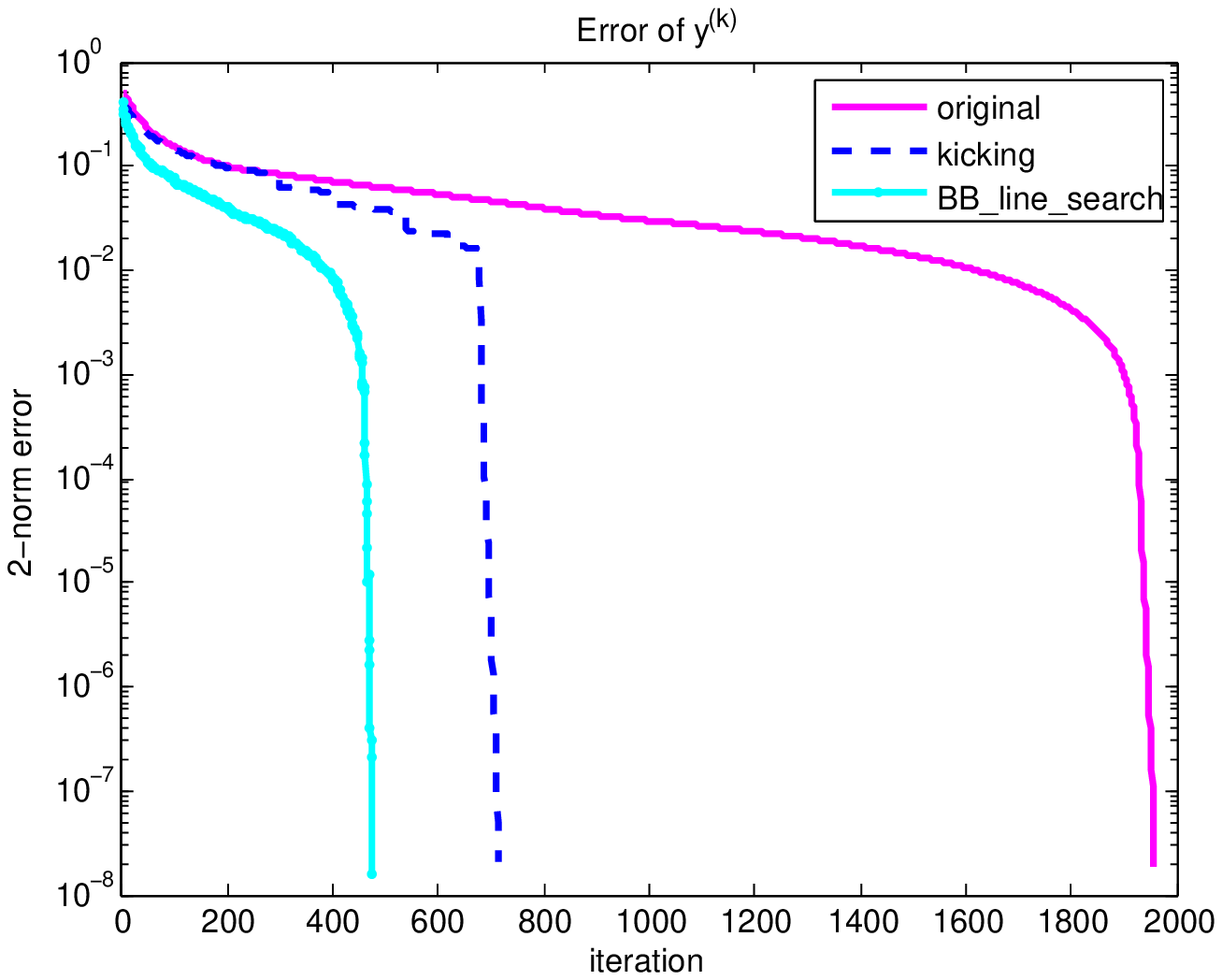}
\label{fig:fig2}
}
\caption{\label{fig:convg}Convergence of primal and dual variables of  three algorithms on \textbf{Gaussian sparse} $\vx^0$}
\end{figure}
\begin{figure}[ht]
\centering
\subfigure[$\ell_2$ error of primal variable $\vx^{(k)}$]
{
\includegraphics[width=0.45\textwidth]{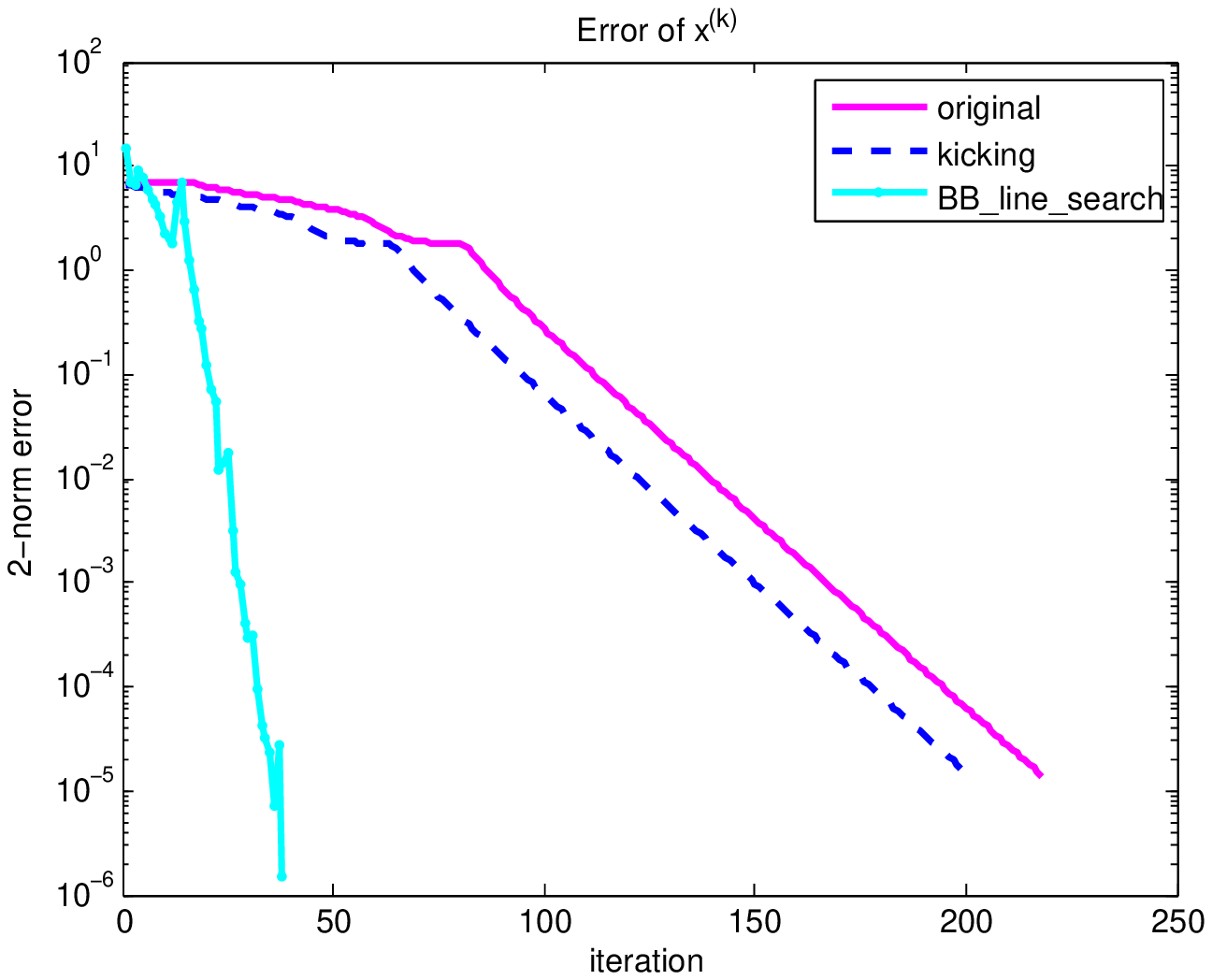}
\label{fig:fig1_flat}
}
\subfigure[$\ell_2$ error of dual variable $\vy^{(k)}$]
{
\includegraphics[width=0.45\textwidth]{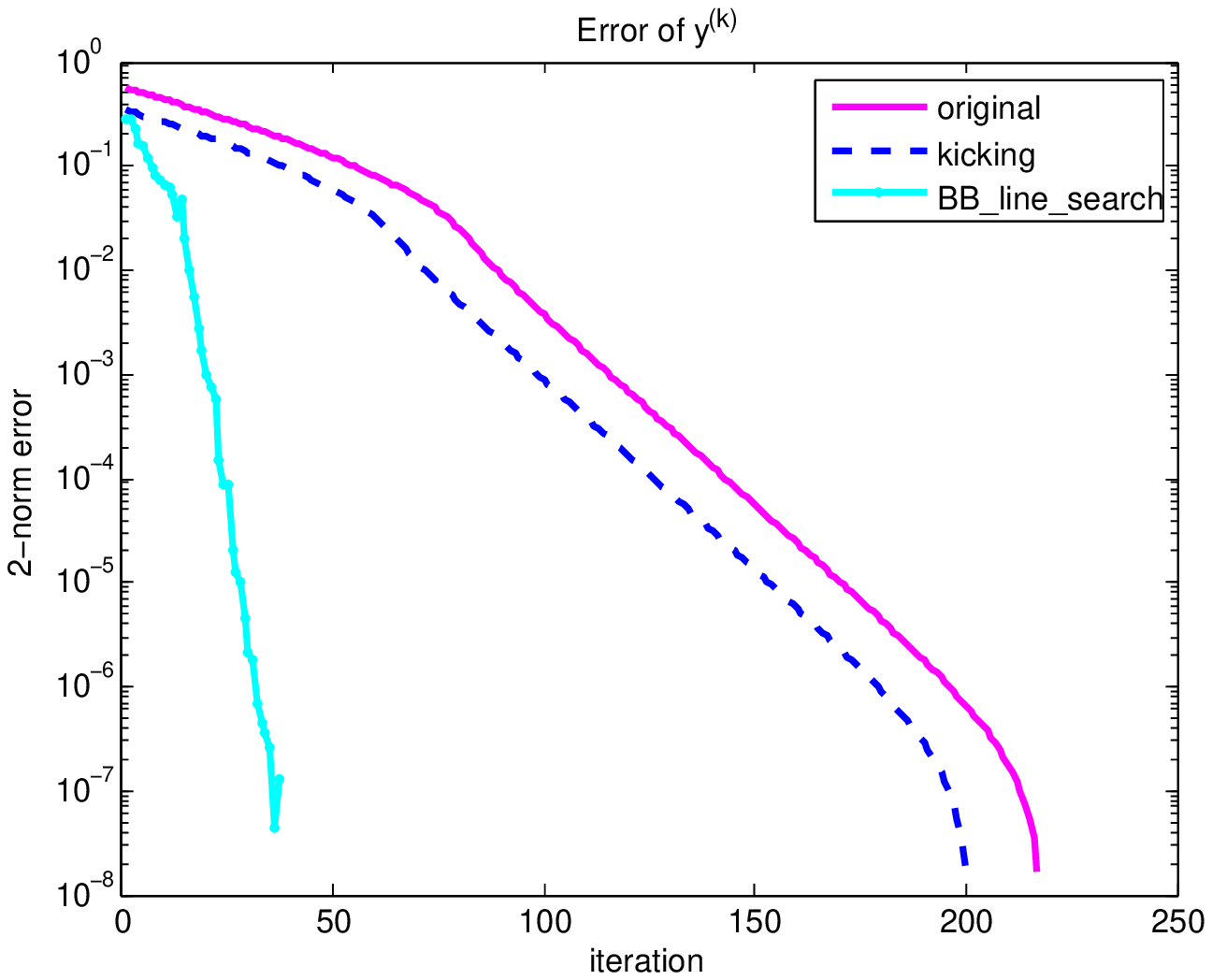}
\label{fig:fig2_flat}
}
\caption{\label{fig:convg_flat}Convergence of primal and dual variables of three  algorithms on \textbf{Bernoulli sparse} $\vx^0$}
\end{figure}
In both tests, the original version was the slowest. Besides the obvious speed differences, we observe that $\{\vx^{(k)}\}$ were not monotonic, there were sets of consecutive iterations in which $\vx^{(k)}$ did not change or fluctuated. 
Indeed, it is impossible to improve its R-linear convergence to Q-linear convergence. In addition, unlike the other two algorithms, \emph{BB-line-search} has non-monotonic $\{\vy^{(k)}\}$, which converges R-linearly instead of Q-linearly.

The convergence appears to have different stages. The early-middle stage has much slower convergence than the final stage.

Comparing the results of two tests, the convergence was faster on the Bernoulli sparse signal than the Gaussian sparse signal. Since the two tests used the same sensing matrix $\vA$ and the same sparsity, the main reason should be the dynamic range of the signals. A smaller dynamic range leads to faster convergence, which matches our theoretical result on the convergence rate. 

\section*{Appendix}
\begin{proof}[Proof of Theorem \ref{mainM0}]
We establish the theorem by showing that \eqref{nspMC} holds for any $\vH\in\Null(\cA)\setminus\{\vzero\}$.

Based on the SVD $ \vH=\sum_{i=1}^{m}\sigma_{i}(\vH) \vu_{i}\vv_{i}^\top$, where $\sigma_{i}(\vH)$ is the $i$-th largest singular value of $\vH$, we decompose $\vH= \vH_0+ \vH_1 + \vH_2+ \cdots$ where $\vH_0 = \sum_{i=1}^r \sigma_i(\vH) \vu_i\vv_i$, $\vH_1 = \sum_{i=r+1}^{2r} \sigma_i(\vH) \vu_i\vv_i$, $\vH_2 = \sum_{i=2r+1}^{3r} \sigma_i(\vH) \vu_i\vv_i$, \ldots. Following these definitions, condition \eqref{nspMC} can be equivalently written as
\beq\label{vh0s}
\|\vH_{0}\|_{*}<\|\sum_{i\ge 1}\vH_i\|_{*}.
\eeq
From $\vH\not=\vzero$ and the definition of $\vH_0$, we know that $\vH_0\not=\vzero$ and thus $\cA(\vH_0)\not=\vzero$ due to the RIP of $\cA$. From $\cA(\vH)=\vzero$ and $\cA(\vH_0)\not=\vzero$, it follows that $\cA(\sum_{i\ge 1}\vH_i)\not=\vzero$ and thus $\sum_{i\ge 1}\vH_i\not=\vzero$. Therefore, $\sum_{i\ge 1}\|\vH_i\|_*>0$, and we can define $t := \|\vH_{1}\|_{*}/(\sum_{i\ge 1}\|\vH_{i}\|_{*})>0$ and $\rho:=\|\vH_0\|_*/(\sum_{i\ge 1}\|\vH_{i}\|_{*})>0$.

Next, we present two inequalities without proofs (the interested reader can verify them following the proofs of Lemmas 2.3 and 2.4 in \cite{Mo-Li-11}):
\begin{subequations}\label{mtxlms}
\begin{align}
\frac{1-\delta_{2r}}{r}(\rho^{2}+t^{2})
\left(\sum_{i\ge 1}\|\vH_{i}\|_{*}\right)^{2}& \le \|\cA(\vH_0+\vH_1)\|_2^2,\\
\frac{t(1-t)
+\delta_{2r}(1-3t/4)^{2}}{r} \left(\sum_{i\ge 1}\|\vH_{i}\|_{*}\right)^{2} &\ge \|\cA(\sum_{i\ge 2}\vH_i)\|_2^2.
\end{align}
\end{subequations}
Since $\cA(\vH_0+\vH_1)+\cA\left(\sum_{i\ge 2}\vH_i\right)=\cA(\vH)=\vzero$, the two right-hand sides of \eqref{mtxlms} equal each other. Hence,
$$
\frac{1-\delta_{2r}}{r}(\rho^{2}+t^{2})
\left(\sum_{i\ge 1}\|\vH_{i}\|_{*}\right)^{2}\leq \frac{t(1-t)
+\delta_{2r}(1-3t/4)^{2}}{r} \left(\sum_{i\ge 1}\|\vH_{i}\|_{*}\right)^{2}
$$
and thus,
 $$
 \rho^2 \leq \frac{t(1-t)+\delta_{2r}(1-3t/4)^{2}-(1-\delta_{2r})t^{2}}{1-\delta_{2r}}.
 $$
 or
after a simple calculation of the maximum of $t\in [0, 1]$,
\beq\label{t2r}
\rho \leq\sqrt{\frac{4(1+5\delta_{2r}-4(\delta_{2r})^{2})}{(1-\delta_{2r})(32-25\delta_{2r})}}=:\theta_{2r}.
\eeq
If $\delta_{2r}<(77-\sqrt{1337})/82\approx0.4931$,
then $\theta_{2r}<1$ and thus $\rho<1$. By  definition, we get \eqref{vh0s} and \eqref{nspMC}. \end{proof}

\section*{Acknowledgements}
We thank Hui Zhang, who was visiting Rice from National U of Defense Technology, for his suggestions on the RIPless analysis, as well as Profs. Shiqian Ma and Qing Ling for valuable discussions. We also thank the anonymous referees for numerous suggestions and corrections that have helped improve this manuscript.

\bibliographystyle{plain}
\bibliography{Nonconvex,inverse,my_publications,newref}
\end{document}